\newtheorem*{theorem*}{Theorem}
\newtheorem*{definition*}{Definition}
\newtheorem{lemma}{Lemma}
\newtheorem{theorem}{Theorem}
\newtheorem{corollary}{Corollary}
\begin{document}

\title{Violation of cosmic censorship in the gravitational collapse of a dust cloud in five dimension}

%%%% To generate auto affiliation numbers please use \author{}\affil{} command

\author{Ryosuke Mizuno}
\affil{Department of Physics, Kyoto University, Kyoto 606-8502, Japan\email{mizuno@tap.scphys.kyoto-u.ac.jp}}

\author{Seiju Ohashi}
\affil{Theory Center, Institute of Particles and Nuclear Studies, KEK, Tsukuba, Ibaraki, 305-0801, Japan}

\author{Tetsuya Shiromizu}
\affil{Department of Mathematics, Nagoya University, Nagoya 464-8602, Japan}
\affil{Kobayashi-Maskawa Institute, Nagoya University,Nagoya 464-8602, Japan}

%%% To include the collaborator name... Please use the command "\collaborator"
%%% For example: \collaborator{ATLAS Collaboration}

\begin{abstract}
We analyze the null geodesic equations in five dimensional spherically symmetric spacetime 
with collapsing inhomogeneous dust cloud. By using a new method, we prove the existence and 
non-existence of solutions to null geodesic equation emanating from central singularity for 
smooth initial distribution of dust. Moreover, we also show that the null geodesics can 
extend to null infinity in a certain case, which imply the violation of cosmic censorship conjecture. 
\end{abstract}

\subjectindex{}

\maketitle

\section{Introduction}

Black hole spacetime is one of the most fascinating objects in gravitational theory. In particular, 
it is quite interesting that they contain singularities inside of their event horizons. At singularities 
the spacetime curvature often diverges. The physics breaks down at singularities because 
gravitational theory is described in terms of curvature. 

Observers outside of black holes can not see such breakdown because no information can come out from black holes at least 
in classical level. However, black hole singularities may cause serious effects on the observer inside of black hole and/or 
the final fate of black hole evapolation due to the Hawking radiation.  Here we 
define the singularity which is visible for observer inside of black hole, as ``locally naked singularity". 
If a singularity is not wrapped by horizon, it may cause serious effect on physics.  We define that as 
``globally naked singularity".

In order to certify the predictability of physics, it is very important to ask whether they can be naked 
or not. It is usually supposed that no naked singularity will appear in physical situation. In this context, 
Penrose proposed so called cosmic censorship conjecture(CCC) \cite{Penrose:1969pc}. 
More precisely, there are two types of CCC, i.e. strong CCC and weak CCC. Strong CCC states that there are no locally 
naked singularities form during gravitational collapse. And weak CCC does that there are no globally singularities 
form during collapse.

The cosmic censorship conjecture is assumed to prove several key theorems on black hole spacetime, 
such as the event horizon topology theorem, uniqueness theorem and so on (see \cite{Hawking:1973uf} for the details). 
The conjecture have been investigated by many authors in variety of setups, but it is still controversial.
In \cite{Oppenheimer:1939ue}, Oppenheimer and Snyder considered a spherical collapse of homogeneous 
pressureless fluid, which is called as dust. And they found that no naked singularity form. 
But in subsequent works, for example  \cite{Yodzis:1973, Eardley:1978tr, Newman:1985gt, Joshi:1993zg, Singh:1994tb, Jhingan:1996jb,Joshi:2008zz}, 
many works reported the violation of strong CCC in several situations. In \cite{Christodoulou:1984mz}, 
Christodoulou examined the global nakedness of singularity in four dimensional spacetime. He considered 
the spherical collapse of inhomogeneous dust and proved that singularity can be globally 
naked in some situations, i.e. the violation of weak CCC occurs in general.  

Motivated by fundamental theory such as string theory, those works have been extended to higher dimensional 
spacetime \cite{Ghosh:2001fb, Goswami:2004gy, Goswami:2006ph}. It was shown that the strong CCC always holds 
for the spherical collapse of dust cloud with smooth initial data in higher dimensional spacetimes than five, 
i.e. any observer can not see singularity formed. On the other hand, in general, it was also shown that 
the strong CCC does not hold in five dimensional spacetimes.

As far as we know, there is no work on a global analysis of collapsing spacetime in {\it five} dimensions. It is 
natural to ask whether the weak CCC is actually violated in five dimension. And, if violated, it is important to 
clarify in what conditions naked singularities form. In this paper we focus on the analysis of five 
dimensional inhomogeneous spherically symmetric dust collapse and give a new method to examine the nakedness 
of singularity. To do so we have to know whether a causal geodesic emanating from the singularity exists or not. 
So, we give a method to investigate the existence of solution of null geodesic equation. Furthermore, we 
examine the spacetime structure in detail, and give the necessary and sufficient condition for naked singularity 
formation. We also examine the condition that the singularity is not only locally naked but also globally naked. 
Finally, we will see the dependence of the globally nakedness of the singularity on initial density distribution. 

The organization of this paper is as follows. In Sec. 2, we present the settings and the fundamental nature of 
the inhomogeneous spherically symmetric dust collapse in five dimensional spacetime. Then, we derive the differential 
equation for null geodesic in terms of dimensionless quantities. In Sec. 3, we examine the existence condition of 
the solution to the differential equation for null geodesic. By virtue of the Schauder fixed-point theorem \cite{Schauder:1930}, 
we show that a solution of the differential equation for null geodesic exist near the singularity. It means that the singularity 
can be at least locally naked. In Sec. 4, we analyze the spacetime structure around the singularity. We identify 
the earliest null geodesic emanating from the central singularity, and give the necessary and sufficient condition 
for the singularity to be naked. In Sec. 5, we consider the globally nakedness of the singularity. We will show that 
a class of initial density distribution leads to globally naked singularity.

\section{Five dimensional Lema\^itre-Tolman-Bondi(LTB) spacetime and the equation of null line}

We consider spherically symmetric dust collapse in five dimension. It is known as LTB solution in higher dimension. 
In the comoving coordinate of dust, the metric of this spacetime is written as

\begin{equation}
\label{metric}
ds^2 = -dt^2 + e^{2\omega(t,r)}dr^2 + R(t,r)^2 d\Omega^2,
\end{equation}
where $R$ is the area radius of the $r = const.$ 3-sphere and we set $r$ by $R(0,r)=r$. Then, in this coordinate, 
the Einstein equations become
\begin{equation}
\rho(t,r) = \frac{3}{2} \frac{M^{\prime}(r)}{R^3R^{\prime}},
\label{EQ1}
\end{equation}
\begin{equation}
\dot{R}^2 = \frac{M(r)}{R^2} + E(r),
\label{EQ2}
\end{equation}
\begin{equation}
e^{2\omega} = \frac{R^{\prime2}}{1+E(r)},
\label{EQ3}
\end{equation}
where $\rho(t,r)$ is the energy density of dust, which is proportional to the Ricci scalar, $M(r)$ is an arbitrary 
$C^1$ function of $r$ and $E(r)$ is arbitrary functions of $r$. dot `` $\dot{}$ " and prime `` ${}^{\prime}$ " mean 
the partial derivatives with respect to $t$ and $r$,  respectively.  $M(r)$ corresponds to the total mass in the region 
surrounded by the $r = const.$ surface. Actually $M(r)$ is proportional to the Misner-Sharp quasi-local mass 
\cite{Misner:1964je}. And $E(r)$ is the initial energy of dust shell. 

As we mentioned in the introduction, the point where spacetime curvature diverges is singularity. From equation (\ref{EQ1}), 
we can find the two types of singularities, i.e. $R = 0$ and $R^{\prime} = 0$. Singularity at $R=0$ and $R^{\prime}=0$ 
is called shell focusing singularity and shell crossing singularity, respectively. If we introduce the pressure to the fluid, 
the shell crossing singularity may disappear. Therefore they are regarded as unphysical singularity. Throughout this paper, 
we only consider the shell focusing singularity, that is, we assume $R^{\prime} > 0$.  In addition, we also assume that the 
initial velocity of the shells at $t = 0$ is zero, that is,
\begin{equation}\label{DofE}
\dot{R}^2(0,r) = \frac{M(r)}{r^2} + E(r) = 0.
\end{equation}
Now we can solve the equation (\ref{EQ2}) as
\begin{equation}\label{T_R}
t(R,r) = - \int_r^R dR \frac{1}{\sqrt{M(r)\left( \frac{1}{R^2} - \frac{1}{r^2}\right)}} = \frac{r^2}{\sqrt{M(r)}}\sqrt{ 1 - \frac{R^2}{r^2}}.
\end{equation}
From this equation and (\ref{EQ1}), we see that the singularity appears at
\begin{equation}\label{TS}
t_S(r) = \frac{r^2}{\sqrt{M(r)}}.
\end{equation}
Using equations (\ref{metric}), (\ref{EQ2}) and (\ref{EQ3}), we can compute the expansion of outgoing null geodesics on 
$r = const.$ surface $\Theta(r)$ as
\begin{equation}\label{EXP}
\begin{split}
\Theta(r) \propto \frac{dR(t,r_{null}(t))}{dt} &= \dot{R} + R^{\prime} \frac{dr_{null}}{dt}  \\
&= \dot{R} + R^{\prime} e^{-\omega} \\
&= \sqrt{ 1 - \frac{M(r)}{r^2} } -  \sqrt{  \frac{M(r)}{R^2} - \frac{M(r)}{r^2} }.
\end{split}
\end{equation}
In the second equality, we used the equation $-dt^2 + e^{2\omega}dr^2=0$ which holds for null geodesics along the outer 
radial direction. This equation implies that the apparent horizon ($\Theta =0$) is located at $R = \sqrt{M(r)}$. 
Since we are interested in a naked singularity formation during the dust collapse, we assume that  there is no apparent 
horizon initially. Then, equation (\ref{EXP}) with the setting of $R(0,r) = r$ tells us that
\begin{equation}
\label{Mcondition}
1 - \frac{M(r)}{r^2} > 0
\end{equation}
is required for all $r$ in this coordinate patch. From equations (\ref{metric}), (\ref{EQ3}) and (\ref{DofE}), this is equivalent to the condition that $r$ 
is spacelike coordinate. In addition, from equation (\ref{T_R}), we also see that the apparent horizon appears at
\begin{equation}\label{TAH}
t_{AH}(r) =  \frac{r^2}{\sqrt{M(r)}}\sqrt{ 1 - \frac{M(r)}{r^2}}.
\end{equation}

From equations (\ref{TS}) and (\ref{TAH}), if $r \neq 0$, it is easy to see that
\begin{equation}
t_{AH} (r) < t_S(r)
\end{equation}
holds. This means that the apparent horizon appears before the singularity does and the singularity is surrounded 
by the apparent horizon. So only the singularity at $r = 0$ could be naked and another singularities  are covered by 
the event horizon \cite{Hawking:1973uf}. In order to examine the possibility of the occurrence of naked singularity, we assume that the 
singularity at $r = 0$ appears within non-zero finite time $t_S(0)$, that is,
\begin{equation}
t_S(0) = \lim_{r \to 0} t_S(r)  =\lim_{r \to 0} \frac{r^2}{\sqrt{M(r)}}.
\end{equation}
Therefore, using a $C^0$ function $A(r)$ with $A(0) \neq 0$, we can write $M(r)$ as
\begin{equation}
\label{DA}
M(r) = A(r)r^4.
\end{equation}
In the above, $A(r)$ corresponds to the mean density for the region surrounded by the $r = const.$ shell. From equations (\ref{EQ1}) 
and (\ref{DA}), we have
\begin{equation}\label{Arho}
A(r) = \frac{2}{3}\int_0^1dv v^3\rho(0,vr).
\end{equation}

Moreover, we assume that the initial density $\rho(0,r)$ is a $C^\infty$ function of compact support and monotonically decreases 
with respect to $r$, and $\rho^{\prime}(0,r)$ is continuous at $r = 0$ on the initial time slice $(\rho^{\prime}(0,r) \leq 0$ , $\rho^{\prime}(0,0) = 0$). 
Then, we see that $A(r)$ satisfies\footnote{In this paper, $f(x)=O(x^a)$ means that the absolute value of function $f(x)$ is bounded by constant 
times $x^a$ as $x \to 0$, i.e there exist a positive $x_0$ and $c$ in $\mathbb{R}^+$ such that $|f(x)| \leq cx^a$ holds for all $x \in [0,x_0]$.}
\begin{equation}
\label{Acondition1}
A(r) = \alpha - \frac{\beta}{2}r^2 + O(r^3)
\end{equation}
near $r = 0$ and
\begin{equation}
\label{Acondition2}
A^{\prime}(r) \leq 0,
\end{equation}
where $\alpha,\beta\in \mathbb{R}$ are parameters satisfying $\alpha > 0, \beta\geq 0$.
Using $A(r)$, $t_{S}(r)$ and $t_{AH}(r)$ are rewritten as
\begin{equation}\label{tS}
t_S (r)=\sqrt{ \frac{1}{A(r)}},
\end{equation}
\begin{equation}\label{tAH}
t_{AH} (r)= \sqrt{\frac{1}{A(r)}( 1 - A(r)r^2)}.
\end{equation}

If $\beta = 0$ in equation (\ref{Acondition1}), we can immediately show that central singularity must be covered by apparent horizon. 
\begin{theorem}\label{T1}
If $\beta = 0$, the strong cosmic censorship holds. 
\end{theorem}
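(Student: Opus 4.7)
The plan is to prove by contradiction that no outgoing radial null geodesic can emanate from the central singularity when $\beta=0$, deriving an inconsistency between the narrow physical region near $r=0$ and the growth rate of $t(r)$ forced by the null geodesic ODE.

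First I record the expansions that frame the argument. With $\beta=0$ in \eqref{Acondition1} we have $A(r)=\alpha+O(r^3)$, so by \eqref{tS} and \eqref{tAH},
\[
t_S(r)-t_S(0)=\frac{1}{\sqrt{A(r)}}-\frac{1}{\sqrt{\alpha}}=O(r^3),\qquad t_{AH}(r)=\frac{1}{\sqrt{\alpha}}-\frac{\sqrt{\alpha}}{2}\,r^2+O(r^3).
\]
In particular $t_{AH}(r)<t_S(0)<t_S(r)$ for all sufficiently small $r>0$. Suppose a future-directed outgoing null geodesic $t(r)$ exists with $t(r)\to t_S(0)=1/\sqrt{\alpha}$ as $r\to 0^{+}$. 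Since $t'(r)=e^{\omega}>0$ and the geodesic must satisfy $t(r)\le t_S(r)$ to stay in the non-singular region, we obtain $0<t(r)-t_S(0)\le t_S(r)-t_S(0)=O(r^3)$.

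Next I extract a lower bound on $t'(r)$ from the null geodesic equation. Inverting \eqref{T_R} gives $R(t,r)=r\sqrt{1-A(r)t^{2}}$, and a direct differentiation yields
\[
R'(t,r)=\frac{2\bigl(1-A(r)t^{2}\bigr)+r\bigl(-A'(r)\bigr)t^{2}}{2\sqrt{1-A(r)t^{2}}}\ge\sqrt{1-A(r)t^{2}},
\]
where the inequality uses $-A'(r)\ge 0$ from \eqref{Acondition2}. Hence $t'(r)=R'/\sqrt{1-A(r)r^{2}}\ge(1+o(1))\sqrt{1-A(r)t(r)^{2}}$. Using the factorisation $1-A(r)t(r)^{2}=A(r)(t_S(r)-t(r))(t_S(r)+t(r))\approx 2\sqrt{\alpha}(t_S(r)-t(r))$ together with the $O(r^{3})$ estimate on $t_S(r)-t_S(0)$, one finds that $t'(r)$ is of order $r^{3/2}$. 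Integrating from $0$ to $r$ gives $t(r)-t_S(0)$ of order $r^{5/2}$, which dominates $r^{3}$ as $r\to 0^{+}$ and contradicts the bound derived above. Hence no such null geodesic can exist, and the central singularity is not locally naked, so strong cosmic censorship holds.

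The main obstacle is the degenerate case in which $t(r)$ shadows $t_S(r)$ so tightly that $t_S(r)-t(r)=o(r^{3})$ and the naive lower bound on $\sqrt{1-At^{2}}$ degrades. This is handled by retaining the $r(-A'(r))t^{2}$ term in the numerator of $R'$, which is itself of order $r\eta'(r)\sim r^{3}$ (where $\eta=\alpha-A$), and so supplies a nontrivial lower bound on $R'$ even when the first term is small. The sharpness of the comparison $5/2<3$ is precisely what the assumption $\beta=0$ affords over the case $\beta>0$, for which $t_S-1/\sqrt{\alpha}\sim r^{2}$ leaves room for the null geodesic.
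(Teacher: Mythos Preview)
The paper's proof is far simpler than what you attempt: it uses only the inequality $t_{AH}(r) < t_S(0)$ for small $r$, which you yourself derive in your first paragraph. Once this is known, any outgoing null ray from the central singularity satisfies $t(r) > t_S(0) > t_{AH}(r)$ for small $r$, hence lies in the trapped region where the outgoing null expansion \eqref{EXP} is negative; thus $R$ cannot increase from zero and no such geodesic exists. You had the key fact in hand and did not use it.

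Your ODE argument, by contrast, has a genuine gap. From $t' \ge (1+o(1))\sqrt{1-At^2}$ and $1-At^2 \approx 2\sqrt{\alpha}\bigl(t_S(r)-t(r)\bigr)$ you obtain a lower bound on $t'$ in terms of $t_S(r)-t(r)$; but the $O(r^3)$ information on $t_S(r)-t_S(0)$ gives only an \emph{upper} bound $t_S(r)-t(r) \le t_S(r)-t_S(0) = O(r^3)$, since $t(r) \ge t_S(0)$. Feeding an upper bound into the right-hand side of a lower bound yields nothing, so the claim ``$t'(r)$ is of order $r^{3/2}$'' is unsupported. The degenerate case you flag is therefore not a peripheral obstacle but the entire difficulty, and your sketch of its resolution is incomplete: $\beta=0$ gives only $-A' = O(r^2)$, not $-A' \sim r^2$, and you do not show how the extra numerator term in $R'$ produces a quantitative bound. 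One clean repair is AM--GM on the two terms of $R'$: writing $u = 1-At^2$, one has $R' = \sqrt{u} + \tfrac{r(-A')t^2}{2\sqrt{u}} \ge t\sqrt{2r(-A')}$, which (when $-A'$ has leading order $r^{k-1}$ for some $k \ge 3$) gives $t' \ge c\, r^{k/2}$ and hence $t(r)-t_S(0) \ge c' r^{k/2+1}$, contradicting $t(r)-t_S(0) = O(r^k)$ since $k/2+1 < k$.
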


\begin{proof}\label{T1}
$\beta = 0$ implies
\begin{equation}
A(r) - \alpha = O(r^3).
\end{equation}
Then,
\begin{equation}
\begin{split}
t_{AH}(r) - t_S(0) &=  \sqrt{\frac{1}{A(r)}( 1 - A(r)r^2)} - \sqrt{ \frac{1}{\alpha}} \\
&= \frac{1 } {\sqrt{\alpha A(r)}} \left( \sqrt{\alpha}- \frac{\sqrt{\alpha^3}}{2}r^2 - \sqrt{A(r)} + O(r^3) \right)\\
&= -\frac{\sqrt{\alpha}}{2}r^2 + O(r^3) .
\end{split}
\end{equation}
Thus, there exists $r_0 \in \mathbb{R}$ such that
\begin{equation}
t_{AH}(r) - t_S(0) < 0
\end{equation}
holes for arbitrary $r \in\left[0, r_0\right]$. Therefore, there exists apparent horizon around the past of central singularity. So null geodesics can not 
emanate from the singularity and the strong cosmic censorship holds.
\end{proof}

This fact is already known in Refs. \cite{ Ghosh:2001fb,Goswami:2004gy, Goswami:2006ph}. Accordingly, in order to figure out 
the condition for the naked singularity formation, we suppose $\beta > 0$ in the following discussion. Let us consider 
the future directed null geodesics along the outer radial direction. Because of spherical symmetry, the differential 
equation for future directed null geodesics along the outer radial direction is given by
\begin{equation}\label{nulleq_0}
\begin{split}
\frac{dt}{dr} &= e^{\omega} \\
&=\frac{R^{\prime}}{\sqrt{1-A(r)r^2}}\\
&= \frac{1}{\sqrt{(1-A(r)r^2)(1-A(r)t^2)}}\left(    1 - A(r)t^2 - \frac{A^{\prime}(r)}{2}rt^2     \right).
\end{split}
\end{equation}
We used equation (\ref{metric}) in the first line, equations (\ref{EQ3}), (\ref{DofE}) and (\ref{DA}) in the second 
line and the explicit expression of $R^{\prime}$ derived from equation (\ref{T_R}) in the third one. 

Now, we introduce the dimensionless functions and parameters to write equation (\ref{nulleq_0}) in the dimensionless 
form. Let $a(x)$ to be a function in $C^\infty[0,\infty)$ such that
\begin{numcases}
{}
a(0) = 1 &\label{Cofa1} \\
\frac {da(0)}{dx} = 0 & \label{Cofa2}\\
\frac {d^2 a(0)}{dx^2} = -1 & \label{Cofa3}\\
\frac {da(x)}{dx} \leq 0 & \label{monoa}\\
a(x) = \frac{m}{x^4}  & $(x \geq l)$. \label{vaca}
\end{numcases}
In the above, $m$ and $l$ are dimensionless parameters which are related to the total mass of the dust and the dust cloud radius, 
respectively. These conditions imply that $a(x)$ is written as
\begin{equation}\label{EXPofa}
a(x) = 1 - \frac{x^2}{2} + O(x^3).
\end{equation}
Using $a(x)$, we write $A(r)$ as
\begin{equation}
A(r) = \alpha a \left(\sqrt {\frac{\beta}{\alpha}}r \right). \label{defa}
\end{equation}
We can easily show that this $A(r)$ satisfies the conditions (\ref{Acondition1}) and (\ref{Acondition2}). In the above, 
$\alpha$ and $\beta$ are non-zero positive parameters satisfying 
\begin{equation}
 \label{eta_0}
\max_{0 \leq x \leq l} a(x)x^2 \equiv \eta < \frac{\beta}{\alpha^2}.
\end{equation}
This condition comes from equation (\ref{Mcondition}). In this way, $A(r)$ is parameterized by the 2 parameters $\alpha$, $\beta$. From equation (\ref{Arho}), the initial density $\rho(0,r)$ is also parameterized as 
\begin{equation}
 \label{DPrho}
\rho(0,r) = \frac{3\alpha}{2r^3}\frac{d}{dr}\left( r^4 a \left(\sqrt {\frac{\beta}{\alpha}}r \right) \right).
\end{equation}
In the following, the function $a(x)$ is given so that it satisfies equations (\ref{Cofa1})-(\ref{vaca}) and 
the initial density distribution is parameterized by $\alpha$ and $\beta$. Using $\alpha$, $\beta$ and $a(x)$, equation (\ref{nulleq_0}) is rewrriten as
\begin{equation}
\frac{dt}{dr} = \frac{1}{\sqrt{\Bigl(1-\alpha a(\sqrt {\frac{\beta}{\alpha}}r)r^2\Bigr)\Bigl(1-\alpha a(\sqrt {\frac{\beta}{\alpha}}r)t^2\Bigr)}}
\left(    1 - \alpha a \Bigl(\sqrt {\frac{\beta}{\alpha}}r \Bigr)t^2 - \frac{\alpha rt^2}{2}\frac{d}{dr}a \Bigl(\sqrt {\frac{\beta}{\alpha}}r \Bigr)    \right).
\end{equation}
Moreover, using dimensionless coordinates
\begin{equation}\label{Dfx}
x \equiv \sqrt{\frac{\beta}{\alpha}}r
\end{equation}
and
\begin{equation}
\label{Dzeta}
\zeta \equiv \sqrt{\alpha}(t-t_S(0)) = \sqrt{\alpha}\left(t- \frac{1}{\sqrt{\alpha}}\right),
\end{equation}
we obtain the dimensionless equation for the null line
\begin{equation}\label{nulleq_1}
\frac{d\zeta}{dx} = \frac{1}{\sqrt{\gamma \left( 1-\frac{a(x)x^2}{\gamma} \right) 
\left( 1-a(x)(\zeta+1)^2 \right) }}\left(    1 - a(x)(\zeta + 1)^2 - \frac{x}{2}(\zeta+1)^2\frac{d}{dx}a(x)    \right),
\end{equation}
where
\begin{equation}
\label{Dgamma}
\gamma \equiv \frac{\beta}{\alpha ^2}.
\end{equation}
Equation (\ref{eta_0}) is also rewriten as
\begin{equation}
\label{eta}
\gamma > \eta.
\end{equation}
Here note that the right-hand side of equation (\ref{nulleq_1}) is not Lipschitz continuous function 
in the region that contains $x = \zeta = 0$. If this equation has a solution which starts from 
$x = \zeta = 0$, then, at least, singularity is locally naked. Moreover, if the solution could extend to $x \to \infty$, 
the singularity would be visible at null infinity, that is, it is globally naked. From now on we will ask if this 
differential equation has a solution. For the convenience, we define the dimensionless coordinate $\theta$ as
\begin{equation}\label{Dftheta}
\theta x^2 \equiv \zeta.
\end{equation}
Then equation (\ref{nulleq_1}) becomes
\begin{equation}\label{nulleq_2}
\frac{d\theta}{dx} + \frac{2\theta}{x}= \frac{1}{x^2\sqrt{\gamma \left( 1-\frac{a(x)x^2}{\gamma} \right) \left( 1-a(x)(\theta x^2+1)^2 \right) }}\left(    1 - a(x)(\theta x^2 + 1)^2 - \frac{x}{2}(\theta x^2+1)^2\frac{d}{dx}a(x)    \right).
\end{equation}
In the current expression, from equations (\ref{tS}), (\ref{tAH}) and the relation $\theta = \frac{\sqrt{\alpha}t - 1}{x^2}$, 
we see that the singularity and the apparent horizon is located at
\begin{equation}\label{locationofS}
\theta_S (x)= \frac{1}{x^2} \left(\frac{1}{\sqrt{a(x)}} - 1\right)
\end{equation}
and
\begin{equation}\label{locationofAH}
\theta_{AH}(x;\gamma)= \frac{1}{x^2} \left(\frac{1}{\sqrt{a(x)}} \sqrt{1 - \frac{a(x)x^2}{\gamma}} - 1\right),
\end{equation}
respectively. And, 
\begin{equation}
\theta_S (0)= \frac{1}{4},
\end{equation}
\begin{equation}
\theta_{AH} (0;\gamma)= \frac{1}{4} - \frac{1}{2\gamma}.
\end{equation}

Note that, from the coordinate transformations (\ref{Dzeta}) and (\ref{Dftheta}), we see that the region $x = 0$ is 
singular for arbitrary $\theta$. But any future directed curve does not emanate from the central singularity 
located in $\theta < 0$ because any point in the region satisfying $x \neq 0$ and $\theta < 0$ is not in the future time slice of the central singularity. Then, for our purpose, we focus on the central singularity located in 
$\theta \geq 0$.

If the differential equation (\ref{nulleq_2}) has a $C^1$ solution $\theta(x)$ for $x \in [0,l]$, the multiplication 
of $x$ with the right-hand side of equation (\ref{nulleq_1}) for its solution behaves around $x=0$ as
\begin{equation}\label{limf}
\begin{split}
 \lim_{x \to 0} \frac{1}{x\sqrt{\gamma \left( 1-\frac{a(x)x^2}{\gamma} \right) \left( 1-a(x)(\theta x^2+1)^2 \right) }} & \left(    1 - a(x)(\theta x^2 + 1)^2 - \frac{x}{2}(\theta x^2+1)^2\frac{d}{dx}a(x)    \right) \\
&=  \frac{ 1- 2\theta(0)}{ \sqrt {\gamma \left( \frac{1}{2} - 2\theta(0) \right)} }.  \\
\end{split}
\end{equation}
Thus, if 
\begin{equation}
2\theta(0) =  \frac{ 1- 2\theta(0)}{ \sqrt {\gamma \left( \frac{1}{2} - 2\theta(0) \right)} }
\end{equation}
holds, then the first-order pole of equation (\ref{nulleq_2}) is cancelled out and does not appear. As with Christodoulou \cite{Christodoulou:1984mz}, 
let us introduce a real number $\lambda$ satisfying
\begin{equation}
\label{lambda}
2\lambda = \frac{ 1- 2\lambda}{ \sqrt {\gamma \left( \frac{1}{2} - 2\lambda \right)} }. \
\end{equation}
Since $\gamma$ is a positive real number (see below equation (\ref{defa}) and equation (\ref{Dgamma})), we have
\begin{equation}
0 < \lambda < \frac{1}{4}.
\end{equation}
In addition, since
\begin{equation}
\frac{d\sqrt {\gamma(\lambda)}}{d\lambda} = \frac{d}{d\lambda} \left( \frac{ 1- 2\lambda}{ 2\lambda \sqrt { \left( \frac{1}{2} - 2\lambda \right)} } \right)
= \frac{-4\lambda^2 + 6\lambda -1}{4\lambda^2 \left( \frac{1}{2} - 2\lambda\right)^{\frac{3}{2}}},
\end{equation}
we see
\begin{equation}
\min_{0 < \lambda < \frac{1}{4}} \sqrt{\gamma(\lambda)}  = \sqrt{\gamma \left( \lambda_{M} \right)} = \sqrt{11 + 5\sqrt{5}} \equiv \gamma_{min},
\end{equation}
where
\begin{equation}\label{lamdbapm}
\lambda_{M} \equiv \frac{3 - \sqrt{5}}{4},
\end{equation}
that is, $\gamma$ has the minimum at $\lambda = \lambda_{M}$. 
If $\gamma >\gamma_{min}$ holds, we have the two solution to equation (\ref{lambda}) for given $\gamma$, $\lambda_-(\gamma)$ and $\lambda_+(\gamma)$, 
satisfying $0 < \lambda_-(\gamma) < \lambda_{M}  <\lambda_+(\gamma)<\frac{1}{4}$. If $\gamma = \gamma_{min}$, equation (\ref{lambda}) has the single 
solution $\lambda_{\pm}(\gamma) = \lambda_{M}$. If $\gamma < \gamma_{min}$, equation (\ref{lambda}) has no solution. From equation (\ref{lambda}), 
it is easy to see that $\lambda_-(\gamma) $ and $\lambda_+(\gamma)$ satisfy $\lim_{\gamma \to \infty}  \lambda_-(\gamma)= 0$ and 
$\lim_{\gamma \to \infty} \lambda_+(\gamma) = \frac{1}{4}$.

If $\lambda$ satisfying equation (\ref{lambda}) exists, we rewrite equation (\ref{nulleq_2}) as
\begin{equation}\label{nulleq}
\begin{split}
\frac{d\theta}{dx} - 2 \frac{(\lambda-\theta)}{x} &= \frac{   1 - a(x)(\theta x^2 + 1)^2 - \frac{x}{2}(\theta x^2+1)^2\frac{d}{dx}a(x)    }{x^2\sqrt{\gamma \left( 1-\frac{a(x)x^2}{\gamma} \right) \left( 1-a(x)(\theta x^2+1)^2 \right) }} - \frac{2\lambda}{x}  \\
& \equiv \lambda f(x,\theta;\lambda).
\end{split}
\end{equation}
The formal solution to this equation is given by
\begin{equation}\label{DT}
\begin{split}
\theta(x) &= \lambda \left( 1 + x\int_0^1 dv v^2 f(vx,\theta(vx);\lambda)   \right) \\
& \equiv T_{\lambda}(\theta)(x),
\end{split}
\end{equation}
where $T_{\lambda}$ is a nonlinear map on a functional space. Equations (\ref{nulleq}) 
and (\ref{DT}) imply that the fixed point of $T_{\lambda}$ can be a solution to equation (\ref{nulleq}). 
Thus, if $T_{\lambda}$ has a fixed point on proper subset of $C^0$ on proper domain, we can prove the 
existence of the solution satisfying (\ref{nulleq}), which emanates from the central singularity,
and then it means that the singularity is naked. In the next section, using the fixed point theorem for a 
compact operator introduced soon, we examine the condition that central singularity is naked.

\section{Existence of the null geodesics emanating from the central singularity}

\subsection{Preparation}

In four dimensional case \cite{Christodoulou:1984mz}, the existence of null geodesics is shown by using 
the fixed-point theorem for contraction mapping \cite{banach1922operations}. In five dimensional case, however, we can not use the same 
method (see Appendix B). Thus, we have to innovate new one.

Firstly, we introduce a fixed point theorem which is suitable for the current issue \cite{Zeidler:1986}. 

\begin{theorem*}[Schauder fixed-point theorem  \cite{Schauder:1930}]
Let $D$ be a nonempty, closed, bounded, convex subset of a Banach space $X$, and suppose $T: D \to D$ is a compact operator. Then $T$ has a fixed point.
\end{theorem*}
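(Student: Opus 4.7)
The plan is to reduce this infinite-dimensional fixed-point problem to a sequence of finite-dimensional problems on which Brouwer's fixed-point theorem applies, and then pass to the limit using the compactness of $T$. Since $T$ is compact and $D$ is bounded, $T(D)$ is relatively compact in $X$, so for each integer $n \geq 1$ one can choose a finite $\tfrac{1}{n}$-net $\{y_{1}^{(n)},\dots,y_{k_n}^{(n)}\} \subset T(D) \subset D$.

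With this net in hand, I would construct a Schauder-type projection $P_n : T(D) \to X$ via a continuous partition of unity. Explicitly, setting $m_i(y) = \max\bigl(\tfrac{1}{n} - \|y - y_i^{(n)}\|,\; 0\bigr)$, define
\begin{equation*}
P_n(y) = \frac{\sum_{i=1}^{k_n} m_i(y)\, y_i^{(n)}}{\sum_{i=1}^{k_n} m_i(y)}.
\end{equation*}
By construction $P_n$ is continuous, its image lies in the convex hull $C_n \equiv \mathrm{conv}\{y_1^{(n)},\dots,y_{k_n}^{(n)}\} \subset D$ (using convexity of $D$ together with $y_i^{(n)} \in D$), and the key estimate $\|P_n(y) - y\| \leq \tfrac{1}{n}$ holds for every $y \in T(D)$ because $P_n(y)$ is a convex combination of those $y_i^{(n)}$ lying within $\tfrac{1}{n}$ of $y$.

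Next, I would consider $T_n \equiv P_n \circ T$ restricted to $C_n$. Since $T_n(C_n) \subset C_n$ and $C_n$ is a compact convex subset of a finite-dimensional affine subspace, Brouwer's fixed-point theorem yields $x_n \in C_n$ with $T_n(x_n) = x_n$. By relative compactness of $T(D)$, extract a subsequence along which $T(x_{n_j}) \to z$; the closedness of $D$ gives $z \in D$. The estimate
\begin{equation*}
\|x_{n_j} - T(x_{n_j})\| = \|P_{n_j}(T(x_{n_j})) - T(x_{n_j})\| \leq \tfrac{1}{n_j} \longrightarrow 0
\end{equation*}
then forces $x_{n_j} \to z$, and continuity of $T$ yields $T(z) = z$.

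The main obstacle is the construction and analysis of the Schauder projection $P_n$: one must simultaneously ensure continuity, the containment $P_n(T(D)) \subset D$ (which rests crucially on the convexity hypothesis on $D$), and the uniform bound $\|P_n(y) - y\| \leq \tfrac{1}{n}$ on $T(D)$. Once these three properties are in hand the reduction to Brouwer is automatic, and the limiting step is essentially forced by the compactness of $T$. Note that neither linearity of $T$ nor any completeness of $D$ beyond closedness is needed beyond what is already in the hypotheses.
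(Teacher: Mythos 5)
Your proof is correct, and it is essentially the classical argument: the paper itself does not prove this theorem but simply quotes it from the literature (Schauder's original paper and Zeidler's book), where the proof proceeds exactly as you describe — finite $\tfrac{1}{n}$-nets of the relatively compact set $T(D)$, the Schauder projection $P_n$ built from the partition of unity $m_i$, Brouwer's theorem on the finite-dimensional convex hull $C_n$, and a limiting argument using $\|P_n(y)-y\|\leq \tfrac{1}{n}$ together with continuity of $T$. The only point worth tightening is that the net should be chosen so that the \emph{open} balls of radius $\tfrac{1}{n}$ cover $T(D)$ (total boundedness gives this), which guarantees $\sum_i m_i(y)>0$ and hence that $P_n$ is well defined; with that remark your argument is complete and needs no completeness of $X$ beyond the normed-space structure, as you note.
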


In the above, compact operator is defined as follows.
\begin{definition*}[Compact Operator]
A operator $T$ is compact if and only if :\\
(1) $T$ is continuous. \\
(2) $T$ maps bounded set into relatively compact set.
\end{definition*}
And, ``relative compact" means that the closure is compact. Note that $T$ does not have to be a linear operator.

We also use Arzel\`a-Ascoli theorem so as to show the relatively compactness of the image of $T$. 

\begin{theorem*}[Arzel\`a-Ascoli Theorem]
Let G be a nonempty open set in $\mathbb{R}^n$. A subset M in $C^0(\bar{G}) $ is relatively compact if and only if the following two conditions hold.\\
(i)(uniform boundedness)
\begin{equation}
\sup_{f\in M}\: (\:\sup_{x\in \bar{G}} \:|f(x)|\:) < \infty.
\end{equation}
(ii)(equicontinuity)
For arbitrary $\epsilon > 0$, there exists $\delta>0$, which depends only on $\epsilon$, such that 
\begin{equation}
\sup_{f\in M} |f(x) - f(y)| < \epsilon
\end{equation}
for each $x$, $y\in \bar{G}$ satisfying $|x-y|<\delta$.
\end{theorem*}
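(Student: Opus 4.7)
The plan is to prove both directions of the equivalence separately, noting that for the statement to be well-posed the space $C^0(\bar{G})$ must be equipped with the sup norm, which in particular requires $\bar{G}$ to be compact; I assume this (as will hold in the intended application where $\bar{G} = [0,l]$ or a similar compact interval).

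For necessity, I would start from the fact that relative compactness of $M$ in a metric space is equivalent to $\overline{M}$ being totally bounded and complete. Total boundedness immediately gives (i): covering $\overline{M}$ by finitely many sup-norm balls of radius one yields a uniform bound on $\|f\|_\infty$ over $f \in M$. For (ii), fix $\epsilon > 0$ and cover $\overline{M}$ by finitely many sup-norm $\epsilon/3$-balls centered at functions $f_1,\dots,f_N$. Each $f_i$ is continuous on the compact set $\bar{G}$, hence uniformly continuous, so I can pick $\delta_i > 0$ witnessing continuity at scale $\epsilon/3$. Setting $\delta = \min_i \delta_i$ and applying the three-term triangle inequality $|f(x)-f(y)| \le |f(x)-f_i(x)| + |f_i(x)-f_i(y)| + |f_i(y)-f(y)|$ for $f \in M$ and a suitable $f_i$ yields the desired equicontinuity bound uniformly in $f$.

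For sufficiency, I would use a Cantor diagonal argument. Fix a countable dense subset $\{x_k\}_{k \ge 1} \subset \bar{G}$ (which exists because $\mathbb{R}^n$ is separable and $\bar{G}$ inherits this property). Given any sequence $\{f_n\} \subset M$, uniform boundedness lets me extract a subsequence along which $f_n(x_1)$ converges in $\mathbb{R}$; then a sub-subsequence converging at $x_2$; and so on. The diagonal subsequence $\{f_{n_k}\}$ then converges at every $x_k$. I would then upgrade pointwise convergence on the dense set to uniform convergence on $\bar{G}$ using equicontinuity: given $\epsilon > 0$, choose $\delta$ from (ii) at threshold $\epsilon/3$, cover the compact $\bar{G}$ by finitely many open balls of radius $\delta$ centered at dense-set points $x_{k_1},\dots,x_{k_N}$, and pick $K$ so large that for $j, k \ge K$ the values $|f_{n_j}(x_{k_i}) - f_{n_k}(x_{k_i})|$ are all less than $\epsilon/3$. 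A standard three-term estimate shows $\{f_{n_k}\}$ is uniformly Cauchy on $\bar{G}$, hence convergent in $C^0(\bar{G})$, giving relative compactness.

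The main obstacle is this last step: bridging pointwise convergence on the countable dense set to uniform convergence on all of $\bar{G}$. This is where compactness of $\bar{G}$ (to extract a finite subcover) and equicontinuity (to control oscillation off the dense set) enter jointly and essentially; a careless ordering of the quantifiers $(\epsilon,\delta,N,K)$ would fail. Once uniform convergence of some subsequence of any sequence in $M$ is established, relative compactness follows from the standard characterization of compactness in metric spaces via sequential compactness.
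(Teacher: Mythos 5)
Your proof is correct: it is the standard Arzel\`a--Ascoli argument (total boundedness plus uniform continuity of finitely many centers for necessity; Cantor diagonalization over a countable dense subset upgraded to uniform convergence via equicontinuity and a finite subcover for sufficiency). Note, however, that the paper does not prove this statement at all --- it quotes it as a known theorem (from the functional-analysis literature, cf.\ Zeidler) and only uses it as a tool in Theorem 3, so there is no in-paper proof to compare against. Your remark that the statement implicitly needs $\bar{G}$ compact (i.e.\ $G$ bounded) for $C^0(\bar{G})$ with the sup norm to make sense is a fair catch; the paper omits the word ``bounded,'' but in its application $\bar{G}=[0,d]$, so nothing is affected.
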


We apply Schauder Fixed-Point Theorem to the non-linear operator $T_{\lambda}$ defined in equation (\ref{DT}) and ask 
if the equation for null line has a solution. First of all, we introduce a domain such that $T_{\lambda}$ maps its domain 
into itself. So let us define
\begin{equation}\label{DD}
D_{\lambda,b,c,d} \equiv \{ \theta | \theta \in C^0[0,d] , |\theta - \lambda| \leq bx^c\},
\end{equation}
where $b,c,d\in \mathbb{R}$ satisfy $b > 0$, $c > 0$, $l \geq d > 0$ and $\lambda + bx^c < \theta_S(x)$ for all $x$ in $[0,l]$. Last inequality 
can be always satisfied for sufficient small $d$ because $\theta_S$ is a continuous function and $\lambda < \frac{1}{4} = \theta_S(0)$ always holds. 
Here we introduce the uniform norm $\|\theta\| \equiv \sup_{x \in [0,d]}\theta(x)$ such that $D_{\lambda,b,c,d}$ becomes 
a subset of a Banach space $C^0[0,d]$.
Then we can show following lemma.

\begin{lemma}
For all $\lambda < \frac{9-\sqrt{33}}{16}$, there exist $c(\lambda) \in (0,1)$, which depend only on $\lambda$, and $\bar{d} \in (0,l]$, such that $T_{\lambda}:D_{\lambda,b,c,d} \to D_{\lambda,b,c,d}$ for all $c \in[c(\lambda),1)$ and $d \in (0,\bar{d}]$.
\end{lemma}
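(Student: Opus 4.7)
My plan is to reduce the inclusion $T_\lambda(D_{\lambda,b,c,d})\subset D_{\lambda,b,c,d}$ to a pointwise estimate on the integrand of (\ref{DT}), and then to choose $c$ close enough to $1$ that the leading linear-in-$(\theta-\lambda)$ contribution has coefficient strictly below $1$. Concretely I need to verify (i) continuity of $T_\lambda(\theta)$ on $[0,d]$, and (ii) $|T_\lambda(\theta)(x)-\lambda|\le b\,x^c$ on $[0,d]$.

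\emph{Step 1: expand $f$ near $(x,\theta)=(0,\lambda)$.} Using (\ref{EXPofa}) a direct Taylor computation yields
\begin{align*}
1-a(x)(\theta x^2+1)^2-\tfrac{x}{2}(\theta x^2+1)^2 a'(x) &= (1-2\theta)\,x^2+x^3 N_1(x,\theta),\\
\gamma\Bigl(1-\tfrac{a(x)x^2}{\gamma}\Bigr)\Bigl(1-a(x)(\theta x^2+1)^2\Bigr) &= \gamma\bigl(\tfrac{1}{2}-2\theta\bigr)\,x^2\bigl(1+xR_1(x,\theta)\bigr),
\end{align*}
with $N_1,R_1$ smooth and uniformly bounded for $\theta$ in a compact neighbourhood of $\lambda$ staying strictly below $1/4$. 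Setting
\[
g(\theta):=\frac{1-2\theta}{\sqrt{\gamma(1/2-2\theta)}},
\]
the defining relation (\ref{lambda}) becomes $g(\lambda)=2\lambda$, so after dividing the numerator of $f$ by $x^2\sqrt{\gamma D_1 D_2}$ and subtracting the pole $2\lambda/x$ one obtains
\[
\lambda f(x,\theta;\lambda)=\frac{g(\theta)-g(\lambda)}{x}+E(x,\theta)=\frac{g'(\lambda)\,(\theta-\lambda)}{x}+O\!\bigl((\theta-\lambda)^2/x\bigr)+O(1),
\]
with all implicit constants uniform. A direct computation gives $g'(\lambda)=\dfrac{8\lambda^2}{(1-4\lambda)(1-2\lambda)}>0$.

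\emph{Step 2: integrate and choose $c$ and $\bar d$.} Plugging Step 1 into $T_\lambda(\theta)(x)-\lambda=x\int_0^1 v^2\,\lambda f(vx,\theta(vx);\lambda)\,dv$ and using $|\theta(y)-\lambda|\le b\,y^c$ at $y=vx$, the linear part contributes at most $g'(\lambda)\int_0^1 v\cdot b(vx)^c\,dv=\dfrac{g'(\lambda)}{c+2}\,b\,x^c$, while the two $O$-remainders are controlled by $C_1\,x$ and $C_2\,b^2\,x^{2c}$; altogether
\[
|T_\lambda(\theta)(x)-\lambda|\le\Bigl(\frac{g'(\lambda)}{c+2}+\tfrac{C_1}{b}x^{1-c}+C_2 b\,x^{c}\Bigr)\,b\,x^c.
\]
Requiring $g'(\lambda)/(c+2)<1$ for some $c\in(0,1)$ is equivalent to $g'(\lambda)<3$, which reduces algebraically to $16\lambda^2-18\lambda+3>0$; its smaller positive root is exactly $(9-\sqrt{33})/16$. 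Under the hypothesis $\lambda<(9-\sqrt{33})/16$ the interval $\bigl(\max(g'(\lambda)-2,0),\,1\bigr)$ is nonempty, so I pick any $c(\lambda)$ in it. Then for every $c\in[c(\lambda),1)$ one has $g'(\lambda)/(c+2)<1$, and choosing $\bar d\in(0,l]$ small enough (depending on $b,c,\lambda$) so that $\tfrac{C_1}{b}\bar d^{\,1-c}+C_2 b\,\bar d^{\,c}<1-g'(\lambda)/(c+2)$ closes the bound (ii) for all $d\in(0,\bar d]$.

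\emph{Continuity and main obstacle.} Step 1 shows $v^2 f(vx,\theta(vx);\lambda)=O(v^{c+1})$ as $v\to 0^+$, so the integrand in (\ref{DT}) is absolutely integrable and $T_\lambda(\theta)$ is continuous on $(0,d]$; the prefactor $x$ sends $T_\lambda(\theta)(x)\to\lambda$ as $x\to 0^+$, giving (i). The genuinely delicate point of the argument will be the uniformity of Step 1: showing that $N_1,R_1,E$ and the second-order Taylor remainder of $g$ stay uniformly bounded over $\{(x,\theta):x\in[0,\bar d],\,|\theta-\lambda|\le b x^c\}$ will require the $C^\infty$ regularity of $a$ together with the assumption hard-wired in the definition of $D_{\lambda,b,c,d}$ that $\lambda+b x^c<\theta_S(x)$, which keeps $1-a(x)(\theta x^2+1)^2$ bounded below by a positive multiple of $x^2$.
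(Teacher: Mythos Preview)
Your proposal is correct and follows essentially the same route as the paper: both isolate the leading $x^{-1}$ term of $\lambda f$, identify its coefficient as $g'(\lambda)=\dfrac{2\lambda}{\sqrt{\gamma}(1/2-2\lambda)^{3/2}}=\dfrac{8\lambda^2}{(1-2\lambda)(1-4\lambda)}$, integrate to get the factor $g'(\lambda)/(c+2)$ in front of $bx^c$, and then read off the threshold $\lambda<(9-\sqrt{33})/16$ from $g'(\lambda)<3$. The only cosmetic difference is that the paper obtains the coefficient by rationalizing the square-root difference (Appendix~A.1), while you phrase it as a first-order Taylor expansion of $g(\theta)$ about $\theta=\lambda$; the remainder bounds $O(1)+O(x^{2c-1})$ and the resulting range $g'(\lambda)-2<c<1$ coincide.
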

 
\begin{proof}
For $\theta \in D_{\lambda,b,c,d}$, we estimate the right-hand side of equation (\ref{nulleq}) as 
\begin{equation}\label{ineq0}
\begin{split}
 | \lambda f(x,\theta(x);\lambda) | &=  \left|\frac{   1 - a(x)(\theta(x) x^2 + 1)^2 - \frac{x}{2}(\theta(x) x^2+1)^2\frac{d}{dx}a(x)    }{x^2\sqrt{\gamma \left( 1-\frac{a(x)x^2}{\gamma} \right) \left( 1-a(x)(\theta(x) x^2+1)^2 \right) }} - \frac{2\lambda}{x} \right| \\
& \leq  \frac{ 2\lambda }{\sqrt{\gamma} \left( \frac{1}{2} - 2\lambda \right)^{\frac{3}{2}}}bx^{c-1} +O(1) + O(x^{2c-1}),
\end{split}
\end{equation}
where we used (\ref{lambda}) and (\ref{DD}) (the details are shown in Appendix A) and the terms $O(1)$ and $O(x^{2c-1})$ does not depend on $\theta$. Then we see
\begin{equation}\label{Testimate}
 |T_{\lambda}(\theta) - \lambda| 
 \leq  x\int_0^1 dv v^2 | \lambda f(vx,\theta(vx);\lambda) | 
= \frac{ 2\lambda }{\sqrt{\gamma} (c+2)\left( \frac{1}{2} - 2\lambda \right)^{\frac{3}{2}}}bx^c  + O(x^{2c}) + O(x).
\end{equation}
Hence, if
\begin{equation}
\label{IEc}
0 < c < 1
\end{equation}
and
\begin{equation}
\label{IEco}
\frac{ 2\lambda }{\sqrt{\gamma} (c+2)\left( \frac{1}{2} - 2\lambda \right)^{\frac{3}{2}}} < 1
\end{equation}
hold and $d$ is sufficiently small, $|T_{\lambda}(\theta) - \lambda| \leq bx^c$ holds for all 
$x$ in $[0,d]$. This means that there exists a positive number $\bar{d}$ such that $T_{\lambda}(\theta)$ is in $D_{\lambda,b,c,d}$ for arbitrary $d$ in $(0,\bar{d}]$, that is, 
$T_{\lambda}$ maps $D_{\lambda,b,c,d}$ into itself for such $d$. 

There exists $c$ such that the equations (\ref{IEc}) and (\ref{IEco}) hold if and only if 
\begin{equation}
\label{lambdacondi}
\lambda < \frac{9-\sqrt{33}}{16}
\end{equation}
holds. In this case, the equations (\ref{IEc}) and (\ref{IEco}) imply
\begin{equation}
\frac{8\lambda^2}{(1-2\lambda)(1-4\lambda)}-2 < c < 1
\end{equation}
and then we can take the parameter $c$ in this range so as to satisfy (\ref{IEc}) and (\ref{IEco}). For example, 
$c(\lambda)$ in the Lemma 1 is a number sligtly larger than $\frac{8\lambda^2}{(1-2\lambda)(1-4\lambda)}-2$. 
\end{proof}

From (\ref{lamdbapm}) and (\ref{lambdacondi}), for all $\lambda_-(\gamma)$, we can take some $c$ 
satisfying equations (\ref{IEc}) and (\ref{IEco}) because of $\lambda_-(\gamma) \leq \lambda_{M}< \frac{9-\sqrt{33}}{16}$. Thus, we 
can choose $c$ and $d$ such that $T_{\lambda}:D_{\lambda,b,c,d} \to D_{\lambda,b,c,d} $ for all 
$\lambda_-(\gamma)$.

Next, we evaluate $T_{\lambda}(\theta)$ and show that it is uniformly continuous on $D_{\lambda,b,c,d}$.

\begin{lemma}
$T_{\lambda}:D_{\lambda,b,c,d} \to D_{\lambda,b,c,d} $ is uniformly continuous.
\end{lemma}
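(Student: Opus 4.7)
The plan is to establish a stronger property than uniform continuity, namely Lipschitz continuity of $T_{\lambda}$ on $D_{\lambda,b,c,d}$ with respect to the uniform norm $\|\cdot\|$. From the definition (\ref{DT}), for any $\theta_1,\theta_2 \in D_{\lambda,b,c,d}$ and any $x \in [0,d]$,
\begin{equation*}
|T_{\lambda}(\theta_1)(x) - T_{\lambda}(\theta_2)(x)| \leq \lambda x \int_0^1 dv\, v^2 \bigl| f(vx, \theta_1(vx); \lambda) - f(vx, \theta_2(vx); \lambda) \bigr|,
\end{equation*}
so the task reduces to bounding the integrand pointwise by something of the form $C\,x^{-1}|\theta_1(vx)-\theta_2(vx)|$.

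The key analytic step is to apply the mean value theorem in the $\theta$ variable to $\lambda f(x,\theta;\lambda)$. The subtracted term $-2\lambda/x$ in (\ref{nulleq}) is independent of $\theta$ and drops out, so only the rational expression above needs to be differentiated. Using the expansion $a(x) = 1 - x^2/2 + O(x^3)$, one has $1 - a(x)(\theta x^2 + 1)^2 = x^2(1/2 - 2\theta) + O(x^3)$ and $a'(x) = -x + O(x^2)$. Computing $\partial_\theta(\lambda f)$ directly and collecting terms, the worst singularity comes from the factor $(1 - a(x)(\theta x^2+1)^2)^{-1/2}$, giving a bound $|\partial_\theta(\lambda f)(x,\theta;\lambda)| \leq C/x$ for $x \in (0,d]$, where $C$ depends only on $\gamma$ and a uniform positive lower bound on $1/2 - 2\theta$. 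The latter is controlled on $D_{\lambda,b,c,d}$ because the condition $\lambda + bx^c < \theta_S(x)$ (imposed in the definition of $D_{\lambda,b,c,d}$) together with $\theta_S(0)=1/4$ yields $1/2 - 2\theta(x) \geq 1/2 - 2\lambda - 2bd^c > 0$ after shrinking $d$ if necessary.

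Inserting the bound $|\lambda f(x,\theta_1;\lambda) - \lambda f(x,\theta_2;\lambda)| \leq (C/x)|\theta_1 - \theta_2|$ into the integral above, the substitution $x \mapsto vx$ replaces the $1/x$ by $1/(vx)$, which combines with the outer $x$ and the factor $v^2$ to give
\begin{equation*}
|T_{\lambda}(\theta_1)(x) - T_{\lambda}(\theta_2)(x)| \leq C \int_0^1 v\, dv \,\|\theta_1 - \theta_2\| = \tfrac{C}{2}\|\theta_1 - \theta_2\|,
\end{equation*}
uniformly in $x \in [0,d]$. Taking the supremum on the left yields Lipschitz continuity of $T_\lambda$ with constant $C/2$, and Lipschitz continuity implies uniform continuity.

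The main obstacle is controlling $\partial_\theta(\lambda f)$ near $x = 0$: naive estimation suggests two separate $1/x$ singularities from the $(1-a u^2)^{-1/2}$ factor and from the $a'(x)/x$ correction in the numerator, so one must carefully verify that these combine into an overall $O(1/x)$ (not worse) and that the prefactor stays finite — this is precisely where the construction of $D_{\lambda,b,c,d}$ (keeping $\theta$ strictly below $1/4$) is indispensable. The calculation in Appendix A invoked for Lemma 1 already contains the expansions needed here, so no new type of estimate is required.
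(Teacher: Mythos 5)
Your proposal is correct and follows essentially the same route as the paper: both establish a Lipschitz-type bound $|\lambda f(x,\theta_1;\lambda)-\lambda f(x,\theta_2;\lambda)|\leq C\,x^{-1}|\theta_1-\theta_2|$ on $D_{\lambda,b,c,d}$ (the paper by an explicit difference estimate in Appendix A, you by the mean value theorem in $\theta$, with the same use of $\theta<\theta_S$ to keep $1-a(x)(\theta x^2+1)^2$ bounded below by a multiple of $x^2$), and then integrate against $x\int_0^1 v^2\,dv$ to get Lipschitz, hence uniform, continuity of $T_\lambda$. The only substantive difference is that the paper tracks the sharp leading constant $\frac{2\lambda}{1-4\lambda}$, which it reuses for the contraction argument of Theorem 2, whereas your generic constant $C$ suffices for this lemma; also note that shrinking $d$ is unnecessary (and slightly changes the stated domain), since the definition of $D_{\lambda,b,c,d}$ already keeps the orbit strictly below $\theta_S(x)$ on all of $[0,d]$.
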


\begin{proof}

At first, we evaluate the absolute value of difference of integrand in $T_\lambda(\theta)$ for different 
$\theta_1$ and $\theta_2$ in $D_{\lambda,b,c,d}$. For the convenience, let us define
\begin{equation}
g_i \equiv 1 - a(x)\Bigl(\theta_i(x) x^2 + 1\Bigr)^2,
\end{equation}
where the index $i$ takes $1$ or $2$. Since we have $\theta_i(x) < \theta_S(x)$ from the definition (\ref{DD}) and 
equation (\ref{locationofS}) shows us $1 - a(x)(\theta_S(x) x^2 + 1)^2 = 0$, $g_i$ is always positive. Then, we obtain
\begin{equation}\label{ineq1}
\begin{split}
&|\lambda f(x,\theta_1(x);\lambda) - \lambda f(x,\theta_2(x);\lambda)| \\
&= \left| 1 +  \frac{x\frac{d}{dx}a(x)}{2a(x)} \left( 1 + \frac{1}{\sqrt{g_1 g_2}}\right)\right|\frac{\left| \sqrt{g_1}-\sqrt{g_2}\right| }{x^2\sqrt{\gamma \left( 1-\frac{a(x)x^2}{\gamma} \right) }}  \\
& \leq \left\{ 1+ \left| \frac{x\frac{d}{dx}a(x)}{2a(x)} \right| \left( 1 + \frac{1}{\sqrt{g_1g_2}} \right) \right\}  \left( \frac{1}{\sqrt{g_1} + \sqrt{g_2}} \right) \frac{|g_1 - g_2|}{x^2\sqrt{\gamma \left( 1-\frac{a(x)x^2}{\gamma} \right) }} \\
& \leq \left\{ 1- \frac{x\frac{d}{dx}a(x)}{2a(x)}  \left( 1 + \frac{1}{1-a(x)\{(\lambda+bx^c)x^2 +1\}} \right) \right\}\\
&\;\;\;\;\;\;\;\;\;\;\;\;\;\;\;\;\;\;\;\;\;\;\;\;\;\;\frac{a(x) \{x^4(\lambda + bx^c) + x^2 \}}{\sqrt{1-a(x)\{(\lambda+bx^c)x^2+1\}^2}}\frac{|\theta_1(x)-\theta_2(x)|}{x^2\sqrt{\gamma \left( 1-\frac{a(x)x^2}{\gamma} \right) }}\\
&= \left(\frac{4\lambda}{1-4\lambda}x^{-1}  + B_1(x)x^{\delta - 1} \right) |\theta_1(x) - \theta_2(x)|,
\end{split}
\end{equation}
where $\delta = \min(1,c)$ and $B_1(x)\in C^0[0,d]$ is a positive function which dose not depend on $d$, $\theta_1$ and $\theta_2$, and we used equations (\ref{monoa}), 
(\ref{lambda}) and the fact
\begin{equation}
\begin{split}
|g_1- g_2| &= |a(x)| |(\theta_1(x)x^2 + 1)^2 - (\theta_2(x)x^2 + 1)^2| \\
&\leq 2 |a(x)| | x^4(\lambda + b x^c) + x^2 | | \theta_1(x) - \theta_2(x) |
\end{split}
\end{equation}
(See Appendix A for the details). Using (\ref{ineq1}), we obtain
\begin{eqnarray}\label{differenceofTT}
\|T_{\lambda}(\theta_1) - T_{\lambda}(\theta_2) \| &=& \sup_{0\leq x \leq d} \left| x \int_0^1 v^2\Bigl\{ \lambda f(vx,\theta_1(vx);\lambda) - \lambda f(vx,\theta_2(vx);\lambda) \Bigr\}dv \right| \nonumber\\
&\leq& \sup_{0\leq x \leq d} \frac{1}{x^2} \int_0^x y^2\left(\frac{4\lambda}{1-4\lambda}y^{-1}  + B_1(y)y^{\delta - 1} \right) |\theta_1(y) - \theta_2(y)| dy\nonumber\\
&\leq& \| \theta_1- \theta_2\| \sup_{0\leq x \leq d} \frac{1}{x^2} \int_0^x \left(\frac{4\lambda}{1-4\lambda}y  + B_1(y)y^{\delta + 1} \right) dy\nonumber\\
&=&\| \theta_1- \theta_2\| \left(\frac{2\lambda}{1-4\lambda}  + \sup_{0\leq x \leq d} \frac{1}{x^2} \int_0^x B_1(y)y^{\delta + 1} dy\right).
\end{eqnarray}
The integral of the right-hand side of inequality is finite because $B_1 \in C^0[0,d]$ and $\delta > 0$. 
Therefore, $T_{\lambda}:D_{\lambda,b,c,d} \to D_{\lambda,b,c,d} $ is uniformly continuous. 
\end{proof}

Moreover, for specific initial condition, $T_{\lambda}$ becomes a contraction mapping. In this case, as below, we can 
immediately show the existence of solution to equation (\ref{DT}).

\begin{theorem}
For all $\lambda < \frac{1}{6}$, there exist $d \in (0,l ] $($l$ corresponds to the surface of dust cloud) 
and a unique solution $\theta \in C^{\infty}(0,d]$ 
to the integral equation (\ref{DT}), which is continuous at $x = 0$ and satisfies $\theta(0) = \lambda$.
\end{theorem}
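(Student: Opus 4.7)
The plan is to apply the Banach contraction mapping theorem to the nonlinear operator $T_\lambda$ on the closed set $D_{\lambda,b,c,d}\subset C^0[0,d]$. Since $D_{\lambda,b,c,d}$ is a nonempty, closed (hence complete) subset of a Banach space, existence and uniqueness of a fixed point reduce to the two ingredients (i) the self-mapping $T_\lambda(D_{\lambda,b,c,d})\subseteq D_{\lambda,b,c,d}$ and (ii) the strict contraction property.

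First I would check that the hypothesis $\lambda<1/6$ is strictly stronger than the hypothesis of Lemma 1: since $1/6<(9-\sqrt{33})/16\approx 0.203$, Lemma 1 supplies an exponent $c\in[c(\lambda),1)$ and a scale $\bar d\in(0,l]$ for which $T_\lambda$ preserves $D_{\lambda,b,c,d}$ whenever $d\in(0,\bar d]$. This secures (i).

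For (ii) I would reuse the Lipschitz estimate (\ref{differenceofTT}) established in the proof of Lemma 2,
\[
\|T_\lambda(\theta_1)-T_\lambda(\theta_2)\|\leq\left(\frac{2\lambda}{1-4\lambda}+\sup_{0\leq x\leq d}\frac{1}{x^2}\int_0^x B_1(y)\,y^{\delta+1}\,dy\right)\|\theta_1-\theta_2\|.
\]
The decisive algebraic point is that $2\lambda/(1-4\lambda)<1$ is equivalent to $\lambda<1/6$. Because $B_1$ is continuous on $[0,d]$ and $\delta>0$, the residual term is $O(d^\delta)$ and tends to $0$ as $d\to 0$. Shrinking $d$ below some $d^\ast\leq\bar d$ therefore drives the full bracket strictly below $1$, making $T_\lambda$ a contraction on $D_{\lambda,b,c,d^\ast}$. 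Banach's theorem then yields a unique fixed point $\theta$, which by construction solves the integral equation (\ref{DT}).

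It remains to address the boundary value and regularity. Continuity at $x=0$ with $\theta(0)=\lambda$ is immediate from membership in $D_{\lambda,b,c,d^\ast}$, via the bound $|\theta(x)-\lambda|\leq bx^c\to 0$. For $\theta\in C^\infty(0,d^\ast]$, I would observe that on the open region $\{x>0,\ \theta<\theta_S(x)\}$ the right-hand side of the ODE (\ref{nulleq}) is a smooth function of $(x,\theta)$—the square-root denominators are bounded away from zero there and $a\in C^\infty$—so a standard bootstrap through the integral equation promotes $\theta$ successively to $C^1$, $C^2,\ldots$ on $(0,d^\ast]$. The step I expect to be the mildest obstacle is verifying that the implicit constant $B_1(x)$ in (\ref{differenceofTT}) is genuinely uniform in $\theta_1,\theta_2\in D_{\lambda,b,c,d}$ and independent of $d$; inspecting the derivation of (\ref{ineq1}) shows that $B_1$ depends only on the ambient data $a(\cdot)$, $\lambda$, $b$, $c$, so the contraction factor really does tend to $2\lambda/(1-4\lambda)$ as $d\to 0$, which closes the argument.
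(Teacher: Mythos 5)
Your proposal is correct and follows essentially the same route as the paper: self-mapping from Lemma 1 (valid since $1/6<(9-\sqrt{33})/16$), the Lipschitz bound (\ref{differenceofTT}) from Lemma 2 with the residual term being $O(d^\delta)$, the observation that $2\lambda/(1-4\lambda)<1$ is equivalent to $\lambda<1/6$, and then the Banach fixed-point theorem, with continuity at $x=0$ from the domain $D_{\lambda,b,c,d}$ and smoothness on $(0,d]$ from smoothness of the integrand away from the singularity. No substantive differences from the paper's argument.
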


\begin{proof}
The last term of (\ref{differenceofTT}) is estimated as
\begin{equation}
\frac{1}{x^2} \int_0^x B_1(y)y^{\delta + 1} dy = O(x^\delta).
\end{equation}
Here, if
\begin{equation}\label{Ccontractive}
\frac{2\lambda}{1-4\lambda} < 1,
\end{equation}
we can choose sufficient small $d$ such that
\begin{equation}
\frac{2\lambda}{1-4\lambda}  + \sup_{0\leq x \leq d} \frac{1}{x^2} \int_0^x B_1(y)y^{\delta + 1} dy < 1,
\end{equation}
that is, $T_{\lambda}:D_{\lambda,b,c,d} \to D_{\lambda,b,c,d} $ is contractive. Thus, by the fixed-point theorem for 
contraction mapping \cite{banach1922operations}, $T_{\lambda}$ has a unique fixed point $\theta \in D_{\lambda,b,c,d}$. 
Note that the condition (\ref{Ccontractive}) is equivalent to $\lambda < \frac{1}{6}$. 
\end{proof}

Since the integrand of 
right-hand side of equation (\ref{DT}) is $C^{\infty}$ function in the region except for the singularity, the solution $\theta$ must be $C^{\infty}$ function in $(0,d]$.
On the other hand, the solution $\theta(x) \in C^0[0,d]$ is not differentiable at $x = 0$ in general. 
However, $\zeta(x) = x^2\theta(x)$ is in $C^1[0,d]$ because $\zeta(x) \in C^\infty (0,d]$ and $\frac{d}{dx}\zeta(x)$ is 
$O(x)$ from (\ref{nulleq_1}). This means that the differential equation of null line (\ref{nulleq_1}) has a solution 
$\zeta(x) \in C^1[0,d]$ which is a future directed outgoing null geodesic emanating from the central singularity for 
all $\lambda < \frac{1}{6}$.

For $\gamma > 24$, the equation (\ref{lambda}) tells us $\lambda _-(\gamma) < \frac{1}{6}$. Then, in this case, 
there exists a null line emanating from the central singularity, that is, it is a locally naked singularity at least.

\subsection{A proof of existence of the null geodesics}

In the case of $24 \leq \gamma \leq \gamma_{min}$, that is, $\frac{1}{6} \leq \lambda _-(\gamma) $ holds, 
we can not use Theorem 2 and show the existence of solution to equation (\ref{DT}). Then we have to develop 
another method. As we already mentioned, we can show the existence of solution to equation (\ref{DT}) by 
using of Schauder fixed-point theorem.

\begin{theorem}
For all $\lambda < \frac{9-\sqrt{33}}{16}$, there exist $d \in (0,l]$ and a solution $\theta \in C^{\infty}(0,d]$ 
to the integral equation (\ref{DT}), which is continuous at $x = 0$ and satisfies $\theta(0) = \lambda$. 
\end{theorem}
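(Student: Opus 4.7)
The plan is to invoke the Schauder fixed-point theorem with $X=C^0[0,d]$, $D=D_{\lambda,b,c,d}$, and $T=T_\lambda$. Lemmas 1 and 2 already supply the two nontrivial ingredients: for any $\lambda<\frac{9-\sqrt{33}}{16}$ one may choose $c\in[c(\lambda),1)$ and $d\in(0,\bar d]$ so that $T_\lambda$ sends $D_{\lambda,b,c,d}$ into itself and is uniformly continuous on it. So what remains is to check that $D_{\lambda,b,c,d}$ is a nonempty, closed, bounded, convex subset of the Banach space $C^0[0,d]$ and that $T_\lambda(D_{\lambda,b,c,d})$ is relatively compact, so that $T_\lambda$ qualifies as a compact operator in the sense defined above.

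Nonemptiness is immediate (take $\theta\equiv\lambda$); convexity follows because the defining bound $|\theta(x)-\lambda|\leq bx^c$ is preserved under convex combinations; closedness follows because this pointwise inequality passes to uniform limits; boundedness follows from $\|\theta\|\leq\lambda+bd^c$. For the relative compactness of the image I will appeal to the Arzel\`a--Ascoli theorem. Uniform boundedness of $T_\lambda(D_{\lambda,b,c,d})$ is inherited from $T_\lambda(D_{\lambda,b,c,d})\subset D_{\lambda,b,c,d}$, so the only nontrivial content is equicontinuity.

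To establish equicontinuity I will turn the integral form (\ref{DT}) into a differential one. Changing variables $u=vx$ yields $T_\lambda(\theta)(x)-\lambda=\frac{1}{x^2}\int_0^x u^2\,\lambda f(u,\theta(u);\lambda)\,du$, and differentiating in $x$ gives
\begin{equation*}
\frac{d}{dx}T_\lambda(\theta)(x)=-\frac{2}{x}\bigl(T_\lambda(\theta)(x)-\lambda\bigr)+\lambda f(x,\theta(x);\lambda)\qquad(x\in(0,d]).
\end{equation*}
Using $|T_\lambda(\theta)-\lambda|\leq bx^c$ on the first term and the pointwise estimate (\ref{ineq0}) on the second yields a uniform bound of the form $\bigl|\frac{d}{dx}T_\lambda(\theta)(x)\bigr|\leq C\,x^{c-1}$, where $C$ depends on $\lambda,b,c,d$ but not on $\theta$. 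Integrating produces a uniform H\"older modulus $|T_\lambda(\theta)(y)-T_\lambda(\theta)(x)|\leq\frac{C}{c}|y^c-x^c|$ for all $\theta\in D_{\lambda,b,c,d}$, which is the required equicontinuity.

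The Schauder theorem then supplies a fixed point $\theta\in D_{\lambda,b,c,d}$ of $T_\lambda$, i.e.\ a solution to (\ref{DT}). Membership in $D_{\lambda,b,c,d}$ forces $\theta(0)=\lambda$ together with continuity at $x=0$; on $(0,d]$ the integrand in the definition of $T_\lambda$ is a $C^\infty$ function of its arguments, so a standard bootstrap (as in the remarks following Theorem 2) lifts the regularity to $\theta\in C^\infty(0,d]$. I expect the main obstacle to be the equicontinuity step: a direct pointwise bound on the integrand leaves a non-integrable $x^{c-1}$ singularity and does not give a $\theta$-uniform modulus, and the key idea is to recognize that $T_\lambda(\theta)$ itself satisfies the first-order linear identity displayed above, so that the $\frac{2}{x}$ pole is absorbed against the constant $\lambda$ and yields the desired $\theta$-uniform H\"older bound.
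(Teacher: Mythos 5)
Your proposal is correct, and its overall skeleton is exactly the paper's: Schauder on $D_{\lambda,b,c,d}\subset C^0[0,d]$, with Lemma 1 giving the self-map, Lemma 2 giving continuity, the routine check of nonempty/closed/bounded/convex, and Arzel\`a--Ascoli for relative compactness of the image. Where you genuinely diverge is the one nontrivial new step in the paper's proof, equicontinuity. The paper bounds $|T_\lambda(\theta)(x)-T_\lambda(\theta)(y)|$ directly by splitting it as $\bigl(\tfrac{1}{x^2}-\tfrac{1}{y^2}\bigr)\int_0^x+\tfrac{1}{y^2}\int_x^y$, feeding in (\ref{ineq0}), and then massaging the result (estimate (\ref{Tineq}) in Appendix A) into terms like $h(x)|x^c-y^c|$, $y^c|h(x)-h(y)|$ and $2^{n+1}h(x)y^{c-1/2^{n-1}}\bigl|x^{1/2^{n-1}}-y^{1/2^{n-1}}\bigr|$, concluding via uniform continuity of those functions on a compact set. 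You instead observe that $T_\lambda(\theta)$ satisfies the first-order identity $\tfrac{d}{dx}T_\lambda(\theta)=-\tfrac{2}{x}\bigl(T_\lambda(\theta)-\lambda\bigr)+\lambda f(x,\theta(x);\lambda)$ on $(0,d]$ (legitimate: the integrand is continuous there and integrable at $0$, so the fundamental theorem of calculus applies), and then Lemma 1's bound plus (\ref{ineq0}) give the $\theta$-independent derivative bound $Cx^{c-1}$ (the $O(1)$ and $O(x^{2c-1})$ terms are $\theta$-independent, and $x^{2c-1}\le d^c x^{c-1}$), hence the uniform H\"older modulus $\tfrac{C}{c}|y^c-x^c|\le\tfrac{C}{c}|y-x|^c$, extended to the endpoint $x=0$ by continuity of $T_\lambda(\theta)$. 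This is a cleaner and slightly stronger argument: it avoids the auxiliary function $h$ and the iterated-root trick with the parameter $n$ entirely, and yields an explicit uniform H\"older-$c$ modulus rather than bare equicontinuity; the paper's route, on the other hand, stays purely at the level of the integral operator without invoking differentiability of $T_\lambda(\theta)$. The remaining steps of your argument (fixed point, $\theta(0)=\lambda$ and continuity at $0$ from membership in $D_{\lambda,b,c,d}$, $C^\infty$ regularity on $(0,d]$ by bootstrap) match the paper.
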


\begin{proof}
From Schauder fixed-point theorem, if $D_{\lambda,b,c,d}$ is a nonempty, closed, bounded, convex subset of a Banach space, 
and $T_\lambda$ maps $D_{\lambda,b,c,d}$ into itself and is compact operator, then $T_\lambda$ has a fixed point. 
We already showed in Lemma 1 that  we can take certain numbers $d\in(0,l]$ and $c \in (0,1)$ such that $T_\lambda$ maps $D_{\lambda,b,c,d}$ 
into itself for all $\lambda < \frac{9-\sqrt{33}}{16}$. Moreover, Lemma 2 tells us that $T_\lambda$ is continuous. By the 
definition (\ref{DD}), $D_{\lambda,b,c,d}$ is obviously nonempty, closed, bounded subset of a Banach space $C^0[0,d]$. Furthermore, for all 
$\theta_1, \theta_2 \in D_{\lambda,b,c,d}$ and $0 < \kappa < 1$,
\begin{equation}
\lambda - bx^c \leq \kappa\theta_1(x) + (1-\kappa)\theta_2(x) \leq \lambda + bx^c.
\end{equation}
This means that $D_{\lambda,b,c,d}$ is convex. Then all we have to show is that $T_\lambda(D_{\lambda,b,c,d})$ is relatively compact set. 
By virtue of Arzel\`a-Ascoli theorem, the remaining task is to show uniform boundedness and equicontinuity of $T_\lambda(D_{\lambda,b,c,d})$. 
Uniform boundedness results from boundedness of $D_{\lambda,b,c,d}$ as follows.
\begin{equation}
\begin{split}
\sup_{T_\lambda(\theta) \in T_\lambda(D_{\lambda,b,c,d})}\: (\:\sup_{x\in [0,d]} \:|T_\lambda(\theta)(x)|\:) &\leq \sup_{T_\lambda(\theta) \in T_\lambda(D_{\lambda,b,c,d})}\: (\:\sup_{x\in [0,d]} \: \lambda + bx^c\:)\\
&=\sup_{T_\lambda(\theta) \in T_\lambda(D_{\lambda,b,c,d})} (\lambda + bd^c)\\
&=\lambda + bd^c.
\end{split}
\end{equation}

Next, to show equicontinuity, we evaluate
\begin{equation}
\begin{split}
|T_\lambda(\theta)(x)-T_\lambda(\theta)(y)| &= \left| \frac{1}{x^2} \int^x_0 s^2\lambda f(s,\theta(s);\lambda) ds - \frac{1}{y^2} \int^y_0 s^2\lambda f(s,\theta(s);\lambda) ds\right|\\
&=\left| \left(\frac{1}{x^2} -\frac{1}{y^2} \right)\int^x_0 s^2\lambda f(s,\theta(s);\lambda) ds - \frac{1}{y^2} \int^y_x s^2\lambda f(s,\theta(s);\lambda) ds\right|.
\end{split}
\end{equation}
Here note that we can assume $x<y$ without loss of generality. Then, using equation (\ref{ineq0}),
\begin{equation}\label{Tineq}
\begin{split}
|T_\lambda(\theta)(x)-T_\lambda(\theta)(y)| &\leq \left(\frac{1}{x^2} -\frac{1}{y^2} \right)\int^x_0 s^2|\lambda f(s,\theta(s);\lambda)| ds +\frac{1}{y^2} \int^y_x s^2|\lambda f(s,\theta(s);\lambda)| ds\\
%&\leq \left(\frac{1}{x^2} -\frac{1}{y^2} \right)\int^x_0 s^2 \left\{\frac{ 2\lambda }{\sqrt{\gamma} \left( \frac{1}{2} - 2\lambda \right)^{\frac{3}{2}}}bs^{c-1}  + O(1) + O(s^{2c-1}) \right\}ds\\
%&+\frac{1}{y^2}\int^y_x s^2 \left\{\frac{ 2\lambda }{\sqrt{\gamma} \left( \frac{1}{2} - 2\lambda \right)^{\frac{3}{2}}}bs^{c-1}  + O(1) + O(s^{2c-1}\right\}ds\\
%&= \left(\frac{1}{x^2} -\frac{1}{y^2} \right)h(x)x^{2+c} + \frac{1}{y^2}(h(y)y^{2+c}-h(x)x^{2+c})\\
%&\leq h(x)|x^c-y^c|+y^c|h(y)-h(x)|+\frac{2h(x)}{y^2} |y^{2+c}-x^{2+c}|\\
%&< h(x)|x^c-y^c|+y^c|h(y)-h(x)|+2h(x)y^c \left| \frac{y^2-x^2}{y^2} \right|\\
&\leq h(x)|x^c-y^c|+y^c|h(x)-h(y)|+2^{n+1}h(x)y^{c-\frac{1}{2^{n-1}}} \left| x^{\frac{1}{2^{n-1}}}-y^{\frac{1}{2^{n-1}}} \right|,
\end{split}
\end{equation}
where $h(x)$ is a $C^0$ positive function in the range $[0,d]$, which dose not depend on $\theta$, and $n$ is an arbitrary natural number (the details of the calculation is in Appendix A). 
So we take $n$ such that 
\begin{equation}
c-\frac{1}{2^{n-1}} > 0.
\end{equation}
Then, since $h(x)$, $y^c$ and $2^{n+1}h(x)y^{c-\frac{1}{2^{n-1}}}$ are continuous functions for $(x,y)$ in $[0,d]\times[0,d]$, 
there exist real numbers $\Delta_1$, $\Delta_2$ and $\Delta_3$ such that
\begin{equation}
|T_\lambda(\theta)(x)-T_\lambda(\theta)(y)|  < \Delta_1 |x^c-y^c|+\Delta_2 |h(x)-h(y)| + \Delta_3\left| x^{\frac{1}{2^{n-1}}}-y^{\frac{1}{2^{n-1}}} \right|.
\end{equation}
Since all continuous functions on compact set is uniformly continuous, for all $\epsilon > 0$, there exist $\delta_1 >0$, $\delta_2>0$ and 
$\delta_3>0$ which are independent of $x$ and $y$, and $\Delta_1|x^c-y^c|<\frac{\epsilon}{3}$, $ \Delta_2|h(x)-h(y)|<\frac{\epsilon}{3}$, 
$\Delta_3\left| x^{\frac{1}{2^{n-1}}}-y^{\frac{1}{2^{n-1}}} \right|<\frac{\epsilon}{3}$ hold whenever $|x-y| < \delta_1$, 
$|x-y| < \delta_2$, $|x-y| < \delta_3$ respectively. Thus, $ |T_\lambda(\theta)(x)-T_\lambda(\theta)(y)|  < \epsilon$ holds whenever 
$|x-y| < \delta \equiv \min\{  \delta_1, \delta_2, \delta_3\}$ holds. This means that $T_\lambda(D_{\lambda,b,c,d})$ has equicontinuity. 
Therefore, $T_\lambda(D_{\lambda,b,c,d})$ is relatively compact set. Since any closed set included in a compact set is also compact, $T$ maps any bounded set into relatively compact set. Thus $T$ is compact operator.
\end{proof}

In the same way as the discussion below theorem 2, theorem 3 means that the differential equation for null line (\ref{nulleq_1}) has a solution $\zeta \in C^1[0,d]$ 
which is a future directed outgoing null geodesic emanating from the central singularity for all $\lambda < \frac{9-\sqrt{33}}{16}$. Thus, in the following, if the function is a solution to equation (\ref{nulleq}) converge to finite value as $x \to 0$, we consider the function as the solution to equation (\ref{nulleq}), which is also defined at $x=0$.
Note that the solution found in Theorem 2 is unique, but it is not necessary that the solution found in Thoerem 3 is unique.

If $\lambda$ satisfying equation (\ref{lambda}) exists, $\lambda_-(\gamma)$ always satisfies $\lambda_-(\gamma) \leq\lambda_{M} < \frac{9-\sqrt{33}}{16}$. 
From theorem 3, this fact means that $T_{\lambda_-(\gamma)}$ has a fixed point which converge to $\lambda_-(\gamma)$ as $x \to 0$, that is, there exists 
a null geodesic emanating from the central singularity. Thus, the central singularity must be locally naked at least if $\lambda$ satisfying 
equation (\ref{lambda}) exists.

\section{Spacetime structure around the singularity}

In this section, we show the existence of the earliest null geodesic $\theta_{n_0}$ emanating from the central singularity for 
all $\gamma \geq \gamma_{min}={\sqrt {11+5\sqrt {5}}}$. Since such null geodesic determines the causal structure around the naked singularity and the 
global nakedness of the singularity, $\theta_{n_0}$ plays an important role in our analysis. On the other hand, for 
$\gamma < \gamma_{min}$, we also show that there is no causal geodesic emanating from the central singularity. 

In the case of four dimension, we can show that $g_{rr}$ of LTB spacetime is strictly monotonically decreasing function with respect to 
$t$ near the singularity. Using this nature, we can immediately specify $\theta_{n_0}$ (see Ref. \cite{Christodoulou:1984mz}). 
By contrast, in the case of five dimension, $g_{rr}$ becomes a monotonically increasing function with respect to $t$ near the singularity. Therefore, 
we need to develop an another method to specify $\theta_{n_0}$, which is general to some extent.

At first, we show that any future directed null geodesic along the outer radial direction can not emanate from the central singularity located at $\theta <\lambda_-(\gamma)$.

\begin{lemma}
If $\lambda$ satisfying equation (\ref{lambda}) exists and a future directed null geodesic along the outer radial direction, 
$\theta(x)$, converges as $x \to 0$, then $\theta(0) \geq \lambda_-(\gamma)$ holds. On the other hand, if $\lambda$ satisfying equation 
(\ref{lambda}) does not exist, any future directed outgoing causal line, $\theta_c(x)$, satisfies $\theta_c(x) \to -\infty$ as $x \to 0$.
\end{lemma}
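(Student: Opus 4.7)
The plan is to multiply equation~(\ref{nulleq_2}) by $x$ and pass to the limit $x\to 0$. Using the computation in~(\ref{limf}) (valid pointwise for $\theta$ bounded and strictly below $\tfrac14$), one obtains for a null geodesic
\begin{equation*}
   x\theta'(x) = h(\theta(x)) + O(x),\qquad h(\theta) \equiv \frac{1-2\theta}{\sqrt{\gamma(\tfrac{1}{2}-2\theta)}}-2\theta,
\end{equation*}
while for an outgoing causal line the same reduction, starting from $dt/dr \geq e^{\omega}$, gives the inequality ``$\geq$'' in place of equality. By the very definition in~(\ref{lambda}), the zeros of $h$ on $(-\infty,\tfrac{1}{4})$ are precisely $\lambda_{\pm}(\gamma)$ when they exist; when no $\lambda$ exists (that is, $\gamma<\gamma_{min}$), $h$ is continuous, strictly positive, and blows up at both endpoints of $(-\infty,\tfrac14)$, so $h\geq h_{min}(\gamma)>0$ uniformly on any compact sub-interval.

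For the first claim, assume $\theta(x)\to\theta_0$ finite. Then $x\theta'(x)\to h(\theta_0)$, and if $h(\theta_0)\neq 0$ integrating $\theta'(x)\sim h(\theta_0)/x$ from $x_0>0$ down to $x$ produces $\theta(x)\sim h(\theta_0)\log x$, which diverges --- a contradiction. Hence $h(\theta_0)=0$, i.e.\ $\theta_0\in\{\lambda_-(\gamma),\lambda_+(\gamma)\}$, so in particular $\theta(0)=\theta_0\geq\lambda_-(\gamma)$.

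For the second claim, let $\gamma<\gamma_{min}$ and let $\theta_c$ be any outgoing future-directed causal line. The causal inequality $x\theta_c'(x)\geq h(\theta_c(x))+O(x)$ together with $h\geq h_{min}(\gamma)/2>0$ on any compact sub-interval of $(-\infty,\tfrac14)$ forces $\theta_c'>0$ for all small $x$ on which $\theta_c$ remains in such a compact set. Thus $\theta_c$ is strictly monotone in $x$ near $0$ and admits a limit in $[-\infty,+\infty]$. This limit cannot be $+\infty$ by the bound $\theta_c(x)\leq\theta_S(x)=\tfrac14+O(x)$; it cannot be a finite value $\theta_0$, since the argument of the first claim applied to the inequality would require $h(\theta_0)=0$, of which there is none; and it cannot be $\tfrac14$, for then $h(\theta_c(x))\to+\infty$ would force $\theta_c'(x)$ to grow so rapidly that integration gives $\theta_c(x)\to-\infty$, contradicting convergence to $\tfrac14$. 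Only $\lim_{x\to 0^+}\theta_c(x)=-\infty$ remains.

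The main obstacle is ensuring that the $O(x)$ remainder and the monotonicity of $\theta_c$ are controlled uniformly across the whole relevant range of $\theta$, and in particular near the critical value $\theta=\tfrac14$ where the square root in~(\ref{nulleq_2}) degenerates and the expansion underlying~(\ref{limf}) must be handled more carefully. Combining a refined expansion of $1-a(x)(\theta x^2+1)^2$ with the strict inequality $\theta_c<\theta_S$ yields uniform control sufficient to close the monotonicity argument, after which the sign of $h$ alone determines the asymptotic behavior of $\theta_c$ as $x\to 0$.
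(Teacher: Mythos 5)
Your proof of the first claim is essentially the paper's own argument: you compare the right-hand side with the function $g(\theta;\gamma)$ (your $h$), note it is strictly positive when $\theta(0)<\lambda_-(\gamma)$, and use the logarithmic divergence of $\int dx/x$ to contradict boundedness. (Your stronger conclusion that a finite limit must be a root of $h$ needs a caveat at the degenerate value $\theta_0=\tfrac14$, where $h$ is undefined, but this does not affect the inequality $\theta(0)\geq\lambda_-(\gamma)$.) Working directly with the causal inequality $x\theta_c'\geq h(\theta_c)+O(x)$ instead of first treating radial null geodesics and then invoking the ``earliest curve'' comparison, as the paper does, is a harmless variation.

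For the second claim, however, there is a genuine gap. Every step of your argument --- the uniform positivity bound $h\geq h_{min}/2$, the monotonicity of $\theta_c$, the exclusion of a finite limit, and the exclusion of the limit $\tfrac14$ --- is valid only while $\theta_c$ stays in a region where the expansion behind (\ref{limf}) has a uniformly controlled remainder, i.e.\ where $\tfrac12-2\theta+O(x)$ is bounded away from zero. The only a priori upper bound you invoke is $\theta_c(x)\leq\theta_S(x)=\tfrac14+O(x)$, which does not keep the curve out of the degenerate strip just below the singular curve: there $1-a(x)(\theta x^2+1)^2$ can be arbitrarily small, $h(\theta_c)$ may not even be defined (the curve can sit above $\tfrac14$ while still below $\theta_S(x)$), and your ``$O(x)$'' is not uniform. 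You explicitly flag this as ``the main obstacle'' and assert that a ``refined expansion yields uniform control,'' but you never carry it out, so the monotonicity/limit-existence step and the treatment of a putative limit at $\tfrac14$ remain unproven. The paper closes exactly this hole with a geometric input you do not use: a future-directed curve that extends to $x\to0$ cannot emerge from the trapped region, hence must satisfy $\theta<\theta_{AH}(x;\gamma)$ near $x=0$; since $\theta_{AH}(0;\gamma)=\tfrac14-\tfrac{1}{2\gamma}$ by (\ref{locationofAH}), this yields a uniform bound $\theta(x)<\tfrac14-\mu$ on some $[0,d_1]$, after which $\inf_{\theta<1/4}g(\theta;\gamma)>0$ (available precisely because no $\lambda$ solves (\ref{lambda})) and the $1/x$ divergence finish the proof with no analysis near $\theta\approx\tfrac14$ at all. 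Supplying that apparent-horizon bound, or an honest uniform estimate near $\theta_S$, is what your proposal is missing.
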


\begin{proof}
Let us assume that a solution to equation (\ref{nulleq}), $\theta(x)$, is a $C^0$ function in the range $(0,d]$ and is a 
bounded function. Bearing equation (\ref{limf}) in mind, equation (\ref{nulleq}) for $\theta(x)$ can be written as
\begin{equation}\label{Lemma3_1}
\begin{split}
\frac{d\theta(x)}{dx} &=  \frac{   1 - a(x)(\theta(x) x^2 + 1)^2 - \frac{x}{2}(\theta(x) x^2+1)^2\frac{d}{dx}a(x)    }{\sqrt{\gamma \left( 1-\frac{a(x)x^2}{\gamma} \right) \left( 1-a(x)(\theta(x) x^2+1)^2 \right) }} - \frac{2\theta(x)}{x} \\
&= \frac{ 1- 2\theta(x) + O(x) }{ x\sqrt {\gamma \left( \frac{1}{2} - 2\theta(x) +O(x) \right)}   } - \frac{2\theta(x)}{x}.
\end{split}
\end{equation}
Moreover, if the solution $\theta(x)$ converges as $x \to 0$, then, for all $\epsilon > 0$, there exists $d_0 > 0$ such that
\begin{equation}\label{Lemma3_2}
\begin{split}
\frac{ 1- 2\theta(x) + O(x) }{ x\sqrt {\gamma \left( \frac{1}{2} - 2\theta(x) +O(x) \right)}   } - \frac{2\theta(x)}{x} &> \frac{ 1- 2\theta(0)}{ x\sqrt {\gamma \left( \frac{1}{2} - 2\theta(0) \right)}   } - \frac{2\theta(0)}{x} - \frac{\epsilon}{x}\\
&\equiv \frac{g(\theta(0);\gamma)}{x} - \frac{\epsilon}{x}
\end{split}
\end{equation}
for arbitrary $x \in (0, d_0]$, where
\begin{equation}\label{Dg}
g(\theta;\gamma) \equiv \frac{1-2\theta}{\sqrt{\gamma(\frac{1}{2}-2\theta)}}-2\theta.
\end{equation}
In the case that $\lambda$ satisfying equation (\ref{lambda}) exists, $g(\theta;\gamma)$ satisfies
\begin{numcases}
{}
g(\lambda_\pm(\gamma);\gamma) = 0 & \nonumber \\
g(\theta;\gamma) < 0 & $( \lambda_-(\gamma) < \theta < \lambda_+(\gamma) )$ \nonumber \\
g(\theta;\gamma) > 0  &  $( \theta < \lambda_-(\gamma), \lambda_+(\gamma) < \theta < \frac{1}{4})$ .
\end{numcases}
In the case that $\lambda_\pm(\gamma)$ does not exist, for all $\theta < \frac{1}{4}$,
\begin{equation}\label{Lemma3_3}
g(\theta;\gamma) > 0
\end{equation}
and $\inf_{\theta <\frac{1}{4}} g(\theta;\gamma) > 0$ hold.

Now we assume that  $\lambda$ satisfying equation (\ref{lambda}) exists and a solution to equation (\ref{nulleq}), 
$\theta(x)$, satisfies $\theta(0) < \lambda_-(\gamma)$. In this case, we can choose $\epsilon$ such that
\begin{equation}\label{Lemma3_4}
g(\theta(0);\gamma)- \epsilon = \kappa_0^2,
\end{equation}
where $\kappa_0$ is a real number. Then, if the solution $\theta(x)$ converges as $x \to 0$, 
from equations (\ref{Lemma3_1}), (\ref{Lemma3_2}) and (\ref{Lemma3_4}), there exists $d_0 > 0$ such that
\begin{equation}
\frac{d\theta(x)}{dx} > \frac{\kappa_0^2}{x}
\end{equation}
for arbitrary $x \in (0, d_0]$. Integrating this equation, we obtain
\begin{equation}
\theta(x) - \theta(0) > \lim_{x_0 \to 0}\kappa_0^2 \Big[ \ln{x} \Big]^x_{x_0}.
\end{equation}
In the above, the right-hand side diverges. It contradicts the assumption that $\theta(x)$ is bounded. 
Thus, $\theta(0) \geq \lambda_-(\gamma)$, that is, any future directed null geodesic along the outer radial 
direction does not converge to the central singularity that satisfies $\theta < \lambda_-(\gamma)$. 

On the other hand, in the case that $\lambda$ satisfying equation (\ref{lambda}) does not exist, 
equations (\ref{Lemma3_1}) and (\ref{Lemma3_3}) hold for any bounded solution $\theta(x)$ satisfying $\theta (x)< \frac{1}{4}$. In addition, 
since any future directed null geodesics can not emerge from the apparent horizon determined by 
equation (\ref{locationofAH}), any solution that can extend $x \to 0$ must not enter the region 
$\theta > \theta_{AH}(x;\gamma)$ as $x \to 0$. Then, there exist $d_1 > 0$ and $\mu > 0$ such that 
the solution satisfies $\theta(x) < \frac{1}{4} - \mu$ for all $x \in [0,d_1]$ because $\theta_{AH} (x;\gamma)$ is continuous and $\theta_{AH} (0;\gamma)= \frac{1}{4} - \frac{1}{2\gamma}$ holds. In this region, $g(\theta;\lambda)$ 
 is defined. Then, for all $\epsilon > 0$, there exists $d_2$ satisfying $d_1 \geq d_2 > 0$ such that 
\begin{equation}
\begin{split}
\frac{d\theta(x)}{dx}=\frac{ 1- 2\theta(x) + O(x) }{ x\sqrt {\gamma \left( \frac{1}{2} - 2\theta(x) +O(x) \right)}}
-\frac{2\theta(x)}{x} &> \frac{g(\theta(x);\gamma)}{x} - \frac{\epsilon}{x}\\
& \geq \frac{\inf_{\theta <\frac{1}{4}} g(\theta;\gamma)}{x} - \frac{\epsilon}{x}
\end{split}
\end{equation}
for arbitrary $x \in (0, d_2]$. Note that we do not assume the solution $\theta(x)$ converges as $x \to 0$ here. Since $\lambda$ satisfying equation (\ref{lambda}) does not exist, 
$\inf_{\theta <\frac{1}{4}} g(\theta;\gamma) > 0$. Then we can choose $\epsilon$ such as
\begin{equation}
\inf_{\theta <\frac{1}{4}} g(\theta;\gamma)- \epsilon = \kappa_1^2,
\end{equation}
where $\kappa_1$ is a real number. Thus, in the same way as the case that $\lambda$ satisfying equation (\ref{lambda}) exists, 
we can show that any solution $\theta(x)$ can not be bounded. Therefore, any future directed null geodesic along 
the outer radial direction must diverge to $-\infty$ as $x \to 0$. Since the future directed null geodesic 
along the outer radial direction is obviously the earliest line that emerged from $x = 0$ at arbitrary time, 
any future directed outgoing causal line must diverge to $-\infty$ at $x = 0$.

\end{proof}

Thus, for the case that $\lambda$ satisfying equation (\ref{lambda}) {\it dose not exist}, there is no causal line 
which emanates from the central singularity, that is, strong cosmic censorship holds. In contrast, for the case that 
$\lambda$ satisfying equation (\ref{lambda}) {\it exists}, we have just discussed converged null geodesics and have 
not yet shown anything about another causal lines that emanate from the central singularity. Then, to address this point, 
we will first present the following Lemma 4 and 5.

\begin{lemma}
Let $\theta(x)$ be a future directed outgoing null geodesic along radial direction and oscillates as $x \to 0$. 
If $\lambda$ satisfying equation (\ref{lambda}) exists, then $\varliminf_{x \to 0} \theta(x) \geq \lambda_-(\gamma)$ holds.
\end{lemma}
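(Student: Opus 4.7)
The plan is a proof by contradiction that parallels the convergent case of Lemma 3, adapted to handle oscillation. Assume $\varliminf_{x\to 0}\theta(x) < \lambda_-(\gamma)$; I will show this forces $\theta$ to descend monotonically to $-\infty$ on a right neighborhood of $0$, which contradicts the oscillation hypothesis.

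First I fix $\epsilon > 0$ small enough that the definition of $\varliminf$ supplies a sequence $x_n \downarrow 0$ with $\theta(x_n) \leq \lambda_-(\gamma) - 2\epsilon$. Since $g(\,\cdot\,;\gamma)$ defined in (\ref{Dg}) is continuous and strictly positive on $(-\infty,\lambda_-(\gamma))$, the constant $\kappa^2 := \inf_{\theta \leq \lambda_-(\gamma) - \epsilon} g(\theta;\gamma)$ is strictly positive. Using the expansion (\ref{Lemma3_1}) developed in the proof of Lemma 3, I would then pick $x_0$ equal to some $x_n$ small enough that the residual $O(x)$ corrections in the right-hand side are negligible compared to $g(\theta;\gamma)/x$, yielding the pointwise lower bound
\[
\frac{d\theta}{dx} > \frac{\kappa^2}{2x} \qquad \text{whenever } \theta(x) \leq \lambda_-(\gamma) - \epsilon \text{ and } x \leq x_0.
\]

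The key step is to propagate the isolated upper bound at $x_0$ to the whole interval, i.e.\ to show $\theta(x) \leq \lambda_-(\gamma) - \epsilon$ for all $x \in (0,x_0]$. If this fails, let $x^* := \sup\{x\in(0,x_0) : \theta(x)\geq \lambda_-(\gamma)-\epsilon\}$. By continuity $\theta(x^*) \geq \lambda_-(\gamma) - \epsilon$, while $\theta(x_0) \leq \lambda_-(\gamma) - 2\epsilon$; the mean value theorem on $[x^*,x_0]$ then produces some $\xi\in(x^*,x_0)$ with $d\theta/dx(\xi) < 0$ but $\theta(\xi) < \lambda_-(\gamma) - \epsilon$, contradicting the derivative bound above. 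With the uniform upper bound in hand, the same derivative estimate integrated from $y$ to $x_0$ yields
\[
\theta(y) < \theta(x_0) - \frac{\kappa^2}{2}\ln\frac{x_0}{y} \longrightarrow -\infty \qquad (y\to 0^+),
\]
so $\theta$ is monotonically driven to $-\infty$ on $(0,x_0]$, killing the oscillation assumption.

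The main obstacle I anticipate is the uniform handling of the $O(x)$ remainder in (\ref{Lemma3_1}) as $\theta$ ranges over the possibly unbounded set $\{\theta \leq \lambda_-(\gamma) - \epsilon\}$: upgrading the pointwise statement $g(\theta;\gamma) \geq \kappa^2$ to the differential inequality $d\theta/dx > \kappa^2/(2x)$ requires the correction to be small compared to $\kappa^2/x$ uniformly in $\theta$ rather than pointwise. I would address this by returning to the unexpanded form (\ref{nulleq}) and observing that on the range of interest $1-a(x)(\theta x^2+1)^2$ is bounded away from zero for $x$ small enough, so the denominators in (\ref{nulleq}) remain under control and the residual contributions are dominated uniformly.
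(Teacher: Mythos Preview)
Your route differs from the paper's. Instead of trapping $\theta$ below $\lambda_-(\gamma)-\epsilon$ and then integrating it down to $-\infty$, the paper truncates $\theta$ to a compact window: it chooses $\mu_1<\mu_2$ with $\varliminf\theta<\mu_1<\mu_2<\min\{\varlimsup\theta,\lambda_-(\gamma)\}$, sets $\tilde\theta=\max(\mu_1,\min(\theta,\mu_2))$, and notes that $\tilde\theta'=0$ outside the strip while on $\{\mu_1<\theta<\mu_2\}$ the expansion (\ref{Lemma3_1}) gives $\tilde\theta'=\theta'>\kappa^2/x>0$. Thus $\tilde\theta$ is nondecreasing on a right neighborhood of $0$; but oscillation forces $\theta$ to dip below $\mu_1$ and rise above $\mu_2$ on every such neighborhood, so $\tilde\theta$ would have to hit both endpoints arbitrarily close to $0$, a contradiction.

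The payoff of the paper's truncation is exactly the obstacle you flag: because the differential inequality is invoked only on the compact $\theta$-interval $[\mu_1,\mu_2]$, the $O(x)$ remainder in (\ref{Lemma3_1}) is automatically uniform, and one never follows the solution into the large-negative-$\theta$ regime. Your proposed remedy---that $1-a(x)(\theta x^2+1)^2$ stays bounded away from zero so the residuals are ``dominated uniformly''---does not actually deliver this. Take $\theta=-1/x^2$ (so $\theta x^2+1=0$): one computes $x\,d\theta/dx\sim 1/(x\sqrt{\gamma})+2/x^2$ while $g(\theta;\gamma)\sim \sqrt{2}/(x\sqrt{\gamma})+2/x^2$, so the discrepancy is of order $1/x$, not $o(1)$, and the expansion (\ref{Lemma3_1}) genuinely fails to be uniform on $(-\infty,\lambda_-(\gamma)-\epsilon]$. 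What \emph{does} rescue your inequality $d\theta/dx>\kappa^2/(2x)$ is a different mechanism you do not invoke: the unexpanded right-hand side of (\ref{nulleq_2}) is strictly positive, whence $x\,d\theta/dx>-2\theta$ always, and this already exceeds $\kappa^2/2$ once $\theta<-\kappa^2/4$; on the remaining compact range $[-\kappa^2/4,\lambda_-(\gamma)-\epsilon]$ the expansion \emph{is} uniform. With that case split both of your steps go through, but the paper's truncation reaches the monotonicity contradiction in one stroke and avoids the split entirely.
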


\begin{proof}
We assume that $\lambda$ satisfying equation (\ref{lambda}) exists and a future directed outgoing null geodesic 
along radial direction, $\theta(x)$, is a $C^0$ function in the range $(0,d]$ and is oscillated as $x \to 0$. 
If $\varliminf_{x \to 0} \theta(x) < \lambda_-(\gamma)$ holds, we can choose $\mu_1,\mu_2 \in \mathbb{R}$ such that
\begin{equation}
\varliminf_{x \to 0} \theta(x) < \mu_1 <\mu_2 < \min\{\varlimsup_{x \to 0} \theta(x),\lambda_-(\gamma)\}.
\end{equation}
Here let us define
\begin{equation}\label{tildetheta}
\tilde{\theta}(x) = \begin{cases}
\mu_1 & (\theta(x) \leq \mu_1)\\
\theta(x) & (\mu_1 < \theta(x) < \mu_2)\\
\mu_2 & (\mu_2 \leq \theta(x) ).
\end{cases}
\end{equation}
It is easy to show that $\tilde{\theta}$ is a $C^0$ bounded function in the range $(0,d]$ and satisfies 
$\varliminf_{x \to 0} \tilde{\theta}(x) = \mu_1$, $\varlimsup_{x \to 0} \tilde{\theta}(x) = \mu_2$ and
\begin{equation}\label{Lemma4_1}
\frac{d\tilde{\theta}}{dx}(x) = \begin{cases}
0 & (\theta(x) < \mu_1)\\
 \frac{   1 - a(x)(\theta(x) x^2 + 1)^2 - \frac{x}{2}(\theta(x) x^2+1)^2\frac{d}{dx}a(x)    }{\sqrt{\gamma \left( 1-\frac{a(x)x^2}{\gamma} \right) \left( 1-a(x)(\theta(x) x^2+1)^2 \right) }} - \frac{2\theta(x)}{x}  & (\mu_1 < \theta(x) <\mu_2)\\
0 & (\mu_2 < \theta(x) ).
\end{cases}
\end{equation}

For all $x$ which satisfy $\mu_1 < \theta(x) <\mu_2 < \lambda_-(\gamma)< \frac{1}{4}$, the function $g(\theta(x);\gamma)$ 
exists as a real function. Then, as with the proof of Lemma 3, for all $\epsilon > 0$, there exists a $d_0 > 0$  such that 
\begin{equation}\label{Lemma4_2}
\begin{split}
 \frac{   1 - a(x)(\theta(x) x^2 + 1)^2 - \frac{x}{2}(\theta(x) x^2+1)^2\frac{d}{dx}a(x)    }{\sqrt{\gamma \left( 1-\frac{a(x)x^2}{\gamma} \right) \left( 1-a(x)(\theta(x) x^2+1)^2 \right) }} - \frac{2\theta(x)}{x} &> \frac{g(\theta(x);\gamma)}{x} - \frac{\epsilon}{x}\\
 &\geq \frac{\inf_{\mu_1<\theta <\mu_2} g(\theta;\gamma)}{x} - \frac{\epsilon}{x}
\end{split}
\end{equation}
for arbitrary $x \in (0, d_0]$, which satisfy $\mu_1 < \theta(x) <\mu_2$. Here, $\inf_{\mu_1<\theta <\mu_2} g(\theta;\gamma) > 0$ 
because of $\mu_2 < \lambda_-(\gamma)$. Then we can choose $\epsilon$ such that
\begin{equation}\label{Lemma4_3}
\inf_{\mu_1<\theta <\mu_2} g(\theta;\gamma)- \epsilon = \kappa^2,
\end{equation}
where $\kappa$ is a real number. Equations (\ref{Lemma4_1}), (\ref{Lemma4_2}) and (\ref{Lemma4_3}) imply
\begin{equation}\label{Lemma4_4}
\frac{d\tilde{\theta}}{dx}(x) \geq \begin{cases}
0 & (\theta(x) < \mu_1)\\
\frac{\kappa^2}{x} & (\mu_1 < \theta(x) <\mu_2)\\
0 & (\mu_2 < \theta(x) )
\end{cases}
\end{equation}
for arbitrary $x \in (0, d_0]$. Thus, from equation (\ref{tildetheta}) and (\ref{Lemma4_4}), $\tilde{\theta}(x)$ is a monotonically increasing function. 
It contradicts the assumption that $\theta(x)$ oscillates as $x \to 0$. Therefore, the solution $\theta(x)$ 
must satisfy $\varliminf_{x \to 0} \theta(x) \geq \lambda_-(\gamma)$.

\end{proof}

Thus, from Lemma 3 and 4, we conclude that any future directed outgoing null geodesic along radial direction, $\theta(x)$, 
satisfies (i) $\varliminf_{x \to 0} \theta(x) \geq \lambda_-(\gamma)$, or (ii)  $\theta(x)$ diverges to $-\infty$ as $x \to 0$ 
if $\lambda$ satisfying equation (\ref{lambda}) exists. However, 
unlike the case that $\lambda$ satisfying equation (\ref{lambda}) does not exist, it immediately dose not mean that any 
future directed outgoing causal line becomes to satisfy $\theta \geq \lambda_-(\gamma)$ or $\theta \to -\infty$ as $x \to 0$ 
because a null line which converges to the central singularity exists in this case. Hence, we have to carefully examine the 
geodesics in this case.

For $d > 0$, let us define $G_{\lambda_-(\gamma),d} \subset C^0[0,d]$ as the set of the solutions to equation (\ref{nulleq}) for 
$\gamma$, which converges to $\lambda_-(\gamma)$ as $x \to 0$ and does not enter singularity at a point in $(0,d]$. Then, we can show the following 
Lemma for $G_{\lambda_-(\gamma),d}$.

\begin{lemma}
If $\lambda$ satisfying equation (\ref{lambda}) exists, then there exist a solution to equation (\ref{nulleq}), $\theta_{n_0}(x;\gamma)$, 
such that $\theta_{n_0}(x;\gamma) \leq \theta(x)$ for all $\theta \in G_{\lambda_-(\gamma),d} $ and $x \in [0,d]$, which $d$ is an arbitrary positive number. 
Moreover, if the curve $\theta = \theta_{n_0}(x;\gamma)$ enters the singularity at $x = d_1 > 0$, $G_{\lambda_-(\gamma),d}$ 
is empty for all $d$ satisfying $d\geq d_1$.
\end{lemma}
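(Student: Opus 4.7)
The plan is to define $\theta_{n_0}(x;\gamma)$ as the pointwise infimum $\inf_{\theta \in G_{\lambda_-(\gamma),d}}\theta(x)$ and then show that this infimum is itself attained by a genuine solution of (\ref{nulleq}) belonging to $G_{\lambda_-(\gamma),d}$, so that the minimality inequality in the lemma holds by construction. The second assertion will then follow from a short comparison argument against $\theta_S$.

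First I would place $G_{\lambda_-(\gamma),d}$ into the compact-operator framework of Theorem~3: elements of $G_{\lambda_-(\gamma),d}$ are fixed points of $T_{\lambda_-(\gamma)}$, and the a priori estimate (\ref{Testimate}) forces them into a common set $D_{\lambda_-(\gamma),b,c,d}$ for appropriate $b$ and $c$. Since $T_{\lambda_-(\gamma)}$ is a compact operator on this set, its image, and hence $G_{\lambda_-(\gamma),d}$ itself, lies in a relatively compact subset of $C^0[0,d]$. I would then fix a countable dense set $\{x_k\}\subset[0,d]$ and, for each $k$, choose $\theta_n^{(k)}\in G_{\lambda_-(\gamma),d}$ with $\theta_n^{(k)}(x_k)\to \theta_{n_0}(x_k;\gamma)$. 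A standard diagonal extraction combined with Arzel\`a-Ascoli yields a single sequence $\{\tilde\theta_n\} \subset G_{\lambda_-(\gamma),d}$ converging uniformly to some $\tilde\theta\in C^0[0,d]$ with $\tilde\theta(x_k)=\theta_{n_0}(x_k;\gamma)$ for every $k$. By the continuity of $T_{\lambda_-(\gamma)}$ established in Lemma~2, $\tilde\theta$ is itself a fixed point, and since every $\theta\in G_{\lambda_-(\gamma),d}$ stays strictly below $\theta_S$ on $(0,d]$, so does $\theta_{n_0}$ and hence $\tilde\theta$; thus $\tilde\theta\in G_{\lambda_-(\gamma),d}$.

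The function $\theta_{n_0}$ is upper semi-continuous as an infimum of continuous functions, satisfies $\theta_{n_0}\le \tilde\theta$ pointwise by definition of the infimum, and agrees with $\tilde\theta$ on the dense set $\{x_k\}$. For any $x\in[0,d]$, choosing $x_{k_j}\to x$, continuity of $\tilde\theta$ gives $\tilde\theta(x)=\lim_j \theta_{n_0}(x_{k_j})$, while upper semi-continuity yields $\theta_{n_0}(x)\ge \limsup_j \theta_{n_0}(x_{k_j})=\tilde\theta(x)$. Combined with the reverse inequality this forces $\theta_{n_0}=\tilde\theta$ throughout $[0,d]$, so $\theta_{n_0}$ is the desired minimal solution. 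The second part is then immediate: if $\theta_{n_0}(d_1;\gamma)=\theta_S(d_1)$ for some $d_1>0$ and $\theta\in G_{\lambda_-(\gamma),d}$ with $d\ge d_1$, then $\theta(d_1)\ge \theta_{n_0}(d_1)=\theta_S(d_1)$, which contradicts the defining property that $\theta$ not enter the singularity on $(0,d]$.

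The hard part is precisely the step of upgrading the pointwise infimum from a mere upper semi-continuous envelope into a genuine $C^0$ solution. Since uniqueness of the Cauchy problem at the singular point $x=0$ fails --- this failure is exactly what makes the present lemma nontrivial --- the classical ODE comparison arguments do not apply; instead, the compactness of $T_{\lambda_-(\gamma)}$ supplies the limit-extraction step that substitutes for uniqueness and lets the pointwise infimum inherit the structure of a fixed point.
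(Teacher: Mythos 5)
There is a genuine gap at the central step of your construction. You choose, for each point $x_k$ of a countable dense set, a minimizing sequence $\theta_n^{(k)}\in G_{\lambda_-(\gamma),d}$ with $\theta_n^{(k)}(x_k)\to\inf_{\theta\in G_{\lambda_-(\gamma),d}}\theta(x_k)$, and then assert that a ``standard diagonal extraction'' produces a single uniform limit $\tilde\theta$ attaining the infimum at \emph{every} $x_k$ simultaneously. That does not follow: the minimizing functions at different points are different elements of $G_{\lambda_-(\gamma),d}$, and a diagonal subsequence only approaches the infimum along the moving points $x_{k_i}$; at a fixed $x_j$ its values need not approach $\inf G_{\lambda_-(\gamma),d}(x_j)$ at all. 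What makes a single function simultaneously minimal at all points is not compactness but the ordering of solution orbits: since the right-hand side of (\ref{nulleq}) is Lipschitz on any closed set avoiding the singularity, distinct solutions never intersect for $x>0$, so $G_{\lambda_-(\gamma),d}$ is totally ordered and the solution through $\bigl(x_0,\inf G_{\lambda_-(\gamma),d_0}(x_0)\bigr)$ lies below every element of $G$ everywhere. This non-crossing property, together with Lemmas 3 and 4 (used to exclude the alternative that this minimal orbit diverges to $-\infty$ as $x\to 0$, via the contradiction argument with an auxiliary orbit $\theta_1$), is exactly the content of the paper's proof, and it is absent from your argument; without it the pointwise infimum need not be realized by any single limit function.

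There are secondary gaps as well. Your compactness framework requires $G_{\lambda_-(\gamma),d}\subset D_{\lambda_-(\gamma),b,c,d}$, but membership in $G$ only demands convergence to $\lambda_-(\gamma)$ with no rate, whereas (\ref{Testimate}) controls images $T_{\lambda}(D_{\lambda,b,c,d})$, not arbitrary fixed points; a separate bootstrap would be needed, and the estimates of Lemmas 1 and 2 are valid only for sufficiently small $d$, while the lemma is stated for arbitrary $d$ (the paper covers this by continuing the minimal orbit through any region free of singular points and invoking conservation of the ordering). Finally, for the second assertion your comparison ``$\theta(d_1)\ge\theta_{n_0}(d_1)=\theta_S(d_1)$'' presupposes that $\theta_{n_0}$ is still the infimum of a nonempty $G_{\lambda_-(\gamma),d_1}$ even though, in the relevant scenario, $\theta_{n_0}$ is a \emph{continued} solution that has left the region where your infimum construction applies; the paper instead argues that any element of $G$ would be trapped between the curves $\theta=\theta_{n_0}$ and $\theta=\theta_S$, which meet at $d_1$, and must therefore hit the singularity by $x=d_1$, contradicting non-intersection of orbits.
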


\begin{proof}
We suppose that $\lambda$ satisfying equation (\ref{lambda}) exists. Let us define
\begin{equation}
G_{\lambda_-(\gamma),d}(x) \equiv \{\theta(x) | \theta \in G_{\lambda_-(\gamma),d}\},
\end{equation}
where $x \leq d$. By Theorem 3, there exists $d_0>0$ such that $G_{\lambda_-(\gamma),d_0}(d_0)$ is not empty. 
If a solution to equation (\ref{nulleq}) exists in $[0,d_0]$, then, for arbitrary $d \in (0,d_0]$, this solution also exists 
in $[0,d]$. Then,  $G_{\lambda_-(\gamma),d}(x)$ is not empty for all $d \in (0,d_0]$ and $x \in (0,d]$. 

Now we suppose that $G_{\lambda_-(\gamma),d_0}(x)$ is not empty for all $x \in (0,d_0]$. Since the right-hand side of 
equation (\ref{nulleq}) satisfies the Lifshitz condition on arbitrary closed set, which does not contain the singularity, all solutions to equation (\ref{nulleq}) do not intersect each other and can extend arbitrarily in any open set that does not contain the singularity. This fact means that the ordering 
of the solution orbits with respect to the coordinate $\theta$ is conserved. 

Let us define $\theta(x;x_0,\theta_0)$ as the solution to equation 
(\ref{nulleq}) such that passes through $(x_0,\theta_0)$. Then, for $ d_0\geq x_0>0$, Lemma 3, 4 and the above discussion tell 
us that $\theta(x;x_0,\inf G_{\lambda_-(\gamma),d_0}(x_0))$ must converge to $\lambda_-(\gamma)$ or diverge to $-\infty$ as $x \to 0$. 
Let us assume that $\theta(x;x_0,\inf G_{\lambda_-(\gamma),d_0}(x_0))$ diverges to $-\infty$ as $x \to 0$. 
Then, there exists $0 < x_1 < x_0$ such that $\theta(x_1;x_0,\inf G_{\lambda_-(\gamma),d_0}(x_0)) < 0$ holds. On such $x_1$, 
there exists the future directed solution to equation (\ref{nulleq}), $\theta_1(x)$, such that 
$\theta(x_1;x_0,\inf G_{\lambda_-(\gamma),d_0}(x_0))< \theta_1(x_1) < 0$. Since the region of $\theta < 0$ is not a future 
of the slice $\theta = 0$, Lemma 3 and 4 imply that $\theta_1(x)$ must diverge to $-\infty$ as $x \to 0$, that is, $\theta_1(x) \notin G_{\lambda_-(\gamma),d_0}$. 
Since the ordering of the solution orbits with respect to the coordinate $\theta$ is conserved, $\theta(x;x_0,\inf G_{\lambda_-(\gamma),d_0}(x_0))< \theta_1(x)$ holds for arbitrary point in the domain of $\theta_1$.
On the other hand, from the definition of $\theta(x;x_0,\inf G_{\lambda_-(\gamma),d_0}(x_0))$, for arbitrary $\epsilon>0$, there exits $\theta_{\epsilon}(x) \in G_{\lambda_-(\gamma),d_0}$ such that $0< \theta_{\epsilon}(x_0) - \theta(x_0;x_0,\inf G_{\lambda_-(\gamma),d_0}(x_0)) < \epsilon$ holds. $\theta_{\epsilon}(x)$ also satisfies $\theta_{\epsilon}(x) > \theta(x;x_0,\inf G_{\lambda_-(\gamma),d_0}(x_0))$ for arbitrary $x \in (0,x_0]$. 

Let us assume that $\theta_1(x)$ enters the singularity at $x = x_S(\theta_1)$ in $(x_1,x_0]$. 
Since $\theta_{\epsilon}(x) \in G_{\lambda_-(\gamma),d_0}$, $\theta_{\epsilon}(x)$ satisfies 
$0 < \theta_{\epsilon}(x) < \theta_S(x)$ for all $x \in (0,x_0]$. Then $\lim_{x \to x_S(\theta_1)}\theta_1(x) = \theta_S(x_S(\theta_1))> \theta_{\epsilon}(x_S(\theta_1))$  and $\theta_{\epsilon}(x_1) > 0 >\theta_1(x_1)$ hold, that is, $\theta_1(x)$ and $\theta_{\epsilon}(x)$ intersect at a point in $(x_1,x_S(\theta_1))$. It contradicts the fact that the right-hand side of equation (\ref{nulleq}) satisfies the Lifshitz condition. 
Then $\theta_1(x)$ does not enter the singularity in the range $(x_1,x_0]$. In this case, 
there exists $\epsilon'>0$ such that $\theta(x_0;x_0,\inf G_{\lambda_-(\gamma),d_0}(x_0)) < \theta_{\epsilon'}(x_0) < \theta_1(x_0)$ holds. 
Since $\theta_{\epsilon'}(x_1) > 0 >\theta_1(x_1)$ holds, it means that $\theta_1(x)$ and $\theta_{\epsilon'}(x)$ 
intersect at a point in $(x_1,x_0)$ and it leads to a contradict. 
Thus, $\theta(x;x_0,\inf G_{\lambda_-(\gamma),d_0}(x_0))$ converges to $\lambda_-(\gamma)$ as $x \to 0$, that is, 
$\theta(x;x_0,\inf G_{\lambda_-(\gamma),d_0}(x_0)) \in G_{\lambda_-(\gamma),d_0}$. 

Since the ordering 
of the solution orbits with respect to the coordinate $\theta$ is conserved and the solutions can extend arbitrarily in any open set that does not contain the singular points, for arbitrary $d$, 
$\theta(x;x_0,\inf G_{\lambda_-(\gamma),d_0}(x_0))$ must satisfy 
$\theta_G(x) \geq \theta(x;x_0,\inf G_{\lambda_-(\gamma),d_0}(x_0))$ for arbitrary $\theta_G \in G_{\lambda_-(\gamma),d}$ 
and $x \in [0,d]$. This means that $\theta(x;x_0,\inf G_{\lambda_-(\gamma),d_0}(x_0))$ must be $\theta_{n_0}(x;\gamma)$. 
Moreover, if the curve $\theta = \theta_{n_0}(x;\gamma)$ enters the singularity line $\theta = \theta_S(x)$ at some $x = d_1 > 0$, then all another solution lines, 
which emanate from singularity, are surrounded by the curve $\theta = \theta_{n_0}(x;\gamma)$ and $\theta = \theta_S(x)$. Since $\theta_{n_0}(x;\gamma)$ and 
any other solution line do not intersect each other except for the singularity, all solutions must intersect with singularity at a point in $(0,d_1]$. Then $G_{\lambda_-(\gamma),d}$ 
is empty for $d\geq d_1$.

\end{proof}

Lemma 3, 4 and 5 imply the following theorem. 

\begin{theorem}
(i) If $\lambda$ satisfying equation (\ref{lambda}) exists, then $\theta_{n_0}(x;\gamma)$ defined in Lemma 5 exists 
and is the earliest of all future directed causal line emanating from the central singularity. $\theta_{n_0}(x;\gamma)$ 
converges to $\lambda_-(\gamma)$ as $x \to 0$.

(ii) If $\lambda$ satisfying equation (\ref{lambda}) does not exist, then the strong cosmic censorship holds.

Furthermore, (i) and (ii) mean that $\lambda$ satisfying equation (\ref{lambda}) exists if and only if the central singularity is naked.
\end{theorem}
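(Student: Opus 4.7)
The plan is to prove Theorem 4 by assembling results already established in Lemmas 3, 4, 5 together with Theorem 3, rather than by any new calculation. The statement splits into three pieces---the existence and minimality of $\theta_{n_0}$ in (i), the strong cosmic censorship in (ii), and the iff characterisation---and I would address them in that order.

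Part (ii) is the quickest: the second half of Lemma 3 already asserts that, when no real $\lambda$ satisfies \eqref{lambda}, every future directed outgoing causal line $\theta_c(x)$ diverges to $-\infty$ as $x\to 0$. Hence no causal curve has a finite future limit at the central singularity, which is precisely the statement of the strong cosmic censorship conjecture for this spacetime, so no further work is needed.

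For part (i) I would first secure the existence of $\theta_{n_0}$. Since the two roots of \eqref{lambda} satisfy $\lambda_-(\gamma)\le\lambda_M=(3-\sqrt{5})/4<(9-\sqrt{33})/16$, Theorem 3 produces a solution $\theta\in C^\infty(0,d_0]$ of \eqref{DT} with $\theta(0)=\lambda_-(\gamma)$ for some $d_0>0$. Choosing $d_0$ small enough that this solution stays strictly below $\theta_S(x)$---possible because $\theta_S$ is continuous and $\theta_S(0)=1/4>\lambda_-(\gamma)$---shows $G_{\lambda_-(\gamma),d_0}$ is non-empty, and Lemma 5 then delivers the infimum solution $\theta_{n_0}(x;\gamma)$ with $\theta_{n_0}(0;\gamma)=\lambda_-(\gamma)$. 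To conclude that $\theta_{n_0}$ is the earliest of all future directed causal lines emanating from the singularity, I would argue as follows: by Lemmas 3 and 4 any future directed outgoing radial null geodesic either has $\varliminf_{x\to 0}\theta\ge\lambda_-(\gamma)$ or diverges to $-\infty$, so those that actually emerge from the singularity all lie in $G_{\lambda_-(\gamma),d}$, and $\theta_{n_0}$ is below every one of them by Lemma 5. Since the boundary of the causal future of a point is generated by null geodesics, no timelike curve from the singularity can lie below the earliest null generator, so $\theta_{n_0}$ is also below every future directed timelike line emerging from $r=0$.

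The iff characterisation then follows at once: (i) supplies an explicit causal curve emanating from $r=0$ whenever $\lambda$ exists, making the singularity naked, while (ii) rules out any such curve when $\lambda$ does not exist, so the singularity is not even locally naked. The main obstacle is the final step of part (i)---the upgrade from ``earliest converging null geodesic in $G_{\lambda_-(\gamma),d}$'' to ``earliest of all future directed causal lines''. One must check that Lemmas 3 and 4 together exhaust the limiting behaviours of radial null geodesics at $x\to 0$, and then invoke the general fact that a timelike curve from a point cannot escape $J^+$ of that point before the null generators of $\partial J^+$. Apart from this bookkeeping, the theorem is essentially a repackaging of the preceding lemmas.
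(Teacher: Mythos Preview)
Your handling of part (ii) and of the final iff characterisation matches the paper exactly, and your existence argument for $\theta_{n_0}$ via Theorem~3 and Lemma~5 is also the paper's route. The difficulty is in your minimality step for part (i).

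You write that ``those that actually emerge from the singularity all lie in $G_{\lambda_-(\gamma),d}$''. This is not true as stated: $G_{\lambda_-(\gamma),d}$ is by definition the set of solutions that \emph{converge to $\lambda_-(\gamma)$}, but Lemmas~3 and~4 only give $\varliminf_{x\to 0}\theta(x)\ge\lambda_-(\gamma)$. A radial null geodesic could converge to $\lambda_+(\gamma)$, or a priori oscillate with $\varliminf\ge\lambda_-(\gamma)$, and in neither case would it belong to $G_{\lambda_-(\gamma),d}$; Lemma~5 then says nothing directly about its position relative to $\theta_{n_0}$. Your subsequent appeal to ``$\partial J^+$ is generated by null geodesics'' is also delicate here, since the vertex in question is a singular point rather than a regular spacetime event.

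The paper avoids both issues by arguing contrapositively at a regular point. One supposes a future directed causal line $\theta_c$ satisfies $\theta_c(x_0)<\theta_{n_0}(x_0;\gamma)$ at some $x_0>0$, and considers the unique outgoing radial null geodesic $\theta(x;x_0,\theta_c(x_0))$ through that point. By non-crossing this geodesic stays strictly below $\theta_{n_0}$ for all $x\in(0,x_0]$, so it cannot lie in $G_{\lambda_-(\gamma),d}$; combined with Lemmas~3 and~4 (its $\varlimsup$ is at most $\lambda_-(\gamma)$, its $\varliminf$ would have to be at least $\lambda_-(\gamma)$, forcing convergence to $\lambda_-(\gamma)$ and hence membership in $G_{\lambda_-(\gamma),d}$, a contradiction) it must diverge to $-\infty$. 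Since the outgoing radial null geodesic is the fastest outgoing causal direction, $\theta_c(x)\le\theta(x;x_0,\theta_c(x_0))$ for $x<x_0$, so $\theta_c$ also diverges to $-\infty$ and therefore emanates from the regular centre, not the singularity. This comparison argument sidesteps any classification of limits and any global causal-boundary statement; your proof can be repaired by replacing the $G_{\lambda_-(\gamma),d}$ claim with this pointwise comparison.
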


\begin{proof}
We already showed (ii) below Lemma 3. Then we will focus on (i). We suppose that a future directed causal line 
$\theta_c(x)$ satisfies $\theta_c(x_0) < \theta_{n_0}(x_0,\gamma)$ at a point $x = x_0 > 0$. From Lemma 3, 4 and 5, 
$\theta(x;x_0,\theta_c(x_0))$ must diverge to $-\infty$ as $x \to 0$. Since $\theta(x;x_0,\theta_c(x_0))$ corresponds 
to a future directed outgoing null geodesic along radial direction, $\theta_c(x)$ also diverges to $-\infty$ as $x \to 0$, 
that is, the line with $\theta_c(x)$ must emanate from the regular center. Then there is no future directed causal line which 
emanates from the central singularity before $ \theta_{n_0}(x;\gamma)$.
\end{proof}

From Theorem 4, if $\theta_{n_0}(x;\gamma)$ can extend to $x = l$ and $\theta_{n_0}(l;\gamma) < \theta_{AH}(l;\gamma)$ holds, 
the central singularity must be globally naked. Using this fact, in the next section, we consider the global 
structure of this spacetime.

\section{Global spacetime structure and the globally naked singularity}

In this section, we consider global property of singularity. We will see the dependence of the nakedness of 
the central singularity on the initial density distribution characterized by $\gamma$ and $a(x)$
(See equation (\ref{DPrho}) for the definitions). The discussion in this section is similar to
the four dimensional case \cite{Christodoulou:1984mz}.

\begin{lemma}
For any initial density distribution parameterized as (\ref{DPrho}), there exists $\gamma_0$ 
such that the solution $\theta_{n_0}(x;\gamma)$ can extend to $x = l$(corresponding to the surface of dust cloud) 
and $\theta_{n_0}(l;\gamma) < \theta_{AH}(l;\gamma)$ 
holds for all $\gamma \in [\gamma_0, \infty]$. $\theta_{n_0}$ is defined in Lemma 5. 
\end{lemma}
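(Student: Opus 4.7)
My plan is to construct an explicit barrier $\bar\zeta(x)=cx^2$ that dominates $\zeta_{n_0}(x;\gamma)\equiv x^2\theta_{n_0}(x;\gamma)$ uniformly on $[0,l]$ for $\gamma$ large, and then combine this with the limits $\lambda_-(\gamma)\to 0$ and $\theta_{AH}(l;\gamma)\to\theta_S(l)>0$ as $\gamma\to\infty$ to finish.

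First I fix the barrier slope. Since $a(x)<1$ for $x\in(0,l]$ and $\theta_S(x)=(1/\sqrt{a(x)}-1)/x^2$ extends continuously to $\theta_S(0)=1/4$, the infimum $c_S\equiv\inf_{x\in[0,l]}\theta_S(x)$ is strictly positive. I pick any $c\in(0,c_S)$; in particular $c<\theta_S(l)$. From the solution structure of equation (\ref{lambda}) one has $\lambda_-(\gamma)\to 0$ and $\lambda_+(\gamma)\to 1/4$ as $\gamma\to\infty$, and (\ref{locationofAH}) gives $\theta_{AH}(l;\gamma)\to\theta_S(l)$. Thus I can choose $\gamma_0$ so large that, for every $\gamma\geq\gamma_0$, the inequalities $\lambda_-(\gamma)<c<\lambda_+(\gamma)$ and $\theta_{AH}(l;\gamma)>c$ both hold.

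Next I show that $\bar\zeta(x)=cx^2$ is a supersolution of equation (\ref{nulleq_1}) on $[0,l]$ for $\gamma\geq\gamma_0$, enlarging $\gamma_0$ if necessary. Denoting the right-hand side of (\ref{nulleq_1}) by $F(x,\zeta;\gamma)$, I need $2cx\geq F(x,cx^2;\gamma)$. On any compact subinterval $[x_0,l]$ the choice $c<c_S$ guarantees $cx^2\leq cl^2<\zeta_S(x)$, so $F(x,cx^2;\gamma)$ is uniformly of order $1/\sqrt\gamma$ there and the inequality is clear for $\gamma$ large. Near $x=0$, expanding with $a(x)=1-x^2/2+O(x^3)$ yields $F(x,cx^2;\gamma)=[(1-2c)/\sqrt{\gamma(1/2-2c)}]\,x\,(1+O(x))$; since the sign analysis of $g$ in (\ref{Dg}) identifies $\lambda_-(\gamma)<c<\lambda_+(\gamma)$ with $g(c;\gamma)<0$, i.e.\ with $(1-2c)/\sqrt{\gamma(1/2-2c)}<2c$, the leading $x$-coefficient of $F$ is strictly smaller than $2c$, and the inequality persists on a neighborhood $[0,x_0]$.

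Finally I invoke the standard ODE comparison principle. Because $\theta_{n_0}(x;\gamma)\to\lambda_-(\gamma)<c$ as $x\to 0$, there exists $x_\ast\in(0,x_0]$ with $\zeta_{n_0}(x_\ast;\gamma)<cx_\ast^2$. Since $F$ is locally Lipschitz in $\zeta$ on the open set $\{(x,\zeta):x>0,\ \zeta<\zeta_S(x)\}$, the comparison principle together with the supersolution property yields $\zeta_{n_0}(x;\gamma)\leq cx^2$ on $[x_\ast,l]$; letting $x_\ast\to 0$ extends this to $[0,l]$. In particular $\theta_{n_0}$ remains strictly below $\theta_S$ throughout $[0,l]$, so by Lemma 5 it extends to $x=l$, and then $\theta_{n_0}(l;\gamma)\leq c<\theta_{AH}(l;\gamma)$ is the desired conclusion.

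The main obstacle is the supersolution verification near $x=0$: both $2cx$ and $F(x,cx^2;\gamma)$ vanish linearly, so the inequality is decided at the level of the $O(x)$ coefficients. One must carefully track how the $O(x)$ remainder in the expansion of $F$ depends on $\gamma$ in order to produce a neighborhood $[0,x_0]$ whose size does not shrink as $\gamma$ grows, so that the barrier remains effective uniformly for all $\gamma\geq\gamma_0$.
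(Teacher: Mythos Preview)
Your barrier argument is correct and takes a genuinely different route from the paper. The paper re-enters the Schauder machinery of Theorem~3: it shows that for large $\gamma$ one can take the domain $D_{\lambda_-(\gamma),b,c,l}$ all the way out to $x=l$ and verifies directly that $T_{\lambda_-(\gamma)}$ maps this domain into itself, via explicit bounds on $|\lambda_-(\gamma)f|$ involving auxiliary functions $C_1,\dots,C_4$ and $F_1,F_2,F_3$. Your approach is more elementary: a constant barrier $\theta\equiv c\in(0,c_S)$ (equivalently $\bar\zeta=cx^2$) together with the standard comparison principle for ODEs, bypassing the fixed-point framework entirely. Both proofs ultimately exploit the same mechanism, namely that the right-hand side of (\ref{nulleq_1}) carries a global prefactor $1/\sqrt{\gamma-a(x)x^2}$; the paper packages this through the $1/\sqrt{\gamma}$ in the denominator of its $F_1,F_2,F_3$ estimate, while you use it to make the barrier a supersolution.

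Regarding the obstacle you flag at the end: it is not a real difficulty, and your two-region splitting is unnecessary. Observe that $F(x,cx^2;\gamma)=H(x)/\sqrt{\gamma-a(x)x^2}$, where
\[
H(x)=\frac{1-a(x)(cx^2+1)^2-\tfrac{x}{2}(cx^2+1)^2a'(x)}{\sqrt{1-a(x)(cx^2+1)^2}}
\]
is independent of $\gamma$. Since $c<c_S$, the quantity $H(x)/x$ extends continuously to $[0,l]$ with value $(1-2c)/\sqrt{1/2-2c}$ at $x=0$, hence is bounded by some $M$. Thus $F(x,cx^2;\gamma)\le Mx/\sqrt{\gamma-\eta}$ globally on $[0,l]$, and the supersolution inequality $2cx>F(x,cx^2;\gamma)$ holds on the whole interval as soon as $\gamma>\eta+(M/2c)^2$, with no need to patch a neighbourhood of $0$ to the rest.
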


\begin{proof}
Since the outer region of the $x = l$ surface is the Schwarzschild spacetime and the event horizon is identical 
to the apparent horizon in the Schwarzschild spacetime, $\theta_{n_0}(l;\gamma) < \theta_{AH}(l;\gamma)$ means 
that the null line corresponding to $\theta_{n_0}(x;\gamma)$ arrives at outer region of the event horizon of 
the Schwarzschild spacetime, that is, the null line $\theta_{n_0}(x;\gamma)$ will attain the future null infinity and 
then the central singularity is globally naked.

To prove this lemma, by virtue of Theorem 3, it is enough to show that, for sufficient large $\gamma$, 
there exist $b$ and $c$ such that (i) $\lambda_-(\gamma) + bx^c < \theta_S(x)$ holds for all $x$ in $[0,l]$, 
(ii) $|T_{\lambda_-(\gamma)}(\theta)(x) - \lambda_-(\gamma)| \leq bx^c$ holds for all $x$ in $[0,l]$ and all $\theta$ in $C^0[0,l]$, 
which satisfy $|\theta(x) - \lambda_-(\gamma)| \leq bx^c$, (iii) $\theta_{n_0}(l;\gamma) < \theta_{AH}(l;\gamma)$ holds.

The conditions (\ref{Cofa1}), (\ref{Cofa3}) and (\ref{monoa}) imply $a(x) < 1$ for all $x$ in $(0,l]$. 
Since $\theta_S$ is continuous in the range $[0,l]$, we see
\begin{equation}
\min_{0 \leq x \leq l} \theta_S(x) = \min_{0 \leq x \leq l} \frac{1}{x^2} \left(\frac{1}{\sqrt{a(x)}} - 1\right) \equiv
\theta_{S,min} > 0. \label{defthetasmin}
\end{equation}
Since $\lambda_-(\gamma)$ is monotonically decreasing function such that it satisfies $\lim_{\gamma \to \infty} \lambda_-(\gamma) = 0$, 
there exists $\gamma_1 $ such that $\lambda_-(\gamma) < \theta_{S,min}$ holds for all $\gamma \in [\gamma_1, \infty)$. 
Then we can take a $b(\gamma_1) $ such that  
$\lambda_-(\gamma_1) + b(\gamma_1)l^c < \theta_{S,min}$ holds. For such $b(\gamma_1)$,
\begin{equation}
\lambda_-(\gamma) + b(\gamma_1)x^c \leq \lambda_-(\gamma_1) + b(\gamma_1)l^c <  \theta_{S,min} \leq \theta_S(x) \label{bgamma1}
\end{equation}
holds for all $\gamma$ in $[\gamma_1,\infty)$. This means that the condition (i) for all $\gamma$ in $[\gamma_1,\infty)$ holds for this $b(\gamma_1)$. 

Next, we confirm that the condition (ii) holds for sufficient large $\gamma$. 
For $\theta \in D_{\lambda_-(\gamma),b(\gamma_1),c,l}$, we evaluate
\begin{equation}\label{evoff}
\begin{split}
| \lambda_-(\gamma) f(x,\theta(x);\lambda_-(\gamma)) | 
&\leq   \left|\frac{   1 - a(x)(\theta(x) x^2 + 1)^2 - \frac{x}{2}(\theta(x) x^2+1)^2\frac{d}{dx}a(x)    }{x^2\sqrt{\gamma \left( 1-\frac{a(x)x^2}{\gamma} \right) \left( 1-a(x)(\theta(x) x^2+1)^2 \right) }} - \frac{2\lambda_-(\gamma)}{x} \right| \\
%&= \left|\frac{   1 - 2\theta(x) + xC_1(x,\theta(x)) - 2\lambda_-(\gamma) \sqrt{\gamma \left( 1-\frac{a(x)x^2}{\gamma} \right)} \sqrt{ \frac{1}{2} - 2\theta(x) + xC_2(x,\theta(x))}   }{x\sqrt{\gamma \left( 1-\frac{a(x)x^2}{\gamma} \right)\left( \frac{1}{2} - 2\theta(x) + xC_2(x,\theta(x)) \right)} }  \right| \\
&= \left|\frac{  C_3(x,\theta(x)) - 2\lambda_-(\gamma) \sqrt{\gamma \left( 1-\frac{a(x)x^2}{\gamma} \right)} \sqrt{ C_4(x,\theta(x))}   }{x\sqrt{\gamma \left( 1-\frac{a(x)x^2}{\gamma} \right)C_4(x,\theta(x))} }  \right| \\
%&=  \left|  \frac{  \left( C_3(x,\theta(x)) \right)^2 - 4\lambda_-(\gamma)^2 \gamma\left( 1-\frac{a(x)x^2}{\gamma} \right)  C_4(x,\theta(x))  }{x\sqrt{\gamma \left( 1-\frac{a(x)x^2}{\gamma} \right)C_4(x,\theta(x))} \left( C_3(x,\theta(x)) + 2\lambda_-(\gamma) \sqrt{\gamma\left( 1-\frac{a(x)x^2}{\gamma} \right)} \sqrt{ C_4(x,\theta(x))} \right)}  \right| \\
&\leq  
%\left|  \frac  {  \left\{ (\theta(x) - \lambda_-(\gamma)) - 1+2\lambda -xC_1(x,\theta(x)) + 2 \lambda_-(\gamma)^2 \gamma \right\} 4(\theta(x)-\lambda_-(\gamma))  }  {x\sqrt{\gamma \left( 1-\frac{a(x)x^2}{\gamma} \right)C_4(x,\theta(x)) } \left( C_3(x,\theta(x)) + 2\lambda_-(\gamma) \sqrt{\gamma\left( 1-\frac{a(x)x^2}{\gamma} \right)} \sqrt{ C_4(x,\theta(x))} \right)}\right|\\
%&+ \left|\frac{(2- 4\lambda_-(\gamma) + C_1(x,\theta(x)))C_1(x,\theta(x)) - 4\lambda_-(\gamma)^2\gamma  C_2(x,\theta(x)) +4\lambda_-(\gamma)^2 a(x)x^2 \left( \frac{1}{2} - 2\theta(x) + xC_2(x,\theta(x))\right)} {\sqrt{\gamma \left( 1-\frac{a(x)x^2}{\gamma} \right)C_4(x,\theta(x))}  \left( C_3(x,\theta(x)) + 2\lambda_-(\gamma) \sqrt{\gamma\left( 1-\frac{a(x)x^2}{\gamma} \right)} \sqrt{C_4(x,\theta(x))} \right)}\right|\\
%&\equiv 
\frac{F_1(x,\theta(x),\lambda_-(\gamma))|\theta(x)-\lambda_-(\gamma)| +xF_2(x,\theta(x),\lambda_-(\gamma))}{\sqrt{\gamma}xF_3(x,\theta,\lambda_-(\gamma))}
\end{split}
\end{equation}
where 
\begin{eqnarray}
& & x^2C_3(x,\theta)\equiv x^2 - 2\theta x^2 + x^3C_1(x,\theta) \nonumber \\
& & ~~~~~~~~~~~~~\equiv  1 - a(x)(\theta x^2 + 1)^2 - \frac{x}{2}(\theta x^2+1)^2\frac{d}{dx}a(x), \\
& & x^2C_4(x,\theta)\equiv \frac{x^2}{2} - 2\theta x^2 + x^3C_2(x,\theta)\equiv 1 - a(x)(\theta x^2 + 1)^2,\\
& & F_1(x,\theta,\lambda_-(\gamma))\equiv4\Big| (\theta - \lambda_-(\gamma)) - 1+2\lambda_-(\gamma) -xC_1(x,\theta) + 2 \lambda_-^2(\gamma) \gamma \Big|, \\
& & F_2(x,\theta,\lambda_-(\gamma))\equiv \Big|2(1- 2\lambda_-(\gamma))C_1(x,\theta) + C_1^2(x,\theta) - 4\lambda_-^2(\gamma)\gamma  C_2(x,\theta) \nonumber \\
& & ~~~~~~~~~~~~~~~~~~~+4\lambda_-^2(\gamma) a(x)x^2 \left( \frac{1}{2} - 2\theta + xC_2(x,\theta)\right)\Big|
\end{eqnarray}
and
\begin{equation}
\begin{split}
F_3(x,\theta,\lambda_-(\gamma))\equiv&\sqrt{\left( 1-\frac{a(x)x^2}{\gamma} \right)C_4(x,\theta(x))}  \\
& \times \left( C_3(x,\theta(x)) + 2\lambda_-(\gamma) \sqrt{\gamma\left( 1-\frac{a(x)x^2}{\gamma} \right)} \sqrt{C_4(x,\theta(x))} \right),
\end{split}
\end{equation}
respectively (see Appendix A for the details of the above evaluations). From equation (\ref{EXPofa}), $C_1(x,\theta)$, $C_2(x,\theta)$, $C_3(x,\theta)$ 
and $C_4(x,\theta)$ defined in the aboves are $C^\infty$ functions in any region. Here let us define
\begin{equation}
\tilde{D}_{\gamma_1,b(\gamma_1),c,l} \equiv \bigcup_{\lambda \in [0,\lambda_-(\gamma_1)]} \{ (x,\theta(x)) | x \in[0,l], \theta \in D_{\lambda,b(\gamma_1),c,l} \},
\end{equation}
where $b(\gamma_1)$ is introduced just before equation (\ref{bgamma1}). 
Note that $\tilde{D}_{\gamma_1,b(\gamma_1),c,l}$ is a closed compact subset and does not contain the singular line $\theta = \theta_S(x)$. Since (\ref{monoa}) 
and $1 - a(x)(\theta x^2 + 1)^2 > 0$ hold in $\tilde{D}_{\gamma_1,b(\gamma_1),c,l}$ except for $x = 0$, $x^2C_3(x,\theta)$ and 
$x^2C_4(x,\theta)$ can be zero only at $x = 0$ in $\tilde{D}_{\gamma_1,b(\gamma_1),c,l}$. On the other hand, 
$C_3(0,\theta) =  \frac{1}{2} - 2\theta$ and $C_4(0,\theta) = 1-2\theta$ do not vanish in $\tilde{D}_{\gamma_1,b(\gamma_1),c,l}$. 
Then, $C_3(x,\theta)>0$ and $C_4(x,\theta)>0$ hold in $\tilde{D}_{\gamma_1,b(\gamma_1),c,l}$. In addition, from definition (\ref{lambda}), 
$2\lambda_-(\gamma) \sqrt{\gamma}$ appeared in the above $F_1(x,\theta,\lambda_-(\gamma))$, $F_2(x,\theta,\lambda_-(\gamma))$ and $F_3(x,\theta,\lambda_-(\gamma))$ are strictly positive and bounded 
for $\gamma$ that $\lambda_-(\gamma)$ is in $[0,\lambda_-(\gamma_1)]$ and, using condition (\ref{eta_0}), we can see easily that $1-\frac{a(x)x^2}{\gamma}$ 
is also strictly positive and bounded for $\gamma$ that $\lambda_-(\gamma)$ is in $[0,\lambda_-(\gamma_1)]$. Thus, we conclude that there exist the strictly positive values 
$\nu_1$, $\nu_2$ and $ \nu_3$ defined by
\begin{equation}
\nu_1 \equiv \max_{\lambda_-(\gamma) \in [0,\lambda_-(\gamma_1)]}\Bigl\{ \max_{(x,\theta)\in \tilde{D}_{\gamma_1,b(\gamma_1),c,l}} F_1(x,\theta,\lambda_-(\gamma))\Bigr\},
\end{equation}\
\begin{equation}
\nu_2 \equiv  \max_{\lambda_-(\gamma) \in [0,\lambda_-(\gamma_1)]}\Bigl\{ \max_{(x,\theta)\in \tilde{D}_{\gamma_1,b(\gamma_1),c,l} } F_2(x,\theta,\lambda_-(\gamma))\Bigr\}\end{equation}
and
\begin{equation}
\nu_3 \equiv  \min_{\lambda_-(\gamma) \in [0,\lambda_-(\gamma_1)]}\Bigl\{ \min_{(x,\theta)\in \tilde{D}_{\gamma_1,b(\gamma_1),c,l}} F_3(x,\theta,\lambda_-(\gamma))\Bigr\},
\end{equation}
respectively. Using these values, for arbitrary $x \in[0,l]$, $\theta \in D_{\lambda_-(\gamma),b(\gamma_1),c,l}$ and $\lambda_-(\gamma) \in [0,\lambda_-(\gamma_1)]$, 
we see
\begin{equation}
\begin{split}
| \lambda_-(\gamma) f(x,\theta(x);\lambda_-(\gamma)) | &\leq \frac{\nu_1|\theta(x)-\lambda_-(\gamma)|+\nu_2 x}{\sqrt{\gamma}\nu_3 x}\\
&\leq \frac{\nu_1b(\gamma_1) x^c+\nu_2 x}{\sqrt{\gamma}\nu_3 x}.
\end{split}
\end{equation}
Then, we have
\begin{equation}
|T_{\lambda_-(\gamma)}(\theta)-\lambda_-(\gamma)| \leq \frac{\nu_1 b(\gamma_1)}{ (c+2)\sqrt{\gamma}\nu_3}x^c + \frac{\nu_2}{3\sqrt{\gamma}\nu_3}x.
\end{equation}
Here there exists $\gamma_2$ in $[\gamma_1,\infty)$ such that, for arbitrary $\gamma \in [\gamma_2,\infty)$,
\begin{equation}
\frac{\nu_1 b(\gamma_1)}{ (c+2)\sqrt{\gamma}\nu_3} + \frac{\nu_2}{3\sqrt{\gamma}\nu_3}l^{1-c} \leq b(\gamma_1)
\end{equation}
holds. Thus, for a $c \in (0,1)$ and arbitrary $\gamma \in [\gamma_2,\infty)$, we obtain
\begin{equation}
\begin{split}
|T_{\lambda_-(\gamma)}(\theta)-\lambda_-(\gamma)| &\leq \frac{\nu_1 b(\gamma_1)}{ (c+2)\sqrt{\gamma}\nu_3}x^c + \frac{\nu_2}{3\sqrt{\gamma}\nu_3}x\\
&\leq \frac{\nu_1 b(\gamma_1)}{ (c+2)\sqrt{\gamma}\nu_3}x^c + \frac{\nu_2}{3\sqrt{\gamma}\nu_3}l^{1-c}x^c\\
&\leq b(\gamma_1)x^c.
\end{split}
\end{equation}
This means that $T_{\lambda_-(\gamma)}$ maps $D_{\lambda_-(\gamma),b(\gamma_1),c,l}$ into itself, that is, 
the condition (ii) holds for a $c \in (0,1)$ and arbitrary $\gamma \in [\gamma_2,\infty)$. Therefore, from 
Theorem 3 and 4, $\theta_{n_0}(x;\gamma)$, which is the earliest of all future directed causal line emanating 
from the central singularity, exists in the range $[0,l]$. Finally, we will examine the condition (iii) for sufficient 
large $\gamma$. For arbitrary $\epsilon>0$, there exists $\gamma_3$ such that
\begin{equation}
\begin{split}
\theta_{AH}(l;\gamma) &= \frac{1}{l^2} \left(\frac{1}{\sqrt{a(l)}} \sqrt{1 - \frac{a(l)l^2}{\gamma}} - 1\right)\\
&> \frac{1}{l^2} \left(\frac{1}{\sqrt{a(l)}} - 1\right) - \epsilon \\
&\geq \theta_{S,min} - \epsilon
\end{split}
\end{equation}
holds for arbitrary $\gamma$ in $[\gamma_3,\infty)$. In the above, we used equations (\ref{locationofS}) 
and (\ref{locationofAH}) and definition (\ref{defthetasmin}) for $\theta_{S,min}$. Now we choose $\epsilon>0$ such that 
$\theta_{S,min} - \epsilon > \lambda_-(\gamma_1) + b(\gamma_1)l^c$ holds. Then, we have
\begin{equation}
\theta_{AH}(l;\gamma) > \theta_{S,min} - \epsilon>  \lambda_-(\gamma_1) + b(\gamma_1)l^c \geq \lambda_-(\gamma) + b(\gamma_1)l^c \geq \theta_{n_0}(l;\gamma)
\end{equation}
for any $\gamma$ in $[\gamma_0,\infty)$, where $\gamma_0 \equiv \max\{\gamma_1, \gamma_2, \gamma_3\}$.  

\end{proof}

Therefore, for all $\gamma \in [\gamma_0,\infty)$, $\theta_{n_0}(x;\gamma)$ arrives at the surface 
of the dust cloud before the event horizon appears there, that is, the central singularity is globally naked in this case. 

On the other hands, for $\gamma$ which is sufficiently close to $\eta$ defined by (\ref{eta_0}), 
we show that the central singularity is surrounded by the event horizon, that is, the central 
singularity is only locally naked.
\begin{lemma}(i) For any initial density distribution which is parameterized by equation (\ref{DPrho}) and satisfies 
$\eta \geq \gamma_{min}={\sqrt {11+5{\sqrt {5}}}}$, 
there exists $\gamma_1$ such that $\gamma_1 \to \infty$ for $a(l) \to 1$ and the central singularity is only locally naked for arbitrary 
$\gamma \in (\eta,\gamma_1]$. 

(ii) For any initial density distribution which is parameterized by equation (\ref{DPrho}) and satisfies $\eta < \gamma_{min}$, 
if there exists $x_0$ in $[0,l]$, which satisfies $\frac{\gamma_{min}}{\gamma_{min} + x_0^2} < a(x_0)$, 
then there exists $\gamma_2$ such that $\gamma_2 \to \infty$ for $a(l) \to 1$ and the central singularity is only 
locally naked for arbitrary $\gamma \in [\gamma_{min},\gamma_2]$.
\end{lemma}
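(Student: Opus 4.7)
The plan is to show that $\theta_{n_0}(x;\gamma)$, the earliest future-directed causal line emanating from the central singularity (Theorem 4), becomes trapped by the apparent horizon before escaping the dust cloud. Once the geodesic satisfies $\theta_{n_0}(x^\ast;\gamma) \geq \theta_{AH}(x^\ast;\gamma)$ at some $x^\ast \in (0,l]$, it lies in the trapped region and cannot exit the dust ahead of the apparent horizon, so $\theta_{n_0}(l;\gamma) \geq \theta_{AH}(l;\gamma)$; its exit event then has area radius $R \leq \sqrt{M(l)}$, i.e.\ on or inside the event horizon of the exterior Schwarzschild--Tangherlini region. Alternatively, if $\theta_{n_0}$ reaches the singularity line $\theta_S$ at some $x_1 \leq l$, Lemma 5 gives $G_{\lambda_-(\gamma),d} = \emptyset$ for $d \geq x_1$, so no causal geodesic from the central singularity escapes. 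Either outcome forces only local nakedness.

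The first step is to adapt the Schauder fixed-point machinery of Lemma 6 to the present $\gamma$-range. Its estimate (\ref{evoff}) uses coefficients $\nu_1,\nu_2,\nu_3$ that are finite and positive as long as $\gamma > \eta$ (so that $1 - a(x)x^2/\gamma$ is bounded away from zero on $[0,l]$) and $\theta$ stays bounded away from $\theta_S$. Both conditions hold in case (i), where $\gamma > \eta$, and in case (ii), where $\gamma \geq \gamma_{min} > \eta$. The inequality ensuring $T_{\lambda_-(\gamma)}(D_{\lambda_-(\gamma),b,c,l}) \subset D_{\lambda_-(\gamma),b,c,l}$ can be solved with $b$ of order $l^{1-c}/\sqrt{\gamma}$, so in particular $b \to 0$ as $l \to 0$. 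The resulting two-sided bound $|\theta_{n_0}(x;\gamma) - \lambda_-(\gamma)| \leq b\, x^c$ then holds on the entire interval $[0,l]$.

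Next, I locate a test point $x^\ast$ at which $\theta_{AH}$ is small. In case (i), choose $x^\ast$ to be a maximiser of $a(x)x^2$ on $[0,l]$, so $a(x^\ast)(x^\ast)^2 = \eta$ and
\begin{equation*}
\theta_{AH}(x^\ast;\gamma) = \frac{1}{(x^\ast)^2}\left(\sqrt{(1-\eta/\gamma)/a(x^\ast)} - 1\right) \longrightarrow -\frac{1}{(x^\ast)^2}\quad\text{as }\gamma \to \eta^+.
\end{equation*}
In case (ii), choose $x^\ast = x_0$; the hypothesis $\gamma_{min}/(\gamma_{min}+x_0^2) < a(x_0)$ rearranges to $\gamma_{min} < a(x_0)x_0^2/(1-a(x_0))$, and the right-hand expression is exactly the $\gamma$-value at which $\theta_{AH}(x_0;\gamma) = 0$. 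Hence $\theta_{AH}(x_0;\gamma) \leq 0$ on a right-neighborhood $[\gamma_{min},\gamma_2]$ of $\gamma_{min}$.

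With these ingredients, the trapping inequality $\theta_{n_0}(x^\ast;\gamma) \geq \theta_{AH}(x^\ast;\gamma)$ follows from $\lambda_-(\gamma) - b(x^\ast)^c \geq \theta_{AH}(x^\ast;\gamma)$, which holds for $\gamma \in (\eta,\gamma_1]$ (case i) or $\gamma \in [\gamma_{min},\gamma_2]$ (case ii) because the right side can be made arbitrarily small (even negative) by restricting $\gamma$ near the lower endpoint, while the left side remains bounded below for $b$ small enough. The limiting behaviour $\gamma_1,\gamma_2 \to \infty$ as $a(l) \to 1$ is traced to the scaling $b = O(l^{1-c})$: shrinking $l$ (which drives $a(l) \to 1$) sends $b \to 0$, so the trapping inequality persists for larger and larger $\gamma$. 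The main obstacle is performing the Schauder analysis uniformly in $\gamma$ all the way down to the lower endpoint, where $\lambda_-(\gamma)$ is of order unity rather than small, so the bounds (\ref{ineq0}) and (\ref{evoff}) must be reanalysed with the constants $\nu_1,\nu_2,\nu_3$ tracked as $\gamma$-dependent rather than treated as uniform $O(1)$ quantities as in the large-$\gamma$ regime of Lemma 6.
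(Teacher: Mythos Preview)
Your proposal reaches for far heavier tools than the paper uses, and in one place the extra machinery leads you to the wrong explanation.

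The paper's proof is essentially a one-line causal observation. At the test point $x_\eta$ (case (i)) or $x_0$ (case (ii)) one checks directly from equation (\ref{locationofAH}) that $\theta_{AH}(x^\ast;\gamma)<0$ for the stated range of $\gamma$. Since $\theta=0$ corresponds to $t=t_S(0)$, this says the apparent horizon at $x^\ast$ forms \emph{before} the central singularity. Any outgoing null geodesic from the central singularity has $\zeta\ge 0$, hence $\theta\ge 0$, so when (and if) it reaches $x^\ast$ it is automatically on the trapped side of the apparent horizon; if it never reaches $x^\ast$ it has already terminated on the singularity. No Schauder estimate, no bound of the form $|\theta_{n_0}-\lambda_-(\gamma)|\le bx^c$, and indeed no information about $\theta_{n_0}$ beyond its sign is needed. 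Your ``main obstacle'' of pushing the fixed-point analysis uniformly down to $\gamma\to\eta^+$ is therefore a self-imposed difficulty that the paper simply avoids.

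Your argument for $\gamma_1,\gamma_2\to\infty$ as $a(l)\to 1$ is also not right. You attribute this to $b=O(l^{1-c})$ and ``shrinking $l$'', but $a(l)\to 1$ is not the same as $l\to 0$, and in any case the limit has nothing to do with the size of $b$. The paper's reasoning is elementary: since $a$ is nonincreasing, $a(l)\le a(x^\ast)\le 1$, so $a(l)\to 1$ forces $a(x^\ast)\to 1$. In that limit $\theta_{AH}(x^\ast;\gamma)=\frac{1}{(x^\ast)^2}\bigl(\sqrt{1-a(x^\ast)(x^\ast)^2/\gamma}\,/\sqrt{a(x^\ast)}-1\bigr)$ tends to $\frac{1}{(x^\ast)^2}\bigl(\sqrt{1-(x^\ast)^2/\gamma}-1\bigr)<0$ for \emph{every} finite $\gamma$, which is exactly why the upper endpoint $\gamma_1$ (or $\gamma_2$) can be taken arbitrarily large.
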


\begin{proof}
We suppose that $\lambda$ satisfying equation (\ref{lambda}) exists. In this case, from Theorem 3, the central singularity is locally naked at least. Let us define $x_\eta$ as
\begin{equation}
a(x_\eta)x_\eta^2\equiv \max_{x \in [0,l]} a(x)x^2 = \eta.
\end{equation}
Here note that $x_\eta \neq 0$ because $a(0)$ is finite and $a(x)x^2>0$ except for $x = 0$. At $x = x_\eta$, the apparent horizon appears at
\begin{equation}\label{AHa}
\theta_{AH}(x_\eta;\gamma) = \frac{1}{x_\eta^2} \left(\frac{1}{\sqrt{a(x_\eta)}} \sqrt{1 - \frac{\eta}{\gamma}} - 1\right).
\end{equation}
If $\eta \geq \gamma_{min}$ holds, there exists $\gamma_1$ such that the right-hand side of this equation becomes negative 
for arbitrary $\gamma$ in $(\eta,\gamma_1]$. Additionally, since $\sqrt{1 - \frac{\eta}{\gamma}} < 1$ always holds, $\theta_{AH}(x_\eta;\gamma)$ would be negative if $a(x_\eta)$ were equal to $1$. This fact and $a(l) \leq a(x_\eta) \leq 1$ tell us that 
 $\gamma_1 \to \infty$ for $a(l) \to 1$. 
Since $\theta_{AH}(x_\eta;\gamma) < 0$ means that the apparent horizon exists at the earlier timeslice than the central singularity 
appears, null geodesics emanating from the central singularity can not arrive at future null infinity for arbitrary $\gamma$ in $(\eta,\gamma_1]$, 
that is, the central singularity is only locally naked.

On the other hand, for $\eta < \gamma_{min}$, $\gamma$ can not approach to $\eta$. But if there exists $x_0$ in $[0,l]$ that satisfies 
$\frac{\gamma_{min}}{\gamma_{min} + x_0^2} < a(x_0)$, then $\theta_{AH}(x_0;\gamma_{min}) < 0$ holds from equation (\ref{AHa}). 
Since $\theta_{AH}(x;\gamma)$ is continuous with respect to $\gamma$, there exists $\gamma_2$ such that $\theta_{AH}(x_0;\gamma) < 0$ 
holds for arbitrary $\gamma \in [\gamma_{min},\gamma_2]$, that is, the central singularity is only locally naked in these cases. 
In addition, since $\frac{\gamma_{min}}{\gamma_{min} + x_0^2} < 1$ and $a(l) \leq a(x_0) \leq 1$ always hold, we have $\gamma_2 \to \infty$ 
for $a(l) \to 1$.
\end{proof}

Furthermore, we can show the monotonicity of $\theta_{n_0}(x;\gamma)$ with respect to $\gamma$ at each $x$. 
Let us define $\theta(x;\gamma)$ as a solution to equation (\ref{nulleq}) for $\gamma$, which converges to $\lambda_-(\gamma)$ as $x \to 0$.

\begin{lemma}
For any initial density distribution parameterized as (\ref{DPrho}), $\theta(x;\gamma_s) >\theta(x;\gamma_l)$ 
holds for two different value of $\gamma$, $\gamma_s$ and $\gamma_l$ such as $\gamma_s < \gamma_l$, and all  
$x$ that $\theta(x;\gamma_s)$ exists. In particular, $\theta_{n_0}(x;\gamma)$ defined in Lemma 5 
is a monotonically decreasing function of $\gamma$ at each $x$.
\end{lemma}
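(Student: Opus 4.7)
The plan is to use a standard comparison argument, exploiting the fact that the null geodesic ODE depends on $\gamma$ in a very simple, monotone way. First I would rewrite equation (\ref{nulleq_2}) in the form
\begin{equation*}
\frac{d\theta}{dx}+\frac{2\theta}{x}=\frac{F(x,\theta)}{\sqrt{\gamma-a(x)x^2}},
\end{equation*}
where
\begin{equation*}
F(x,\theta)=\frac{1}{x^2\sqrt{1-a(x)(\theta x^2+1)^2}}\Bigl(1-a(x)(\theta x^2+1)^2-\tfrac{x}{2}(\theta x^2+1)^2 a'(x)\Bigr).
\end{equation*}
Because any such solution satisfies $\theta(x)<\theta_S(x)$ (otherwise it would have already entered the singularity), the first factor in the numerator is strictly positive off $x=0$; combined with $a'(x)\le 0$ from (\ref{monoa}), this gives $F(x,\theta)>0$ at every $x>0$ on the domain. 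Thus for fixed $(x,\theta)$ with $x>0$, the right-hand side of the ODE is a strictly decreasing function of $\gamma$.

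Next I would fix $\gamma_s<\gamma_l$ (both $\ge \gamma_{min}$ so that $\lambda_-(\gamma)$ is defined). From the analysis just after equation (\ref{lamdbapm}) and the discussion in Lemma 6, $\lambda_-(\gamma)$ is strictly decreasing in $\gamma$, so $\lambda_-(\gamma_s)>\lambda_-(\gamma_l)$. Since $\theta(x;\gamma_s)\to\lambda_-(\gamma_s)$ and $\theta(x;\gamma_l)\to\lambda_-(\gamma_l)$ as $x\to 0$, continuity gives a deleted neighbourhood $(0,x_*)$ on which $\theta(x;\gamma_s)>\theta(x;\gamma_l)$. This establishes the desired inequality near the singularity.

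Then I would run the comparison principle. Suppose, toward contradiction, that there is a first $x_0>0$ at which $\theta(x_0;\gamma_s)=\theta(x_0;\gamma_l)=\bar\theta$. Then $\theta(\,\cdot\,;\gamma_s)-\theta(\,\cdot\,;\gamma_l)$ is positive on $(0,x_0)$ and vanishes at $x_0$, forcing
\begin{equation*}
\frac{d\theta}{dx}(x_0;\gamma_s)\le \frac{d\theta}{dx}(x_0;\gamma_l).
\end{equation*}
However, substituting $\bar\theta$ into the ODE for each value of $\gamma$ gives
\begin{equation*}
\frac{d\theta}{dx}(x_0;\gamma_s)-\frac{d\theta}{dx}(x_0;\gamma_l)=F(x_0,\bar\theta)\left(\frac{1}{\sqrt{\gamma_s-a(x_0)x_0^2}}-\frac{1}{\sqrt{\gamma_l-a(x_0)x_0^2}}\right)>0,
\end{equation*}
a contradiction. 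Hence the two orbits never meet and $\theta(x;\gamma_s)>\theta(x;\gamma_l)$ on the entire common domain. Since Lemma 5 identifies $\theta_{n_0}(\,\cdot\,;\gamma)$ as one such $\theta(\,\cdot\,;\gamma)$, the monotonicity of $\theta_{n_0}$ in $\gamma$ is a direct corollary.

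The main obstacle is the technical verification that $F(x,\bar\theta)>0$ strictly at the hypothetical crossing point: one must confirm that the relevant geodesic stays strictly inside the region $\theta<\theta_S(x)$, and that we never touch the singularity at any $x_0>0$ with $x_0$ in the common domain. This follows from the definition of $G_{\lambda_-(\gamma),d}$ used in Lemma 5 (solutions that do not enter the singularity) together with $a'\le 0$, so once $F>0$ is secured the comparison argument proceeds without further difficulty.
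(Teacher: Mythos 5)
Your proposal is correct, but it propagates the ordering by a genuinely different mechanism from the paper's proof. You run the classical no-first-crossing comparison argument: writing the right-hand side as $F(x,\theta)/\sqrt{\gamma-a(x)x^2}$ with $F>0$ (from $\theta<\theta_S$ and $\frac{da}{dx}\le 0$, i.e.\ (\ref{monoa})), it is strictly decreasing in $\gamma$ at fixed $(x,\theta)$, so at a hypothetical first touching point $x_0>0$ the two slopes are strictly ordered the wrong way, contradicting that the difference has nonpositive derivative there; the ordering is seeded near $x=0$ by the strict monotonicity $\lambda_-(\gamma_s)>\lambda_-(\gamma_l)$. The paper instead integrates equation (\ref{nulleq}) between two points $x_1<x_2$, discards the $\gamma$-difference term by the same sign argument (positivity of its $f_0(x,\theta)=xF(x,\theta)$), controls the $\theta$-difference term by $\sup_\theta\left|\frac{\partial f_0}{\partial\theta}\right|$, and closes a Gronwall-type inequality, $\theta(x_2;\gamma_s)-\theta(x_2;\gamma_l)>\bigl(\theta(x_1;\gamma_s)-\theta(x_1;\gamma_l)\bigr)\exp\bigl(-\int_{x_1}^{x_2}F(x)\,dx\bigr)$ with its own weight $F(x)$ built from $2/x$ and that supremum, then lets $x_2\to d_0$ to show the gap stays strictly positive. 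The ingredients coincide (monotone $\lambda_-$, positivity of the numerator via (\ref{monoa}), monotone $\gamma$-dependence through $1/\sqrt{\gamma-a(x)x^2}$), but your route is more elementary — it needs no local Lipschitz bound on $f_0$ in $\theta$ — while the paper's buys an explicit quantitative lower bound on the gap that only uses the equation in integrated form rather than pointwise differentiation at a touching point. One shared implicit assumption, present in the paper as well: $\theta(\cdot;\gamma_l)$ must be defined, with both orbits off the singular curve where $F$ blows up, on the range where $\theta(\cdot;\gamma_s)$ exists; your closing remark that the orbits stay in $\theta<\theta_S(x)$ is exactly what is needed at the touching point, so your argument goes through at the same level of rigor, and the "in particular" statement for $\theta_{n_0}$ follows in both treatments since $\theta_{n_0}(\cdot;\gamma)$ converges to $\lambda_-(\gamma)$ as $x\to 0$.
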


\begin{proof}
We suppose that $\gamma_s < \gamma_l$ and $\theta(x;\gamma_s)$ exists in the range $[0,d_s)$. Now let us define
\begin{equation}
I \equiv \{ x | \theta(x;\gamma_s) > \theta(x;\gamma_l)\}.
\end{equation}
$I$ is the union of intervals and not empty because $\theta(x;\gamma_s)$ and $\theta(x;\gamma_l)$ are continuous 
and $ \theta(0;\gamma_s)=\lambda_-(\gamma_s) >\lambda_-(\gamma_l) = \theta(0;\gamma_l)$. We shall show that $[0,d_0) \subset I$ 
implies $d_0 \in I$ for arbitrary $d_0 < d_s$ in the following. It implies $I = [0,d_s)$ because any interval contained in $I$ 
must not be a closed proper subset in $[0,d_s)$ by definition.

Now we suppose that $[0,d_0) \subset I$ and $0 < x_1 < x_2 < d_0 < d_s$. Let us define 
\begin{equation}
f_0(x,\theta) \equiv  \frac{   1 - a(x)(\theta x^2 + 1)^2 - \frac{x}{2}(\theta x^2+1)^2\frac{d}{dx}a(x)    }{x\sqrt{  1-a(x)(\theta x^2+1)^2 }}.
\end{equation}
From (\ref{monoa}) and the fact that $f_0(x,\theta)$ is differentiable except for the singularity, it is positive and 
$\frac{\partial f_0(x,\theta)}{\partial\theta}$ is finite for arbitrary $x \in [x_1,d_0]$ and $\theta \in [\theta(x;\gamma_l),\theta(x;\gamma_s)]$. Then, from equation (\ref{nulleq}), we have
\begin{equation}
\begin{split}
&\theta(x_1;\gamma_s) - \theta(x_1;\gamma_l) \\
&= \theta(x_2;\gamma_s) - \theta(x_2;\gamma_l) + \int^{x_2}_{x_1}\frac{2}{x}(\theta(x;\gamma_s) - \theta(x;\gamma_l) )dx \\
&~~- \int^{x_2}_{x_1}\left(\frac{1}{\sqrt{\gamma_s-a(x)x^2}} - \frac{1}{\sqrt{\gamma_l-a(x)x^2}}\right)\frac{f_0(x,\theta(x;\gamma_s))}{x}dx \\
&~~- \int^{x_2}_{x_1}\Bigl(f_0(x,\theta(x;\gamma_s)) - f_0(x,\theta(x;\gamma_l))\Bigr)\frac{1}{x\sqrt{\gamma_l-a(x)x^2}}dx.\\
& <\theta(x_2;\gamma_s) - \theta(x_2;\gamma_l) + \int^{x_2}_{x_1}\frac{2}{x}(\theta(x;\gamma_s) - \theta(x;\gamma_l) )dx \\
&~~+ \int^{x_2}_{x_1}|f_0(x,\theta(x;\gamma_s)) - f_0(x,\theta(x;\gamma_l))|\frac{1}{x\sqrt{\gamma_l-a(x)x^2}}dx.\\
&\leq \theta(x_2;\gamma_s) - \theta(x_2;\gamma_l) + \int^{x_2}_{x_1}\frac{2}{x}(\theta(x;\gamma_s) - \theta(x;\gamma_l) )dx \\
&~~+ \int^{x_2}_{x_1}\sup_{\theta(x;\gamma_l) \leq \theta \leq \theta(x;\gamma_s)}\left|\frac{\partial f_0(x,\theta)}{\partial\theta}\right|(\theta(x;\gamma_s) - \theta(x;\gamma_l) )\frac{1}{x\sqrt{\gamma_l-a(x)x^2}}dx.\\
& =  \theta(x_2;\gamma_s) - \theta(x_2;\gamma_l) + \int^{x_2}_{x_1}F(x)(\theta(x;\gamma_s) - \theta(x;\gamma_l) )dx,
\end{split}
\end{equation}
where $F(x)$ is the positive function defined as
\begin{equation}
F(x) \equiv \frac{2}{x} + \sup_{\theta(x;\gamma_l) \leq \theta \leq \theta(x;\gamma_s)}\left|\frac{\partial f_0(x,\theta)}{\partial\theta}\right|\frac{1}{x\sqrt{\gamma_l-a(x)x^2}}.
\end{equation}
For the first inequality in the above, we used the fact that $\frac{1}{\sqrt{\gamma_s-a(x)x^2}} - \frac{1}{\sqrt{\gamma_l-a(x)x^2}}$ 
is positive because of $\gamma_s < \gamma_l$. Thus, we obtain
\begin{equation}
 \theta(x_2;\gamma_s) - \theta(x_2;\gamma_l) > (\theta(x_1;\gamma_s) - \theta(x_1;\gamma_l) )\exp \Bigl(-\int^{x_2}_{x_1}F(x)dx \Bigr).
\end{equation}
As $x_2 \to d_0$, this inequality becomes
\begin{equation}
\theta(d_0;\gamma_s) - \theta(d_0;\gamma_l) > (\theta(x_1;\gamma_s) - \theta(x_1;\gamma_l) )\exp \Bigl(-\int^{d_0}_{x_1}F(x)dx \Bigr) > 0
\end{equation}
because $F(x)$ is bounded in the range $[x_1,d_0]$ and we supposed $\theta(x_1;\gamma_s) > \theta(x_1;\gamma_l)$. This means $d_0 \in I$.
\end{proof}

Here let us define $N$ as the set of real number $\lambda_-(\gamma)$ such that $G_{\lambda_-(\gamma),d}$ contains more than one element for some $d$. Then, from Lemma 8, we have the following Corollary. 
\begin{corollary}
$N$ is countable.
\end{corollary}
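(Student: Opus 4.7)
The plan is to associate to each $\gamma$ with a nontrivial $G_{\lambda_-(\gamma),d}$ a pairwise disjoint nonempty open region in the $(x,\theta)$-plane, and then invoke the standard fact that any family of pairwise disjoint nonempty open subsets of $\mathbb{R}^2$ is at most countable (each contains a point of $\mathbb{Q}^2$). Since $\lambda_-(\gamma)$ is strictly monotonically decreasing in $\gamma$, countability of $N$ is equivalent to countability of the set $N' := \{\gamma : \#\,G_{\lambda_-(\gamma),d} \geq 2 \text{ for some } d > 0\}$, so it suffices to show $N'$ is countable.

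For each $\gamma \in N'$, Lemma 5 provides the minimal solution $\theta_{n_0}(\cdot;\gamma)$, and by hypothesis there exists a second solution $\tilde\theta_\gamma \in G_{\lambda_-(\gamma),d(\gamma)}$ distinct from $\theta_{n_0}(\cdot;\gamma)$. Both converge to $\lambda_-(\gamma)$ as $x\to 0$, and since the right-hand side of equation (\ref{nulleq}) is Lipschitz on every closed set not containing the singularity (as noted in the proof of Lemma 5), two distinct solutions cannot cross on $(0,d(\gamma)]$. Consequently $\theta_{n_0}(x;\gamma) < \tilde\theta_\gamma(x)$ strictly on $(0,d(\gamma)]$, and I define the open band
\begin{equation*}
B_\gamma := \bigl\{(x,\theta)\in\mathbb{R}^2 : 0 < x < d(\gamma),\ \theta_{n_0}(x;\gamma) < \theta < \tilde\theta_\gamma(x)\bigr\},
\end{equation*}
which is a nonempty open subset of $\mathbb{R}^2$ by continuity of the two bounding curves.

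The key step is disjointness: for $\gamma_s,\gamma_l \in N'$ with $\gamma_s < \gamma_l$, Lemma 7 applies to any solution converging to $\lambda_-(\gamma)$ (the hypothesis of that lemma is precisely this), so applying it to $\tilde\theta_{\gamma_l}$ versus $\theta_{n_0}(\cdot;\gamma_s)$ gives $\tilde\theta_{\gamma_l}(x) < \theta_{n_0}(x;\gamma_s)$ at every common $x$. Therefore the $\theta$-interval defining $B_{\gamma_l}$ at such an $x$ lies strictly below $\theta_{n_0}(x;\gamma_s)$, which is the lower endpoint of the $\theta$-interval defining $B_{\gamma_s}$. This yields $B_{\gamma_s}\cap B_{\gamma_l}=\emptyset$. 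Choosing a rational point $q_\gamma \in B_\gamma \cap \mathbb{Q}^2$ for each $\gamma\in N'$ produces an injection $N' \hookrightarrow \mathbb{Q}^2$, proving $N'$ and hence $N$ is countable. The only delicate point is that Lemma 7 must be applied to a solution ($\tilde\theta_{\gamma_l}$) that is not a priori the minimal one, but the lemma's proof makes no use of minimality, only of the limiting value at $x=0$, so the ordering transfers without modification.
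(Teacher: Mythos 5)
Your argument is correct, and it reaches the corollary by a somewhat different route than the paper. Both proofs hinge on the same two ingredients: non-uniqueness gives ``thickness'' (two ordered, non-touching solutions with the same limit $\lambda_-(\gamma)$, which is legitimate since distinct solutions cannot touch away from the singularity by the Lipschitz property and $\theta_{n_0}$ is minimal), and the monotonicity-in-$\gamma$ lemma gives disjointness across different $\gamma$ --- note that this is Lemma 8 in the paper, not Lemma 7 (Lemma 7 is the locally-naked statement), though your caveat is right that its statement and proof apply to any solution converging to $\lambda_-(\gamma)$, not only the minimal one. The difference is in how the counting is finished. The paper first arranges a common cross-section: it invokes Theorem 3 at $\lambda_M$ to get a solution on $[0,d_M]$, uses Lemma 8 to argue every solution with $\lambda<\lambda_M$ stays below it and hence extends to $[0,d_M]$, and then observes that the slices $G_{\lambda,d_M}(d_M)$ are pairwise disjoint nondegenerate intervals inside $[0,\theta_S(d_M)]$, so an uncountable family would make the sum of their lengths diverge while being bounded by $\theta_S(d_M)$. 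You instead keep the $\gamma$-dependent domains $d(\gamma)$, build pairwise disjoint nonempty open bands in the $(x,\theta)$-plane, and pick a rational point in each. What your version buys is that you never need the extension-to-a-common-$d_M$ step (nor the observation that $G_{\lambda,d}(x)$ is an interval), and the countability principle you use (disjoint open sets in $\mathbb{R}^2$ each contain a point of $\mathbb{Q}^2$) is more elementary than the measure/divergent-sum argument; what the paper's version buys is a concrete one-dimensional picture of the degeneracy at a fixed radius, which is closer in spirit to how $G_{\lambda,d}(x)$ is used elsewhere in Lemma 5.
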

\begin{proof}
We suppose $\gamma_s < \gamma_l$ again. From Lemma 8, $\theta(x;\gamma_s) > \theta(x;\gamma_l)$ holds for arbitrary 
$\theta(x;\gamma_s)$ and $\theta(x;\gamma_l)$ at arbitrary $x$ that 
$\theta(x;\gamma_s)$ exists. So the geodesics $\theta(x;\gamma_s)$ and $\theta(x;\gamma_l)$ do not intersect in the domain of $\theta(x;\gamma_s)$. 
This means that $G_{\lambda_-(\gamma_s),d}(x)\bigcap G_{\lambda_-(\gamma_l),d}(x)$ is empty set for arbitrary $d$ and $x$ in the domain of $\theta(x;\gamma_s)$.  
In addition, since equation ($\ref{nulleq}$) satisfies the Lifshitz condition on arbitrary compact set which does not 
contain the singularity, the elements in $G_{\lambda_-(\gamma),d}$ do not intersect each other in the region which does not contain the singularity. 
So if $G_{\lambda_-(\gamma),d}$ contains two different function $\theta_1(x;\gamma)$ and $\theta_2(x;\gamma)$ which satisfy 
$\theta_1(x_0;\gamma) < \theta_2(x_0;\gamma)$ for a $x_0$, arbitrary solution to equation ($\ref{nulleq}$), $\theta(x)$, which satisfy 
$\theta_1(x_0;\gamma) < \theta (x_0)< \theta_2(x_0;\gamma)$ must be contained in $G_{\lambda_-(\gamma),d}$. Thus, for non-zero $x$, $G_{\lambda_-(\gamma),d}(x)$ is alway an interval in $\mathbb{R}$ if $G_{\lambda_-(\gamma),d}$ contains more than one element.

Now we assume that $N$ is uncountable. From Theorem 3, a solution $\theta(x;\gamma(\lambda_{M}))$ exists in the range $[0,d_{M}]$, which $d_{M}$ is a positive number. From lemma 8, for arbitrary 
$\lambda$ satisfying $\lambda < \lambda_{M}$, all solution $\theta(x;\gamma(\lambda))$ also exists in $[0,d_{M}]$ because the region $\theta <\theta(x;\gamma(\lambda_{M})) $ does not contain the singularity at $\theta = \theta_S(x)$. Thus, $G_{\lambda,d_{M}}(d_{M})$ is interval for 
all $\lambda$ in $N$. Here we define $|G_{\lambda,d}(x)|$ as the Lebesgue measure of $G_{\lambda,d}(x)$.
$|G_{\lambda,d_{M}}(d_{M})|$ is non-zero for arbitrary $\lambda$ in $N$. 
We can evaluate the sum of $ |G_{\lambda,d_{M}}(d_{M})|$ for $\lambda$ in $N$ as
\begin{equation}\label{EofsumG}
\begin{split}
\sum_{\lambda \in N } \left|G_{\lambda,d_{M}}({d_{M}})\right| &\leq \left|\bigcup_{\lambda \in N }G_{\lambda,d_{M}}(d_{M})\right|\\
&\leq \left|[0,\theta_S(d_{M})]\right| = \theta_S(d_{M}).
\end{split}
\end{equation}
For the first inequality, we used the fact that $G_{\lambda,d_{M}}(d_{M})$ is interval for all $\lambda$ in $N$ and does not have a common part each other for different $\lambda$. 
For the second one, we used the line $\theta = \theta(x;\gamma(\lambda))$ does not enter the noncentral singularity in the range $[0,d_{M}]$ for all $\lambda$ in $N$ and the region 
$\theta < 0$ at $x = d_{M}$, which is not in the future of the central singularity. 
However, since the sum of uncountable infinite numbers of strictly positive real number must diverge, 
$\sum_{\lambda \in N} |G_{\lambda,d_{M}}(d_{M})|$ must diverge. 
It contradicts inequality (\ref{EofsumG}). Thus, $N$ is countable.
\end{proof}

Since the existence theorem is based on the fixed point theorem for contraction mapping in the four dimensional case \cite{Christodoulou:1984mz}, 
one could immediately see that the solution to the differential equation for null geodesic, which has certain initial value at central singularity, 
is unique. By contrast, in the five dimensional case, it is not necessary that the solution found in Thoerem 3 is unique because we use 
Schauder fixed-point theorem for the proof of the existence of the solution. However, this corollary guarantees that the solution which converges 
$\lambda_-(\gamma)$ as $x \to 0$ is unique for almost every $\gamma$ at least.

In lemma 8, we proved the monotonicity of the solutions to equation (\ref{nulleq}) with respect to $\gamma$. 
In addition, we can easily  show the monotonicity of $\theta_{AH}(x;\gamma)$ with 
respect to $\gamma$ at each $x$.
\begin{lemma}
For any initial density distribution parameterized as (\ref{DPrho}), $\theta_{AH}(x;\gamma)$ 
is a monotonically increasing function of $\gamma$ at each $x$.
\end{lemma}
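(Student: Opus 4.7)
The plan is to prove this by direct differentiation, since $\theta_{AH}(x;\gamma)$ is given in closed form by equation (\ref{locationofAH}) and the $\gamma$-dependence enters only through the single factor $\sqrt{1-a(x)x^2/\gamma}$.

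First I would fix $x \in [0,l]$ and view $\theta_{AH}(x;\gamma)$ as a smooth function of $\gamma$ on the interval $(\eta,\infty)$, where the condition $\gamma > \eta$ from equation (\ref{eta}) guarantees that $1 - a(x)x^2/\gamma > 0$ so the square root is well-defined and strictly positive. Next I would compute
\begin{equation}
\frac{\partial \theta_{AH}}{\partial \gamma}(x;\gamma) = \frac{1}{x^2\sqrt{a(x)}} \cdot \frac{\partial}{\partial \gamma}\sqrt{1-\frac{a(x)x^2}{\gamma}} = \frac{\sqrt{a(x)}}{2\gamma^2 \sqrt{1-\frac{a(x)x^2}{\gamma}}},
\end{equation}
where the $x^2$ factors cancel cleanly. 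Since $a(x) > 0$ (this follows from conditions (\ref{Cofa1}), (\ref{Cofa3}), and (\ref{monoa}), together with (\ref{vaca})) and the square root in the denominator is strictly positive by the $\gamma > \eta$ observation, the derivative is strictly positive for all $\gamma$ in the admissible range. This yields strict monotonicity in $\gamma$ at fixed $x > 0$.

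For the case $x = 0$, I would handle it separately by taking the limit and using $a(0)=1$, $a'(0)=0$ to get $\theta_{AH}(0;\gamma) = \frac{1}{4} - \frac{1}{2\gamma}$ as already computed in the excerpt, which is manifestly strictly increasing in $\gamma$.

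Since this is a one-line calculus check, there is no real obstacle; the only subtlety is ensuring that the admissible range of $\gamma$ keeps the radicand positive, but this is precisely what condition (\ref{eta_0}) was set up to enforce. The proof is accordingly very short, in sharp contrast to Lemma 8 which required a Gronwall-type bootstrap argument.
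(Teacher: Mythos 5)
Your proposal is correct and is essentially the paper's argument: the paper simply declares the monotonicity ``obvious from equation (\ref{locationofAH})'', and your explicit differentiation (with the $x=0$ case handled via $\theta_{AH}(0;\gamma)=\frac{1}{4}-\frac{1}{2\gamma}$ and positivity of the radicand guaranteed by $\gamma>\eta$) just fills in the one-line calculus the paper leaves implicit.
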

\begin{proof}
It is obvious from equation (\ref{locationofAH}).
\end{proof}

From Lemma 6, 7, 8 and 9, we obtain the following theorem.
\begin{theorem}
(i) For any initial density distribution which is parameterized as (\ref{DPrho}) and satisfies 
$\eta \geq \gamma_{min}$, there exists $\gamma_C$ which satisfies $\eta < \gamma_C$ and $\gamma_C \to \infty$ for 
$a(l) \to 1$ such that (a) for arbitrary $\gamma \in (\gamma_C,\infty)$, $\theta_{n_0}(x;\gamma)$ defined in Lemma 5 
goes to future null infinity, that is, the central singularity is globally naked and weak CCC 
dose not hold, and (b) for all $\gamma \in (\eta,\gamma_C)$, the central singularity is 
only locally naked, that is, weak CCC holds and the outer region of the event horizon is regular.

(ii) For any initial density distribution which is parameterized as (\ref{DPrho}) and satisfies
$\eta < \gamma_{min}={\sqrt {11+5{\sqrt {5}}}}$, if there exists $x_0$ in $[0,l]$ such that 
$\frac{\gamma_{min}}{\gamma_{min} + x_0^2} < a(x_0)$ holds, then there exists 
$\gamma_C$ which satisfies $\eta < \gamma_C$ and $\gamma_C \to \infty$ for $a(l) \to 1$ 
such that the above (a) and (b) hold. Otherwise, there exist $\gamma_0$ satisfying $\eta < \gamma_0$, 
such that, for all $\gamma \in [\gamma_0,\infty)$, $\theta_{n_0}(x;\gamma)$ 
goes to future null infinity, that is, the central singularity is globally naked.
\end{theorem}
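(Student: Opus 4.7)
The plan is to combine the monotonicity results (Lemmas 8 and 9) with the two endpoint regimes established in Lemmas 6 and 7 to locate a single transition value $\gamma_C$. For each $\gamma$ at which the earliest geodesic $\theta_{n_0}(\cdot;\gamma)$ extends to $x = l$, define the gap
\[
D(\gamma) \equiv \theta_{AH}(l;\gamma) - \theta_{n_0}(l;\gamma).
\]
Because the region outside the dust cloud is Schwarzschild, with event horizon there coinciding with the apparent horizon, $D(\gamma) > 0$ is precisely the condition for the earliest null geodesic to emerge from the cloud ahead of the event horizon, i.e.\ for the central singularity to be globally naked; $D(\gamma) \leq 0$, or non-extension of $\theta_{n_0}$ to $x = l$, corresponds to only local nakedness.

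Next I would promote Lemmas 8 and 9 to strict monotonicity of $D$. Lemma 8 says $\theta_{n_0}(l;\gamma)$ is strictly decreasing in $\gamma$ wherever it is defined, and moreover once $\theta_{n_0}(\cdot;\gamma_s)$ reaches $x = l$ so does $\theta_{n_0}(\cdot;\gamma)$ for every $\gamma > \gamma_s$; Lemma 9 gives the strict monotone increase of $\theta_{AH}(l;\gamma)$, so $D$ is strictly increasing on its domain. Lemma 6 then yields $\gamma_0$ with $D(\gamma) > 0$ for all $\gamma \geq \gamma_0$, establishing the globally naked regime in every case. For the lower end of case (i), Lemma 7(i) produces $\gamma_1 \in (\eta,\infty)$ such that for $\gamma \in (\eta,\gamma_1]$ one has $\theta_{AH}(x_\eta;\gamma) < 0$, and the argument there traps $\theta_{n_0}$ strictly before $x = l$, which I treat as $D(\gamma) < 0$; Lemma 7(ii) provides the analogous $\gamma_2$ for the first subcase of (ii). Setting
\[
\gamma_C \equiv \inf\{\gamma > \eta : D(\gamma) > 0\},
\]
strict monotonicity places $\gamma_C$ between these two thresholds and yields $D(\gamma) < 0$ for $\gamma \in (\eta,\gamma_C)$ and $D(\gamma) > 0$ for $\gamma > \gamma_C$. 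The limit $\gamma_C \to \infty$ as $a(l) \to 1$ is inherited from the corresponding limits of $\gamma_1$ and $\gamma_2$ in Lemma 7. In the subrange $(\eta,\gamma_{min})$ of case (ii), Theorem 4(ii) already provides strong CCC, which is stronger than and consistent with claim (b).

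For the remaining second subcase of (ii), when no $x_0 \in [0,l]$ satisfies $\frac{\gamma_{min}}{\gamma_{min} + x_0^2} < a(x_0)$, neither hypothesis of Lemma 7 applies, so no locally naked regime can be asserted and the claim reduces directly to Lemma 6 with $\gamma_0$ serving as the global-nakedness threshold. The main obstacle is reconciling the strict monotonicity of $\theta_{n_0}(l;\gamma)$ with the existence of a single transition value: because the existence theorem (Theorem 3) rests on the Schauder fixed-point theorem rather than contraction mapping, $\theta_{n_0}(\cdot;\gamma)$ is not known to depend continuously on $\gamma$ and $D$ could a priori jump across zero. One must verify, using the non-intersection of different-$\gamma$ geodesics from Lemma 8 and the extremal characterization of $\theta_{n_0}$ from Lemma 5, that the set of $\gamma$ on which $\theta_{n_0}(\cdot;\gamma)$ fails to reach $x = l$ is an initial interval, so that the infimum definition of $\gamma_C$ captures the true transition without requiring a continuity/intermediate-value argument.
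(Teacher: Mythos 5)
Your proposal is correct and follows essentially the same route as the paper: it combines Lemma 6 (globally naked for large $\gamma$), Lemma 7 (only locally naked near the lower end, giving $\gamma_C \to \infty$ as $a(l)\to 1$), and the monotonicity Lemmas 8 and 9 to obtain a single transition value $\gamma_C$, with the Schwarzschild exterior argument for escape to null infinity. Your explicit infimum definition of $\gamma_C$, the remark that monotonicity (rather than continuity in $\gamma$) is what prevents a jump, and the appeal to Theorem 4(ii) on $(\eta,\gamma_{min})$ are just more careful packagings of steps the paper treats implicitly.
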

\begin{proof}
(i) Let us assume that the initial density distribution is parameterized as (\ref{DPrho}) and satisfies $\eta \geq \gamma_{min}$. Then, from Lemma 8 and 9, $\theta_{n_0}(x;\gamma)$ is a decreasing function of $\gamma$ and 
$\theta_{AH}(x;\gamma)$ is a continuous increasing function of $\gamma$ for each $x$. 
In addition, from Lemma 6, there exists $\gamma_0$ such that the solution $\theta_{n_0}(x;\gamma)$ 
can extend to $x = l$ and $\theta = \theta_{n_0}(x;\gamma)$ does not intersect with 
$\theta=\theta_{AH}(x;\gamma)$ for all $\gamma \in [\gamma_0, \infty]$, while from Lemma7, 
if $\eta \geq \gamma_{min}$ holds, there exists $\gamma_1$ such that $\gamma_1 \to \infty$ 
for $a(l) \to 1$ and $\theta = \theta_{n_0}(x;\gamma)$ intersects with 
$\theta=\theta_{AH}(x;\gamma)$ at somewhere in the dust cloud for arbitrary $\gamma \in (\eta,\gamma_1]$. Thus there exists $\gamma_C$ such that $\gamma_0 \geq \gamma_C \geq \gamma_1$ and 
$\theta = \theta_{n_0}(x;\gamma)$ does not intersect with $\theta=\theta_{AH}(x;\gamma)$ 
for all $\gamma \in (\gamma_C,\infty)$ and $\theta = \theta_{n_0}(x;\gamma)$ intersects 
with $\theta=\theta_{AH}(x;\gamma)$ at somewhere in the dust cloud for arbitrary 
$\gamma \in (\eta,\gamma_C)$. If $\theta = \theta_{n_0}(x;\gamma)$ does not intersect 
with $\theta=\theta_{AH}(x;\gamma)$, then $\theta = \theta_{n_0}(x;\gamma)$ can extend to future null infinity because the outer region of the $x = l$ surface is the Schwarzschild spacetime. Thus, in this case, the central singularity is globally naked. If $\theta = \theta_{n_0}(x;\gamma)$ 
intersects with $\theta=\theta_{AH}(x;\gamma)$ at somewhere in the dust cloud, 
then $\theta = \theta_{n_0}(x;\gamma)$ can not extend to future null infinity and will enter the singularity. 
This means that the central singularity is only locally naked and the outer region of the 
event horizon is regular because $\theta=\theta_{n_0}(x;\gamma)$ is the earliest line in all future 
directed causal lines emanating from the central singularity. 

(ii)  Let us assume that the initial density distribution is parameterized as (\ref{DPrho}) and satisfies $\eta < \gamma_{min}={\sqrt {11+5{\sqrt {5}}}}$. If there exists $x_0$ in $[0,l]$ such that $\frac{\gamma_{min}}{\gamma_{min} + x_0^2} < a(x_0)$ holds, in the same way as the proof of (i), we can show the former of (ii). On the other hand, if such $x_0$ does not exist in $[0,l]$, 
we can not use Lemma 7. Then all we could show in this regard is Lemma 6 only.
\end{proof}

\section{Conclusion and discussion}

In this paper, we analyzed five dimensional inhomogeneous spherically symmetric dust collapse. 
By virtue of Schauder fixed-point theorem, we proved the existence theorem of null geodesics 
in singular space-time. Moreover, by using it, we showed the necessary and sufficient condition 
for the singularity to be naked and saw the dependence of the globally nakedness of the central singularity on the initial density distribution.

In section 2, we fixed the initial energy distribution of dust so that the initial velocity of 
the shells is zero. This assumption is not critical for our method. Therefore, we can also 
discuss the nakedness of singularity without this assumption. 
To prove the existence of a null geodesic emanating from the central singularity in this general case, 
we have to find an appropriate domain such that the operator $T_{\lambda}$ maps its domain into itself. 
We expect that, for some class of energy distribution, $D_{\lambda,b,c,d}$ defined by (\ref{DD}) 
can be such domain for certain $b$, $c$ and $d$, and we will have almost similar discussion 
in this paper.

In specific dimensional spherically symmetric dust collapse in Lovelock gravity or, 
especially, nine dimensional spherically symmetric dust collapse in Einstein-Gauss-Bonnet 
gravity \cite{Maeda:2006pm}\footnote{In Lovelock gravity case, by employing the analysis 
in \cite{Maeda:2006pm}, it is easy to find that we cannot apply the Christodoulou theorem 
in $D=4k+1$ dimensional spacetime, where $k$ is the highest order of non vanishing Lovelock 
coefficients \cite{Ohashi:2011zza}.}, 
we can not use Christodoulou's method and discussion to show the existence of null geodesics 
emanating from central singularity because the singular term in differential equation for 
null geodesic has non simple function form. In contrast, our method may be used to examine 
the nakedness of singularity for the above cases because our existence theorem improved 
Christodoulou's method \cite{Christodoulou:1984mz}.

\section*{Acknowledgment}

We would like to thank Sumio Yamada and Hisashi Okamoto for fruitful discussions and comment. R. M. was supported by JSPS Grant-in-Aid for Scientific Research No. 22-995. S. O. was supported by JSPS Grant-in-Aid for Scientific Research No. 25-9997. 
T. S. is supported by Grant-Aid for Scientific Research from Ministry of Education, Science, Sports and Culture of Japan (Nos. 25610055 and 16K05344).

% can use a bibliography generated by BibTeX as a .bbl file
% BibTeX documentation can be easily obtained at:
% http://www.ctan.org/tex-archive/biblio/bibtex/contrib/doc/

%\bibliographystyle{ptephy}
%\bibliography{sample}
%
% once the .bbl file has been generated then place the text in your article.

\appendix

\section{The details of the evaluation in equations (\ref{ineq0}), (\ref{ineq1}), (\ref{Tineq}) and (\ref{evoff})}

\subsection{equation (\ref{ineq0})}
\begin{equation}
\begin{split}
 | \lambda &f(x,\theta(x);\lambda) | =  \left|\frac{   1 - a(x)(\theta(x) x^2 + 1)^2 - \frac{x}{2}(\theta(x) x^2+1)^2\frac{d}{dx}a(x)    }{x^2\sqrt{\gamma \left( 1-\frac{a(x)x^2}{\gamma} \right) \left( 1-a(x)(\theta(x) x^2+1)^2 \right) }} - \frac{2\lambda}{x} \right| \\
& \leq \left|\frac{   1 - 2\theta(x) + O(x) - 2\lambda \sqrt{\gamma} \sqrt{ \frac{1}{2} - 2\theta(x) + O(x)}   }{\sqrt{\gamma}x \sqrt{ \frac{1}{2} - 2\theta(x) + O(x)} }  \right| \\
& =  \left|  \frac{  \left( 1 - 2\theta(x) + O(x) \right)^2 - 4\lambda^2 \gamma  \left( \frac{1}{2} - 2\theta(x) + O(x)\right)   }{\sqrt{\gamma}x \sqrt{ \frac{1}{2} - 2\theta(x) + O(x)} \left( 1 - 2\theta(x) + O(x) + 2\lambda \sqrt{\gamma} \sqrt{ \frac{1}{2} - 2\theta(x) + O(x)} \right)}  \right| \\
& =  \left|  \frac{  - 4(\theta(x) - \lambda) (1-2\lambda) +\left\{ - 2(\theta(x) - \lambda) + O(x) \right\}^2 +8\lambda^2 \gamma (\theta(x) - \lambda)  + O(x) }{\sqrt{\gamma}x \sqrt{ \frac{1}{2} - 2\theta(x) + O(x)} \left( 1 - 2\theta(x) + O(x) + 2\lambda \sqrt{\gamma} \sqrt{ \frac{1}{2} - 2\theta(x) + O(x)} \right)}  \right| \\
& \leq   \frac{  | -4(1-2\lambda) + 8\lambda^2 \gamma | |\theta(x) - \lambda| + O(|\theta(x)-\lambda|^2) + O(x) }{\sqrt{\gamma}x \sqrt{ \frac{1}{2} - 2\lambda + O(x^c) + O(x)} \left( 1 - 2\lambda + 2\lambda \sqrt{\gamma} \sqrt{ \frac{1}{2} - 2\lambda} + O(x^c) + O(x) \right)}  \\
& \leq  \frac{  | -4(1-2\lambda) + 8\lambda^2 \gamma | bx^{c-1} }{\sqrt{\gamma} \sqrt{ \frac{1}{2} - 2\lambda } \left( 1 - 2\lambda + 2\lambda \sqrt{\gamma} \sqrt{ \frac{1}{2} - 2\lambda} \right)}  + \frac{O(x^{2c}) + O(x)}{x}  \\
& \leq  \frac{ 2\lambda }{\sqrt{\gamma} \left( \frac{1}{2} - 2\lambda \right)^{\frac{3}{2}}}bx^{c-1} +O(1) + O(x^{2c-1}) .
\end{split}
\end{equation}
In the right-hand side of first inequality, we can choose functions $O$, which are independent of $\theta$, because $\theta(x)$ 
in $D_{\lambda,b,c,d}$ is uniformaly bounded by the constants $\lambda + bl^c$ and $\lambda - bl^c$. 
In the same way, we can choose functions $O$ which are independent of $\theta$.
\newpage
\subsection{equation (\ref{ineq1})}
\begin{equation}
\begin{split}
&|\lambda f(x,\theta_1(x);\lambda) - \lambda f(x,\theta_2(x);\lambda)| \\
&= \frac{1}{x^2\sqrt{\gamma -a(x)x^2 }} \left| \sqrt{g_1} - \sqrt{g_2} - \frac{x\frac{d}{dx}a(x)}{2a(x)} \left( \frac{1-g_1}{\sqrt{g_1}} - \frac{1-g_2}{\sqrt{g_2}}\right) \right| \\
&= \frac{1}{x^2\sqrt{\gamma -a(x)x^2 }} \left| 1 +  \frac{x\frac{d}{dx}a(x)}{2a(x)} \left( 1 + \frac{1}{\sqrt{g_1 g_2}}\right)\right| \left| \sqrt{g_1}-\sqrt{g_2}\right| \\
& \leq  \frac{1}{x^2\sqrt{\gamma -a(x)x^2 }} \left\{ 1+ \left| \frac{x\frac{d}{dx}a(x)}{2a(x)} \right| \left( 1 + \frac{1}{\sqrt{g_1g_2}} \right) \right\}\left( \frac{1}{\sqrt{g_1} + \sqrt{g_2}} \right) |g_1 - g_2| \\
& \leq \frac{|\theta_1(x)-\theta_2(x)|}{x^2\sqrt{\gamma -a(x)x^2 }}\left\{ 1- \frac{x\frac{d}{dx}a(x)}{2a(x)}  \left( 1 + \frac{1}{1-a(x)((\lambda+bx^c)x^2 +1)} \right) \right\}\frac{a(x) \{x^4(\lambda + bx^c) + x^2 \}}{\sqrt{(1-a(x)((\lambda+bx^c)x^2+1)^2}}\\
&= \frac{|\theta_1(x) - \theta_2(x)|}{x^2\sqrt{\gamma}}\left\{ 1 + \frac{x^2}{x^2-4\lambda x^2} \right\} \frac{x^2}{\sqrt{\frac{x^2}{2}-2\lambda x^2}} + B_1(x)x^{\delta - 1}|\theta_1(x) - \theta_2(x)| \\
&= \frac{1-2\lambda}{\sqrt{\gamma(\frac{1}{2}-2\lambda)^3}} x^{-1} |\theta_1(x) - \theta_2(x)| + B_1(x)x^{\delta - 1}|\theta_1(x) - \theta_2(x)|\\
&= \left(\frac{4\lambda}{1-4\lambda}x^{-1}  + B_1(x)x^{\delta - 1} \right) |\theta_1(x) - \theta_2(x)|,
\end{split}
\end{equation}
where $B_1(x)$ is introduced as the text. 
\subsection{equation (\ref{Tineq})}
\begin{equation*}
\begin{split}
|T_\lambda(\theta)(x)-&T_\lambda(\theta)(y)| \leq \left(\frac{1}{x^2} -\frac{1}{y^2} \right)\int^x_0 s^2|\lambda f(s,\theta(s);\lambda)| ds +\frac{1}{y^2} \int^y_x s^2|\lambda f(s,\theta(s);\lambda)| ds\\
&\leq \left(\frac{1}{x^2} -\frac{1}{y^2} \right)\int^x_0 s^2 \left\{\frac{ 2\lambda }{\sqrt{\gamma} \left( \frac{1}{2} - 2\lambda \right)^{\frac{3}{2}}}bs^{c-1}  + O(1) + O(s^{2c-1}) \right\}ds\\
&+\frac{1}{y^2}\int^y_x s^2 \left\{\frac{ 2\lambda }{\sqrt{\gamma} \left( \frac{1}{2} - 2\lambda \right)^{\frac{3}{2}}}bs^{c-1}  + O(1) + O(s^{2c-1})\right\}ds
\end{split}
\end{equation*}
\begin{equation}
\begin{split}
&= \left(\frac{1}{x^2} -\frac{1}{y^2} \right)h(x)x^{2+c} + \frac{1}{y^2}(h(y)y^{2+c}-h(x)x^{2+c})\\
&\leq h(x)|x^c-y^c|+y^c|h(y)-h(x)|+\frac{2h(x)}{y^2} |y^{2+c}-x^{2+c}|\\
&< h(x)|x^c-y^c|+y^c|h(y)-h(x)|+2h(x)y^c \left| \frac{y^2-x^2}{y^2} \right|\\
&\leq h(x)|x^c-y^c|+y^c|h(y)-h(x)|+2^{n+1}h(x)y^{c-\frac{1}{2^{n-1}}} \left| y^{\frac{1}{2^{n-1}}}-x^{\frac{1}{2^{n-1}}} \right|.
\end{split}
\end{equation}
\subsection{equation (\ref{evoff})}
\begin{equation}
\begin{split}
&| \lambda_-(\gamma) f(x,\theta(x);\lambda_-(\gamma)) | \leq   \left|\frac{   1 - a(x)(\theta(x) x^2 + 1)^2 - \frac{x}{2}(\theta(x) x^2+1)^2\frac{d}{dx}a(x)    }{x^2\sqrt{\gamma \left( 1-\frac{a(x)x^2}{\gamma} \right) \left( 1-a(x)(\theta(x) x^2+1)^2 \right) }} - \frac{2\lambda_-(\gamma)}{x} \right| \\
&= \left|\frac{   1 - 2\theta(x) + xC_1(x,\theta(x)) - 2\lambda_-(\gamma) \sqrt{\gamma \left( 1-\frac{a(x)x^2}{\gamma} \right)} \sqrt{ \frac{1}{2} - 2\theta(x) + xC_2(x,\theta(x))}   }{x\sqrt{\gamma \left( 1-\frac{a(x)x^2}{\gamma} \right)\left( \frac{1}{2} - 2\theta(x) + xC_2(x,\theta(x)) \right)} }  \right| \\
&\equiv \left|\frac{  C_3(x,\theta(x)) - 2\lambda_-(\gamma) \sqrt{\gamma \left( 1-\frac{a(x)x^2}{\gamma} \right)} \sqrt{ C_4(x,\theta(x))}   }{x\sqrt{\gamma \left( 1-\frac{a(x)x^2}{\gamma} \right)C_4(x,\theta(x))} }  \right| \\
&=  \left|  \frac{  \left( C_3(x,\theta(x)) \right)^2 - 4\lambda_-^2(\gamma) \gamma\left( 1-\frac{a(x)x^2}{\gamma} \right)  C_4(x,\theta(x))  }{x\sqrt{\gamma \left( 1-\frac{a(x)x^2}{\gamma} \right)C_4(x,\theta(x))} \left( C_3(x,\theta(x)) + 2\lambda_-(\gamma) \sqrt{\gamma\left( 1-\frac{a(x)x^2}{\gamma} \right)} \sqrt{ C_4(x,\theta(x))} \right)}  \right| \\
&\leq  
\left|  \frac  {  \left\{ (\theta(x) - \lambda_-(\gamma)) - 1+2\lambda_-(\gamma) -xC_1(x,\theta(x)) + 2 \lambda_-^2(\gamma) \gamma \right\} 4(\theta(x)-\lambda_-(\gamma))  }  {x\sqrt{\gamma \left( 1-\frac{a(x)x^2}{\gamma} \right)C_4(x,\theta(x)) } \left( C_3(x,\theta(x)) + 2\lambda_-(\gamma) \sqrt{\gamma\left( 1-\frac{a(x)x^2}{\gamma} \right)} \sqrt{ C_4(x,\theta(x))} \right)}\right|\\
&+ \left|\frac{(2- 4\lambda_-(\gamma) + C_1(x,\theta(x)))C_1(x,\theta(x)) - 4\lambda_-^2(\gamma) \{ \gamma  C_2(x,\theta(x)) +a(x)x^2 C_4(x,\theta(x)) \} } {\sqrt{\gamma \left( 1-\frac{a(x)x^2}{\gamma} \right)C_4(x,\theta(x))}  \left( C_3(x,\theta(x)) + 2\lambda_-(\gamma) \sqrt{\gamma\left( 1-\frac{a(x)x^2}{\gamma} \right)} \sqrt{C_4(x,\theta(x))} \right)}\right|\\
&=
\frac{F_1(x,\theta(x),\lambda_-(\gamma))|\theta(x)-\lambda_-(\gamma)| +xF_2(x,\theta(x),\lambda_-(\gamma))}{\sqrt{\gamma}xF_3(x,\theta(x),\lambda_-(\gamma))},
\end{split}
\end{equation}
where $C_1 \sim C_4$, $F_1$, $F_2$ and $F_3$ are defined as in the text.

\section{Four dimensional case}

In this appendix, we give an overview of Christodolou's paper \cite{Christodoulou:1984mz} 
which examined the global nakedness of singularity in four dimensional LTB spacetime, and 
see the difference between Christodoulou's and our discussions on the existence of null 
geodesics near the singuality. In the four dimensional case, after change of variables, 
the dimensionless differential equation for future directed null geodesic along the outer 
radial direction is given as
\begin{equation}
\frac{d\hat{\theta}}{d\hat{x}} + \frac{7\hat{\theta}}{\hat{x}}= \frac{7\hat{\lambda}}{\hat{x}} + \hat{\lambda}f_4(\hat{x},\hat{\theta};\hat{\lambda}),
\end{equation}
where, $\hat{\theta}$ and $\hat{x}$ are dimensionless coordinates, which correspond to $\theta$ 
and $x$ defined by (\ref{Dftheta}) and (\ref{Dfx}) respectively, $\hat{\lambda}$ is a certain 
constant and $f_4$ is a $C^\infty$ function. $\hat{\lambda}$ and $f_4$ are also the variables 
that correspond to $\lambda$ and $f$ defined in (\ref{lambda}) and (\ref{nulleq}) in five 
dimensional case, respectively. In order not to contain non-central singularity, $\hat{\theta}$ 
is restricted in the range, $ 0\leq \hat{\theta} < \sigma (\hat{x})$, where $\sigma$ is a certain 
function which satisfies $\sigma(\hat{x}) \geq \frac{\epsilon_4}{\hat{x}}$ for a positive
constant $\epsilon_4$. 

The formal solution to this differential equation is given by
\begin{equation}
\begin{split}
\hat{\theta}(\hat{x}) &= \lambda \left( 1 + \hat{x}\int_0^1 dv v^7 f_4(v\hat{x},\hat{\theta}(v\hat{x});\hat{\lambda})   \right) \\
& \equiv T_{4,\lambda}(\hat{\theta})(\hat{x}).
\end{split}
\end{equation}
Let us define
\begin{equation}
D_{\hat{d},\mu} \equiv \{\hat{\theta} | \hat{\theta} \in C^0[0,\hat{d}], 0 \leq \hat{\theta} \leq \mu\},
\end{equation}
where $\mu$ is a positive real number satisfying $\mu < \sigma(\hat{x})$ for all 
$\hat{x} \in [0,\hat{d}]$. $D_{\hat{d},\mu}$ becomes a subset of Banach space by uniform norm. 

After some discussion on the nature of $T_{4,\lambda}$, as with five dimensional case, 
we can conclude that $T_{4,\lambda}$ maps $D_{\hat{d},\mu}$ into itself for sufficient small $\hat{d}$. 
Furthermore, we obtain
\begin{eqnarray}\label{T4}
\|T_{4,\hat{\lambda}}(\hat{\theta}_1) - T_{4,\hat{\lambda}}(\hat{\theta}_2) \| &=& \sup_{0\leq \hat{x} \leq \hat{d}} \left| \hat{x} \int_0^1 v^7\hat{\lambda}\left\{  f_4(v\hat{x},\hat{\theta}_1(v\hat{x});\hat{\lambda}) -  f_4(v\hat{x},\hat{\theta}_2(v\hat{x});\hat{\lambda}) \right\}dv \right| \nonumber\\
&\leq& \frac{\hat{d}\hat{\lambda}\Delta}{8}\|\hat{\theta}_1-\hat{\theta}_2\|,
\end{eqnarray}
where $\Delta$ is defined as
\begin{equation}
\Delta \equiv  \sup_{0\leq \hat{x} \leq \hat{d}} 
\left\lbrace \sup_{0\leq \hat{\theta} \leq \mu} \left| \frac{\partial f_4}{\partial \hat{\theta}}(\hat{x},\hat{\theta};\hat{\lambda})\right| \right\rbrace.
\end{equation}
$\Delta$ is finite because $f_4$ is a $C^\infty$ function in $[0,\hat{d}] \times [0,\mu]$. 
Here we choose $\hat{d_0}$ so that it satisfies $\hat{d_0} \leq \hat{d}$ and
\begin{equation}
\hat{d_0} < \frac{8}{\hat{\lambda}\Delta},
\end{equation}
then $T_{4,\hat{\lambda}}$ becomes contraction mapping from $D_{\hat{d_0},\mu}$ into itself. Therefore, by the fixed-point 
theorem for contraction mapping \cite{banach1922operations}, we can conclude that $T_{4,\hat{\lambda}}$ has a unique fixed point, 
that is, a null geodesic emanating from the central singularity exists and the singularity is naked.

By contrast, in five dimensional case, what we can do is only to deform the differential equation 
for null geodesic near the central singularity like 
\begin{equation}\label{5geq}
\begin{split}
\frac{d\theta}{dx} + \frac{2\theta}{x}&= \frac{2\lambda}{x} + \frac{2\left( g(\theta;\gamma) + \theta - \lambda\right)}{x} + \lambda f_5(x,\theta;\gamma)\\
&\equiv \frac{2\lambda}{x} + \frac{2\lambda g_5(\theta;\gamma)}{x} + \lambda f_5(x,\theta;\gamma),
\end{split}
\end{equation}
where $g(\theta;\gamma)$ is defined by (\ref{Dg}), $ f_5$ is a function that $xf_5(x,\theta;\gamma)$ converges to $0$ as $x \to 0$ in the region $\theta < \theta_S(x)$, respectively. In four dimensional case, the right-hand side of the differential equation for 
null geodesic has a constant coefficient pole at first order only. However, in five dimensional 
case, the coefficient of the pole of the right-hand side of (\ref{5geq}) is a function with respect 
to $\theta$. Thus, the variable which corresponds to $\Delta$ in (\ref{T4}) is not finite in 
five dimension and we can not directly use the method employed for four dimensional case \cite{Christodoulou:1984mz}.

As above, we can apply the method in \cite{Christodoulou:1984mz} to the case that the geodesic 
equation has a constant coefficient pole at first order only. On the other hand, our method can 
be applied to the more general case that the geodesic equation can be deformed to the expression having 
a general pole at first order.

\end{document}